
\documentclass[a4paper,onecolumn,11pt,accepted=2024-04-24]{quantumarticle}

\pdfoutput=1 

\usepackage{graphicx}
\usepackage{amsmath}
\usepackage{amssymb}
\usepackage{amsfonts}
\usepackage{amsthm}
\usepackage{mathtools}
\usepackage{hyperref}

\usepackage{xcolor}
\usepackage[T1]{fontenc}
\usepackage[english]{babel}
\usepackage{braket}
\usepackage{enumitem}
\usepackage{tikz}
\usepackage{dsfont}
\usepackage{bbm}
\usepackage{physics}
\usepackage{url}

\usepackage{appendix}

\hypersetup{colorlinks=true,urlcolor=[rgb]{0,0,0.5},citecolor=[rgb]{0.5,0,0},linkcolor=[rgb]{0,0,0.4}}

\usepackage[bbgreekl]{mathbbol}
\DeclareSymbolFontAlphabet{\mathbb}{AMSb}
\DeclareSymbolFontAlphabet{\mathbbl}{bbold}
\usepackage{tikz-cd}
\usepackage{forest}
\usepackage[capitalise]{cleveref}

\theoremstyle{plain}

\newtheorem{thm}{Theorem}

\newtheorem{lem}{Lemma}
\newtheorem{cor}{Corollary}
\newtheorem{prop}{Proposition}

\newtheorem{defi}{Definition}
\newtheorem{conj}{Conjecture}

\theoremstyle{remark}
\newtheorem{rem}{Remark}

\definecolor{Green}{HTML}{00AD69}  % "Pantone 3405"
\definecolor{coolblue}{RGB}{0,51,102}
\definecolor{lightblue}{RGB}{102,210,255}
\definecolor{lightpurple}{RGB}{140,30,255}
\definecolor{lightpink}{RGB}{204,0,204}
\definecolor{midblue}{RGB}{0,102,204}
\definecolor{midpink}{RGB}{153,0,153}
\definecolor{darkblue}{RGB}{0,0,153}
\definecolor{cyan}{RGB}{0,204,204}
\definecolor{lightgreen}{RGB}{0,255,128}
\definecolor{midgreen}{RGB}{0,204,0}
\definecolor{midyellow}{RGB}{204,204,0}
\definecolor{darkyellow}{RGB}{153,153,0}
\definecolor{darkpurple}{RGB}{102,0,102}
\definecolor{orange}{RGB}{255,153,51}
\definecolor{darkred}{RGB}{153,0,76}
\definecolor{lightyellow}{RGB}{255,255,153}
\definecolor{lightred}{RGB}{255,153,153}

\newcommand{\bbC}{\mathbb{C}}

\newcommand{\ttLie}{\mathtt{Lie}}
\newcommand{\frakg}{\mathfrak{g}}

\newcommand{\bbR}{\mathbb{R}}

\newcommand{\bbN}{\mathbb{N}}

\newcommand{\calA}{\mathcal{A}}

\newcommand{\fraksu}{\mathfrak{su}}

\newcommand{\wiel}{\mathtt{Wie}\ell}
\newcommand{\1}{\mathds{1}}

%for appendix
\newcommand{\Pol}{\text{Pol}}
\newcommand{\st}{ : }
\newcommand{\T}{\mathbb{T}}
\newcommand{\C}{\mathbb{C}}

\usepackage{comment}
\newif\ifshow
%\showtrue
\showfalse % hide the sections
\ifshow
\includecomment{hide}
\else
\excludecomment{hide} % anything wrapped in a {wrap} environment will be excluded
\fi

\begin{document}

\title{A generic quantum Wielandt's inequality}

\author{Yifan Jia}
\affiliation{Department of Mathematics, Technische Universit\"at M\"unchen, Germany}
\affiliation{Munich Center for Quantum Science and Technology (MCQST), Germany}

\author{{\'A}ngela Capel}
\affiliation{Fachbereich Mathematik, Universit\"at T\"ubingen, Germany}

%\date{\today}

\begin{abstract}\noindent
Quantum Wielandt's inequality gives an optimal upper bound on the minimal length $k$ such that length-$k$ products of elements in a generating system span $M_n(\bbC)$. It is conjectured that  $k$ should be of order $\mathcal{O}(n^2)$ in general. In this paper, we give an overview of how the question has been studied in the literature so far and its relation to a classical question in linear algebra, namely the length of the algebra $M_n(\bbC)$. We provide a generic version of quantum Wielandt's inequality, which gives the optimal length with probability one.  More specifically, we prove based on \cite{KlepSpenko_SweepingWords_2016} that $k$ generically is of order $\Theta(\log n)$, as opposed to the general case, in which the best bound to date is $\mathcal O(n^2 \log n)$.  Our result implies a new bound on the primitivity index of a random quantum channel. Furthermore, we shed new light on a long-standing open problem for Projected Entangled Pair State, by concluding that almost any translation-invariant PEPS (in particular, Matrix Product State) with periodic boundary conditions on a grid with side length of order $\Omega( \log n )$ is the unique ground state of a local Hamiltonian. We observe similar characteristics for matrix Lie algebras and provide numerical results for random Lie-generating systems.
\end{abstract}

\maketitle

\tableofcontents

%\newpage

\section{Introduction}

Over the last twenty years, tensor network representation has been one of the most useful tools from quantum information theory for investigating quantum many-body systems. It has been discovered, due to an area law for entanglement, that the states described by 1D arrays of tensors - namely the Matrix Product States (MPS) - are well-suited for efficiently describing ground states of gapped 1D local Hamiltonians \cite{hastings2006solving}. Conversely, to guarantee that a certain MPS is the unique ground state of a certain gapped local Hamiltonian, the corresponding region of the network should be large enough to ensure ``injectivity'' \cite{perez2006matrix}. To translate this to a mathematical problem, let us fix the matrices that determine the tensor for a translation-invariant MPS with periodic boundary conditions. Then the chain of sites is called injective when the number of sites $L$ is large enough such that the length-$L$ products of the matrices span the whole matrix algebra, i.e. 
\begin{equation}\label{def:Wie-generating system}
   M_n(\mathbb{C}) = \text{span}\left\lbrace A_1 \ldots A_L \, |\, A_i\in S \text{ for all } i\in[L] \right\rbrace   
\end{equation}
for $S$ consisting of the $g$ site-independent $n$-by-$n$ matrices. Note that $g$ refers to the dimension of the space associated with each site and $n$ is the bond dimension. 

This is the first motivation for introducing a mathematical problem named \textit{quantum Wielandt's inequality}. If a subset $S$ satisfies Eq. (\ref{def:Wie-generating system}) for some well-chosen $L\in\bbN$, then the set is called a \textit{(Wie-)generating system} of $M_n(\bbC)$. In this case, we refer to the \textit{Wie-length} of a (Wie-)generating system $S$ as the minimal $L$ satisfying Eq. (\ref{def:Wie-generating system}), denoted by $\wiel(S)$.
\begin{hide}
    \begin{equation}
    \wiel(S):=\min \{L| M_n(\mathbb{C}) = \text{span}\left\lbrace A_1 \ldots A_L \, ,A_i\in S  \right\rbrace\}.
\end{equation}
\end{hide}
It is clear from its definition that Wie-length is finite when it is well-defined. Furthermore, it is reasonable to quantify the minimum, because it holds for any $m\geq \wiel(S)$ that all products of matrices of $S$ with length $m$ span the whole matrix algebra (cf. \cite{perez2010characterizing}). 

In fact, the name of Wielandt's inequality arises from a classical information-theoretic problem \cite{wielandt1950unzerlegbare}, namely how many repetitions are required for a certain memory-less channel so that any probability distribution after transmission has non-zero probability for all events. Translated to the quantum case, similarly, one can explore the minimal number of repetitions required for a certain quantum channel to transmit any density operator to a density operator with only positive eigenvalues \cite{sanz2010quantum}. Using the Kraus representation of the quantum channel, this problem in the quantum version can be reformulated as calculating the \textit{Wie-length} of the Kraus operators of the channel, so that the channel at that time point would be eventually of full Kraus rank.

The object that we denote here by \textit{Wie-length}, coined in that way from its relation to quantum Wielandt's inequality and the comparison to the \textit{length} of a matrix algebra, has also been considered before in the literature with different names that are more related to its physical applications mentioned before, e.g. \textit{injectivity index} \cite{Cirac_2019} or \textit{injectivity length} \cite{Molnar_2018} of a normal MPS, or combined together with the \textit{primitivity index} of a quantum channel. While the implication of the quantum Wielandt's inequality in the latter two topics is already well-studied in \cite{perez2006matrix} and \cite{sanz2010quantum}, the remaining open question is to find an optimal (uniform) upper bound on the Wie-length in general. This problem has been included as one of the main open problems in \cite{Cirac_2019}.  

More specifically, it has been proved that there exists a function $f(n)$ that only depends on the dimension $n$, such that the function gives an upper bound on the Wie-length of any Wie-generating system, no matter how we choose the subset from the matrix algebra $M_n(\bbC)$. \textit{Quantum Wielandt's inquality} conjectures that the optimal such function would be asymptotically of order $O(n^2)$ \cite{sanz2010quantum}. The first general result in this direction  was that $\wiel(S)$ is of order $O(n^4)$ for any generating system of $M_n(\bbC)$ \cite{sanz2010quantum}. This upper bound was subsequently improved in \cite{michalek2019quantum} to $O(n^2\log n)$, and this is the best general result to date, which still has a $\log n$-overlap compared to the conjectured optimal uniform upper bound.

A related definition in linear algebra is namely the \textit{length} of 
finite-dimensional algebra. In this problem, we quantify the length of a generating system $S\subset M_n(\bbC)$ as the minimal $\ell\in\bbN$ such that the products with length less than or equal to $\ell$ contain a basis for the matrix algebra $M_n(\bbC)$. The difference here with respect to $\wiel$ is that we do not require all the basis elements to be of the same length. There exists also an unproven long-standing conjecture here about the optimal uniform upper bound, which is expected to be of order $ O (n)$ \cite{Paz_Conjecture_1984}. The best proven result till the date has a $\log n$-overlap, namely it is of order $O(n\log n)$ \cite{shitov2019improved}.

Apart from the worst-case scenarios, both problems above can be studied in the \textit{generic case} (i.e. with probability $1$). It was conjectured in \cite{perez2006matrix}, and also can be checked numerically, that for a randomly chosen generating system, the Wie-length scales as $\Theta (\log n)$. We should expect that, by increasing the length by $1$, a new word that is linearly independent to the subspace spanned by the previously existing products is generated, as long as the dimension does not saturate.

In this paper, we first review the problems of estimating the length and the Wie-length of matrix algebras, namely the so-called \textit{quantum Wielandt's inequality} and \textit{Paz's conjecture}, and we discuss the relationship between them. Moreover, we provide a complete proof of the generic version of quantum Wielandt's inequality for any dimension $n$ with no restriction on the cardinality of the generating system by adapting an existing proof for the length of a generic generating system in \cite{KlepSpenko_SweepingWords_2016}. This result is written in a different language, so here we adapt it to a setting closer to quantum information theory. We notice that the order $O(\log n)$ and the constant factor in the new proof are both optimal, which improves the best generic bound proven so far in \cite{cadarso2013wielandt}. 

%\begin{specthm}[Generic Quantum Wielandt's inequality]
%$\wiel(S)= O (\log n)$ for almost all (Wie-)generating systems $S\subseteq M_n(\bbC)$.
%\end{specthm} 

As a direct consequence, we indicate how this optimal generic bound applies to MPS and quantum channels and improves results of relevance in quantum information theory. In particular, we show that an MPS of length $\Omega(\log n)$ is almost surely the unique ground state of a local Hamiltonian, and that the index of primitivity of a generic quantum channel is of order $\Theta(\log n)$. A detailed proof of the latter result is included in an appendix written by Guillaume Aubrun and Jing Bai. 

Moreover, we generalize the result for MPS to higher dimensions and obtain interesting consequences in the context of injectivity of PEPS, namely that almost all translation invariant Projected Entangled Pair States (PEPS) with periodic boundary conditions on a grid with side length of $\Omega(\log n)$ are the unique ground state of a local Hamiltonian, irrespective of the dimension of the grid. This gives new insight into the generic case of some of the most relevant open problems for PEPS, namely Questions 4 and 5 in \cite{Cirac_2019}.

In the last part of our paper, we generalize the questions posed above to the context of finite-dimensional Lie algebras. We introduce a conjecture for the upper bound on the length of a generating system for some typical   matrix Lie algebra in the generic case and include numerical simulation to support our conjecture. Moreover, we indicate how this generic result for Lie algebras would apply to mathematical models of some physical problems, especially also to gate implementation of quantum computers.

\section{The Length of a Matrix Algebra}

In this section, we will present a short overview on a well-studied problem in linear algebra concerning the length of a generating system to generate a matrix algebra, and the conjecture on the optimal length, known as \textit{Paz's conjecture}.  

One natural way to construct a basis of a finite-dimensional algebra from a subset is to build up products (\textit{words}) from the elementary elements (\textit{letters}) in the subset (\textit{alphabet}). The first question that arises from this setting is whether the `alphabet' is good enough so that a basis is achievable, which results in the definition of \emph{generating system}. This can be determined by Burnside's theorem \cite{lomonosov2004simplest} for matrix algebras. Furthermore, the \emph{length of the generating system} is defined as the minimum required length for the longest word in the basis. The latter question connects naturally to the complexity of some algorithms as it counts the number of multiplications performed. 

Moreover, taking the maximum over all generating systems gives the \emph{length of the algebra}, which is thought to constitute a numerical characteristic for verifying finite algebras \cite{al2000reducibility,al2003unitary}. However, even for matrix algebra $M_n(\mathbb{C})$, a complete proof for a general optimal estimate of the length remains unknown.

Before presenting an overview of the results of the literature related to the aforementioned conjecture on the length of a matrix algebra, let us formally introduce all the notions mentioned above, in the framework of a complex matrix algebra $M_n(\bbC)$. 

 \begin{defi}\label{def:length}
 Let $S$ be a finite subset of $M_n(\bbC)$. We say that $S$ is a \emph{generating system} (GS) if there exists $\ell \in \mathbb{N}$ such that  
 \begin{equation*}
     M_n(\bbC)=span \left\{A_1A_2\dots  A_\ell \, | \, A_i\in\{S,\mathds{1}\} \text{ for all }i\in[\ell] \right\} \, .
 \end{equation*}
  The minimum $\ell \in \mathbb{N}$ for which this holds is called the \emph{length} of $S$ and we denote it by $\ell(S)$.   

  The \emph{length} of the algebra $M_n(\bbC)$ is the maximal length of all generating systems of $M_n(\bbC)$:
     \begin{equation*}
         \ell(M_n(\bbC))=\max_{S: \ell(S)< \infty}\ell(S).
     \end{equation*}
     The maximum exists since $\ell\in\mathbb{N}$ and $\ell (S)\leq \dim(\calA)$ for all generating systems.
 \end{defi}

\subsection{Overview on Paz's Conjecture}\label{subsec:Paz}

In 1984, Paz conjectured that the required length to generate the whole matrix algebra $M_n(\bbC)$, in the worst case scenario, should be $2n-2$  \cite{Paz_Conjecture_1984}. 

\begin{conj}[Paz's Conjecture]\label{conj:Paz}
  $\ell(M_n(\mathbb{C}))= 2n-2$. 
\end{conj}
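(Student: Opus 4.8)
\medskip
\noindent\textbf{A proof proposal.}
Let me say at once that \cref{conj:Paz} appears to still be open, so what follows is a strategy rather than a proof I can carry out. The statement splits into a lower and an upper bound. The lower bound $\ell(M_n(\bbC))\ge 2n-2$ is elementary and classical: one writes down an explicit generating system — for instance $S=\{J,E_{n,1}\}$, where $J=\sum_{i=1}^{n-1}E_{i,i+1}$ is the single nilpotent Jordan block and $E_{n,1}$ is a corner matrix unit — and observes that every word in $S$ evaluates to $0$, to a power $J^{k}$, or to a scalar multiple of $E_{n-a,\,1+b}=J^{a}E_{n,1}J^{b}$; from this one reads off that the unit $E_{1,n-1}$ (equivalently $E_{2,n}$) is attained by no word of length below $2n-2$, while all units are attained by words of length at most $2n-2$. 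Hence the whole content of the conjecture is the upper bound $\ell(M_n(\bbC))\le 2n-2$, and I would attack it as follows.

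First, fix an arbitrary generating system $S$ and let $\mathcal W_k\subseteq M_n(\bbC)$ be the span of all words in $S\cup\{\1\}$ of length at most $k$, so that $\bbC\1=\mathcal W_0\subseteq\mathcal W_1\subseteq\cdots$ and $\mathcal W_{k+1}=\mathcal W_k+\sum_{A\in S}A\,\mathcal W_k$. The basic observation is that if $\mathcal W_{k+1}=\mathcal W_k$ then $\mathcal W_k$ is a left ideal containing $\1$, hence all of $M_n(\bbC)$; so $\dim\mathcal W_k$ strictly increases until it saturates, which already gives the crude bound $\ell(S)\le n^{2}-1$. The real work is to replace this dimension count by a sharper one, and the established route is to feed in the Cayley--Hamilton theorem: for a single matrix $A$ the powers $\1,A,\dots,A^{n-1}$ already span $\bbC[A]$, and one wants to leverage such algebraic relations to show that, once words of moderate length span a sufficiently rich subalgebra, all longer words become redundant. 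Carrying this through with careful combinatorial bookkeeping is exactly what yields the known uniform bounds $\ell(M_n(\bbC))=O(n^{3/2})$ (Pappacena) and then $O(n\log n)$ \cite{shitov2019improved}; the plan would be to push that bookkeeping down into the linear regime.

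The main obstacle — and the reason \cref{conj:Paz} is still a conjecture — is that these dimension-counting arguments are lossy by a $\log n$ (historically $\sqrt n$) factor, and closing the gap seems to demand a genuinely structural input: some rigidity statement asserting that a generating system whose length is near the conjectured maximum must, after a change of basis and removal of redundant generators, take a very restricted form resembling the $\{J,E_{n,1}\}$ example above. No such classification is known. This is consistent with the partial results in the literature, where \cref{conj:Paz} is confirmed for small $n$, for commutative $S$, and — most suggestively — whenever $S$ contains a nonderogatory matrix (minimal polynomial equal to characteristic polynomial), in which case a short bootstrap already gives $\ell(S)\le 2n-2$. A natural but apparently hard subgoal is therefore to show that every generating system contains a nonderogatory matrix among its words of length $O(1)$ (or at worst $O(n)$), which would reduce the general case to a solved one.

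Finally, it is worth recording the contrast with the generic statement established in this paper: for a random generating system the dimension of $\mathcal W_k$ jumps by essentially the maximal possible amount at each step until saturation, so no structural obstruction arises and the analogous length is $\Theta(\log n)$. The difficulty in \cref{conj:Paz} lives entirely on the measure-zero set of adversarial systems — and it is precisely there that the bound $2n-2$ is sharp and, as of now, out of reach.
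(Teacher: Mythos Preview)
The paper does not prove \cref{conj:Paz}; it is stated as an open conjecture, with the paper recording only (a) that the bound $2n-2$ is sharp via generating pairs containing a rank-one matrix, and (b) that the best known uniform upper bound is $O(n\log n)$ \cite{shitov2019improved}. So there is no proof in the paper to compare your attempt against, and you correctly flag this at the outset.

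Your treatment is accurate and well aligned with the paper's own overview. The lower-bound example $S=\{J,E_{n,1}\}$ is precisely an instance of the rank-one pairs the paper cites, and your analysis is correct: the nonzero words are (up to scalars) the $J^{k}$ and the $J^{a}E_{n,1}J^{b}=E_{n-a,1+b}$, and words of length at most $2n-3$ span a subspace of dimension $n^{2}-1$ (one has $E_{1,n}=J^{n-1}$ and $E_{1,n-1}+E_{2,n}=J^{n-2}$, but neither $E_{1,n-1}$ nor $E_{2,n}$ separately). One small wording point: the obstruction is not that $E_{1,n-1}$ is ``attained by no word'' of length below $2n-2$, but that it is not in the \emph{span} of such words; the distinction matters because $J^{n-2}$ already hits the relevant diagonal.

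On the upper bound, your summary of the Cayley--Hamilton/dimension-increment approach and the nonderogatory reduction matches the discussion in Section~\ref{subsec:Paz}. The ``subgoal'' you isolate --- showing that every generating system produces a nonderogatory word of controlled length --- is a reasonable line of attack, but (as you say) it is not known, and nothing in the paper goes beyond citing \cite{GutermanLaffeyMarkovaSmigoc_PazConjecture_2018} for the nonderogatory case. In short: your proposal is an honest and correct status report, not a proof, and the paper offers nothing more on this conjecture.
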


The conjectured length cannot be improved in general, since there are examples of generating sets with exact length $2n-2$, for example for generating pairs including one rank-one matrix  \cite{LogstaffRosenthal_LengthsIrreduciblePairs_2011}. For small dimensions, the conjecture was initially proven to hold for $n\leq 5$ in \cite{LongstaffNiemeyerPanaia_LengthDimension5_2006}, and subsequently lifted to $n=6$ in \cite{LambrouLongstaff_LengthDimension6_2009}. 
However, in the low-dimensional case the subwords can still be checked concretely, which is not applicable for higher dimensions. In fact, the only known general upper bounds for the length of the matrix algebra have been for years of order $O(n^2)$  \cite{Paz_Conjecture_1984} and $O(n^{3/2})$ \cite{Pappacena_UpperBound_1997}, respectively. The key ingredient in the proofs are some combinatorial lemmas on words and the Cayley-Hamilton theorem.

The recent \cite{shitov2019improved} improves the technique in \cite{Pappacena_UpperBound_1997} and shows the tightest result to the date, which only includes a logarithmic overhead, namely $O(n\log n)$. However, a general complete proof for a linear upper bound on the length of $M_n(\bbC)$ still remains an open problem.

Nevertheless, there are some more examples, additional to the ones presented above, for which the conjecture is known to hold. Indeed, for arbitrary $n$, the tight upper bound $2n-2$ has been additionally shown to be achieved for generating systems with specific properties, such as having a matrix with distinct eigenvalues \cite{Pappacena_UpperBound_1997},  or more generally for pairs including a non-derogatory matrix \cite{GutermanLaffeyMarkovaSmigoc_PazConjecture_2018}, meaning that the minimal polynomial of the matrix has the degree that is equal to the dimension of the matrix. A great advantage that these properties present is that they can be used to create a basis, which is taken to be the canonical generating system, and thus one only needs to optimize the upper bound on the length of such generating systems. Note that this kind of property appears for almost all generating systems. However, it provides information on the opposite direction to the previous goal of finding the length of the \textit{worst-case scenario}. Indeed, as one can readily verify from numerical experiments, the generic case does not require a bound of $2n-2$ for the length, but rather only a logarithmic dependence of $n$. This is supported with an analytical proof derived in \cite{KlepSpenko_SweepingWords_2016},  where an upper bound of order $\Theta(\log n)$ on the length of a generic generating system has been verified with arguments based on constructions by graphs.

\section{Quantum Wielandt's Inequality}

For many applications in several contexts, it is more relevant to discuss another notion of length, which we coin \textit{Wie-length} after the problem that motivates it: Quantum Wielandt's inequality. The only difference with the latter length of an algebra is that, for the Wie-length, we require all words in the basis to be of same length.

In this section, we will present the definition for Wie-length and discuss a few of its properties, such as the fact that the Wie-length of a generating system stabilizes (i.e. if words of length $l$ generate the matrix algebra, then words of length $m\geq l$ also do), and the equivalence between the notions of generating system and Wie-generating system (which in turn allows us to conveniently exchange them throughout the text). As a consequence, we will show that Paz's conjecture implies a result in the direction of Wielandt's conjecture, albeit with a worse order for the length.  

Let us conclude the introduction to this section by formally introducing the notion of Wie-length mentioned above.

\begin{defi}\label{def:wielength}
     Let $S$ be a finite subset of $M_n(\bbC)$. We say that $S$ is a \emph{Wie-generating system} if there exists $k \in \mathbb{N}$ with: 
 \begin{equation*}
     M_n(\bbC)=span \left\{A_1A_2\dots  A_k \, | \, A_i\in S \text{ for all }i\in[k] \right\} \, ,
 \end{equation*}
  and call the minimum such $k$ the \emph{Wie-length} of $S$, which we denote by $\wiel(S)$.
  \end{defi}

\subsection{Overview on Quantum Wielandt's Inequality}\label{subsec:Wielandt}

Similarly to the case of the length of an algebra, it is a natural, interesting question whether there exists a uniform upper bound on the Wie-length, which only depends on the dimension $n$ but not on the choice of a generating system for $M_n(\bbC)$. It was conjectured in \cite{perez2006matrix} that, in the worst-case scenario, it should be of order $O(n^2)$.

\begin{conj}[Quantum Wielandt's inequality]\label{conj:Wielandt}
  $\wiel(S)= O (n^2)$ for all (Wie-)generating systems $S\subseteq M_n(\bbC)$.
\end{conj}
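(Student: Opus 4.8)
The plan is to bound the Wie-length through the ordinary length $\ell(S)$, for which the Paz-type machinery recalled in \cref{subsec:Paz} is available, and then to pay a controlled ``homogenization'' overhead converting a basis of words of mixed lengths into one of words of a common length. Write $W_k:=\operatorname{span}\{A_1\cdots A_k\,:\,A_i\in S\}$, so that $\wiel(S)$ is the least $k$ with $W_k=M_n(\bbC)$. The first step is a stabilization lemma: if $S$ is a Wie-generating system then $W_{k_0}=M_n(\bbC)$ forces $W_k=M_n(\bbC)$ for all $k\ge k_0$. Indeed $W_{k+1}\supseteq W_k\cdot W_1$, and once $W_k=M_n(\bbC)$ this equals $M_n(\bbC)\cdot W_1=M_n(\bbC)$, using that $\sum_{A\in S}\operatorname{rowsp}(A)=\bbC^n$ --- which itself must hold, since otherwise every word would sit in the proper left ideal of matrices with row space contained in that subspace. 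Thus the entire problem reduces to bounding the \emph{transient} of the (a priori non-monotone) sequence $(W_k)_k$ before it reaches $M_n(\bbC)$.

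Next I would prove a homogenization inequality relating $\wiel(S)$ to $\ell(S)$. Starting from a basis of words $w_1,\dots,w_{n^2}$ of lengths $d_1\le\cdots\le d_{n^2}\le\ell(S)$ witnessing the length, one brings them to a common length $k$ by exploiting $W_j\cdot W_{k'}\subseteq W_{j+k'}$ together with the stabilization above, applied after decomposing $\bbN$ into residue classes modulo the period $p\le n$ of the multiplicative semigroup generated by $S$; within each residue class $k\mapsto\dim W_k$ is nondecreasing, hence saturates. Carried out carefully this yields $\wiel(S)=O\!\left(n\cdot\ell(S)\right)$, so that Paz's conjectured bound $\ell(M_n(\bbC))=2n-2$ would give $\wiel(S)=O(n^2)$, whereas feeding in only the unconditional $\ell(M_n(\bbC))=O(n\log n)$ of \cite{shitov2019improved} reproduces $\wiel(S)=O(n^2\log n)$, the bound of \cite{michalek2019quantum}.

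The main obstacle --- and the reason the statement is (currently) a conjecture rather than a theorem --- lies in that same transient: $k\mapsto\dim W_k$ is genuinely non-monotone, and the periodicity decomposition only restores monotonicity inside each of the up to $\sim n$ residue classes, so a naive union over classes costs an extra factor $n$ which, even after the careful accounting of \cite{michalek2019quantum}, is only shaved down to a $\log n$. Closing the gap to the optimal $\Theta(\log n)$-free bound $O(n^2)$ appears to require either (i) proving Paz's conjecture, i.e.\ an unconditional \emph{linear} bound on $\ell(M_n(\bbC))$, or (ii) a direct Cayley--Hamilton/Shemesh-type estimate bounding the transient of $(W_k)_k$ by the degrees of minimal polynomials of elements and short products of $S$, bypassing mixed-length words entirely. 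I would expect the semigroup set-up, the stabilization lemma, and the homogenization step to be essentially routine, with all the genuine difficulty concentrated in this last transient estimate.
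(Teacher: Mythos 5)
First, a meta-point: the statement you are asked to prove is labelled a \emph{conjecture} in the paper, and the paper does not prove it; it remains open, with the best unconditional bound being $O(n^2\log n)$ from \cite{michalek2019quantum}. So there is no ``paper's own proof'' for you to have matched. What the paper does provide on this front is Section \ref{subsec:Paz_implies_Wielandt}: a proven comparison \cref{lem:paz_implies_wielandt} stating $\ell(S)\leq\wiel(S)\leq(n^2+n)\,\ell(S)$, and a further \emph{open} Conjecture asserting the stronger $\wiel(S)=O\bigl(n\,\ell(S)\bigr)$, which, combined with Paz's conjecture, would imply Wielandt's. Your high-level strategy --- bound $\wiel$ through $\ell$ and pay a homogenization overhead --- is therefore exactly the reduction the paper sketches. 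Your stabilization lemma and its row-space argument are also essentially the paper's \cref{lemma:wielandt_length_stabilizes}.

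The gap is that you present the homogenization step $\wiel(S)=O\bigl(n\,\ell(S)\bigr)$ as something that ``carried out carefully'' follows from a periodicity decomposition, when in fact it is precisely the paper's open Conjecture~4, and the decomposition you invoke does not deliver it. Your claim that $k\mapsto\dim W_k$ is nondecreasing within residue classes mod $p$ would require something like $\mathds{1}\in W_p$ (so that $W_k\subseteq W_k\cdot W_p\subseteq W_{k+p}$), but nothing you have established guarantees such a $p$ of size $\leq n$: for a generic $S$ no short word is the identity or even proportional to it, and the ``period of the multiplicative semigroup generated by $S$'' is not a well-defined quantity of controlled size before the sequence has already saturated, which is circular. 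The paper's actual proof of \cref{lem:paz_implies_wielandt} does not use periodicity at all; it invokes Lemma~3.4 of \cite{michalek2019quantum}, building rank-one matrices out of a non-nilpotent word of length at most $\ell(S)$, and this only yields the factor $n^2+n$, not $n$. Consequently your final accounting is also off: via the proven comparison, Paz's conjecture gives only $\wiel(S)=O(n^3)$, and \cite{shitov2019improved} gives only $O(n^3\log n)$; the $O(n^2\log n)$ of \cite{michalek2019quantum} is \emph{not} recovered by composing $\ell$-bounds with a homogenization factor, but is obtained by a direct argument in the Wielandt setting. To make your route rigorous you would have to prove the $O(n\,\ell(S))$ homogenization inequality, which is an open problem in its own right, before even addressing Paz's conjecture.
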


The conjectured length cannot be improved in order, also due to the existence of generating pairs with a rank-one matrix, e.g. by the pair of matrices $A=\sum_{i=1}^n\ket{i+1}\bra{i}$ and $B=\ket{2}\bra{n}$. The first general result was that $\wiel(S)$ is of order $O(n^4)$ for all generating systems of $M_n(\bbC)$, and it followed from the existence of some words of length $O(n^2)$ with non-zero trace \cite{sanz2010quantum}. This upper bound was subsequently improved in \cite{michalek2019quantum} to $O(n^2\log n)$ with a similar technique as in \cite{shitov2019improved}, and this is the best general result to date. However, there is still a $\log n$-overlap compared to the conjectured optimal upper bound \cite{sanz2010quantum}.

Let us now further delve into the motivation for the study of this problem. Firstly, note that the name arises from its interpretation as a quantum version of Wielandt's inequality. We recall that Wielandt's inequality \cite{wielandt1950unzerlegbare} for classical channels quantifies the number of repetitions $k$ of a memory-less channel, denoted by a stochastic $n\times n$ matrix $W$, such that any probability distribution after transmission has non-zero probability for all events, i.e. all entries of $W^k$ are strictly positive. $W$ is primitive as long as the $k$ exists, and such a minimal $k$ is denoted as the \textit{index of primitivity} $p(W)$. Wielandt's inequality states that $p(W)$ is of order $O(n^2)$, as long as $W$ is primitive. In the quantum case mentioned above \cite{sanz2010quantum}, quantum channels are mathematically CPTP (complete positive and trace-preserving) maps, which can be expressed in terms of their Kraus operators, and thus in the quantum variant we are looking at the (Wie-)length of the Kraus operators to see after how many repetitions the quantum channel becomes strictly positive. 

There is a naturally related question, posed in \cite{rahaman2020wielandt}, where the authors looked more generally at positive (but not necessarily completely positive) maps, and proved that as long as a primitive positive map $\mathcal{E}$ acting on $M_n(\bbC)$ satisfies the \textit{Schwarz inequality}, the map becomes strictly positive in $O(n^2)$ iterations, meaning that they will send any positive-semidefinite matrix to a positive definite matrix in $O(n^2)$ steps. As a corollary, this bound gives a tighter upper bound on unital PPT and entanglement breaking channels than in \cite{perez2006matrix}.

\subsection{Relationship between Length and Wie-length}\label{subsec:Paz_implies_Wielandt}

Let us consider a finite subset $S$ of $M_n(\bbC) $. In this section, we compare the length  of $S$ (i.e. minimum length $l$ for words in elements of $S$ such that all words with length \underline{until} $l$ span  $M_n(\bbC) $) with its Wie-length (i.e. minimum length $l$ for words in elements of $S$ such that all words with length \underline{exactly} $l$ span  $M_n(\bbC) $). Note that the length and Wie-length of $M_n(\bbC) $ is the maximum over the length, resp. Wie-length, of all generating sets $S$. 

It is natural to expect a relation between length and Wie-length of $S$.  However, this is not at all straightforward, as both notions present some fundamental, important differences.  For example, it is clear that the length of $S$ is computed through an increasing chain of sets since the space spanned by words of length at most $k+1$ always contains the one spanned by words of length at most $k$. Moreover, when increasing the length of words by $1$, it increases the dimension of the space spanned at least by one unless the dimension reaches $n^2$.  The situation completely differs when considering $S$ to be a Wie-generating system, since now it is not guaranteed that the space spanned by increasing the length $k$ of words by $1$ is larger than the former one, even for small values of $k$. 

Nevertheless, there is some structure maintained. We first observe that, even though the Wie-length cannot be computed via an increasing chain of sets, the length actually stabilizes once they span the whole matrix algebra. In other words, if words of length $l$ span  $M_n(\bbC) $, then also words of length $m\geq l$ do, for any $m$.  This shows that the Wie-length is not only mathematically well-defined but also useful in computer-scientific problems.

\begin{lem}\label{lemma:wielandt_length_stabilizes} [Corollary of Lemma 5 in \cite{perez2010characterizing}]
   Let $S$ be a Wie-generating system of $M_n(\bbC)$ such that $\mathtt{Wie}\ell(S)= k$ for a certain $k \in \mathbb{N}$. Then, for every $m\geq k$, the following holds
   \begin{align}
     M_n(\bbC)=span \left\{A_1A_2\dots  A_m \, |  \, A_i\in S \text{ for all }i\in[m] \right\} \, .
 \end{align}
\end{lem}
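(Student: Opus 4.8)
The plan is to prove the statement by induction on $m$, with base case $m=k$ supplied by the hypothesis $\wiel(S)=k$, and to reduce the inductive step to a single structural fact about $S$: the elements of $S$ have trivial common kernel, $\bigcap_{A\in S}\ker A=\{0\}$. Throughout write $V_m:=\mathrm{span}\{A_1\cdots A_m : A_i\in S\}$, so that the hypothesis is $V_k=M_n(\bbC)$ and we must show $V_m=M_n(\bbC)$ for all $m\ge k$. (The case $n=1$ is trivial, so assume $n\ge 2$, whence $k\ge 1$.)

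First I would record the \emph{one-letter recursion}. Every word of length $m+1$ factors as $(A_1\cdots A_m)\cdot A_{m+1}$, and since for fixed $A_{m+1}$ the map $X\mapsto XA_{m+1}$ is linear, bilinearity of the matrix product gives $V_{m+1}=\mathrm{span}\{XA : X\in V_m,\ A\in S\}=\sum_{A\in S}V_m A$. In particular, once $V_m=M_n(\bbC)$ we obtain $V_{m+1}=\sum_{A\in S}M_n(\bbC)A$. Here the left ideal $M_n(\bbC)A$ is exactly the set of matrices whose row space is contained in $\mathrm{rowsp}(A)$; summing over $A\in S$, $V_{m+1}$ is precisely the set of matrices all of whose rows lie in $U:=\sum_{A\in S}\mathrm{rowsp}(A)$. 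Hence $V_m=M_n(\bbC)$ forces $V_{m+1}=M_n(\bbC)$ as soon as $U=\bbC^n$.

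Next I would verify $U=\bbC^n$. Under the (non-conjugate) bilinear pairing on $\bbC^n$ one has $\mathrm{rowsp}(A)^\perp=\ker A$, so $U^\perp=\bigcap_{A\in S}\ker A$, and $U=\bbC^n$ is equivalent to $\bigcap_{A\in S}\ker A=\{0\}$. Suppose $v\ne 0$ with $Av=0$ for all $A\in S$. Then every length-$k$ word annihilates $v$, since $A_1\cdots A_k\,v=A_1\cdots A_{k-1}(A_k v)=0$; thus $V_k v=\{0\}$, contradicting $V_k=M_n(\bbC)$ (as $M_n(\bbC)v=\bbC^n\ne\{0\}$ for $n\ge 2$). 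So $U=\bbC^n$.

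Combining these: $U=\bbC^n$ is deduced once and for all from $V_k=M_n(\bbC)$, and then the one-letter recursion gives the implication $V_m=M_n(\bbC)\Rightarrow V_{m+1}=M_n(\bbC)$ for every $m\ge k$; starting the induction at $m=k$ yields $V_m=M_n(\bbC)$ for all $m\ge k$. There is no real obstacle here — the only point to be careful about is the row-space identification of $M_n(\bbC)A$ and of $\sum_{A\in S}M_n(\bbC)A$, which is routine linear algebra; the ``grouping from the right'' (one letter at a time) is what makes the argument clean, since it only ever involves the common kernel of the single letters rather than of longer words. I would also note that this stabilization is entirely compatible with the general non-monotonicity of $m\mapsto\dim V_m$ mentioned above: the content of the lemma is that $M_n(\bbC)$, once attained, is an absorbing value, while strictly below $n^2$ the dimension can genuinely drop, and indeed the downward implication $V_{m+1}=M_n(\bbC)\Rightarrow V_m=M_n(\bbC)$ fails in general.
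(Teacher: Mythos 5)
Your proof is correct. The paper does not give a self-contained argument here: the lemma is stated as a corollary of Lemma~5 in \cite{perez2010characterizing} with no proof supplied, so your contribution is a genuinely different and more elementary route. Your reduction of the inductive step to the common-kernel condition $\bigcap_{A\in S}\ker A=\{0\}$ via the identification $\sum_{A\in S}M_n(\bbC)A=\{Y:\ \text{every row of }Y\text{ lies in }U\}$, $U=\sum_{A\in S}\mathrm{rowsp}(A)$, is sound; each piece (row-space description of $M_n(\bbC)A$, $\mathrm{rowsp}(A)^{\perp}=\ker A$ under the bilinear pairing, $U=\bbC^n\iff U^{\perp}=0$, and the contradiction from $V_kv=\{0\}$) is correct, and the base case and the $n=1$ degeneracy are handled. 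It is worth noting that there is an even shorter argument which avoids row spaces and the kernel entirely: grouping from the right gives $V_{m+1}=\sum_{A\in S}V_mA$, while grouping from the left gives $V_{m+1}=\sum_{A\in S}AV_m$; if $V_m=M_n(\bbC)$ these say that $V_{m+1}$ is simultaneously a left ideal and a right ideal of $M_n(\bbC)$, hence a two-sided ideal, and since $M_n(\bbC)$ is simple and $V_{m+1}\neq\{0\}$ (some $A\in S$ is nonzero, as $V_k=M_n(\bbC)$), it must equal $M_n(\bbC)$. That version trades your explicit linear-algebraic computation for the simplicity of the matrix algebra; your version has the advantage of isolating the concrete structural property of $S$ (trivial common kernel) that drives the stabilization, and of not invoking simplicity. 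Both are valid and more illuminating than the citation the paper gives.
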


Another important relation between these two notions is the equivalence for a generating system to generate $M_n(\bbC)$ in the traditional sense (i.e. with words until a certain length) and in the Wielandt sense (i.e. with words of exactly a certain length).

\begin{lem}\label{lem:paz_implies_wielandt}
 Let $S$ be a finite subset of $M_n(\bbC)$ which generates the matrix algebra $M_n(\bbC)$. It holds that
 \begin{align}
     \ell(S)\leq \mathtt{Wie}\ell(S)\leq (n^2+n)\ell(S)<\infty.
 \end{align}
\end{lem}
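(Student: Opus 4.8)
The inequality $\ell(S)\le\wiel(S)$ is immediate: a length-$k$ product of elements of $S$ is in particular a length-$k$ product of elements of $S\cup\{\mathds 1\}$, so if the latter products span $M_n(\bbC)$ then $\ell(S)\le\wiel(S)$; and $\ell(S)<\infty$ is exactly the hypothesis that $S$ is a generating system. The content is therefore the bound $\wiel(S)\le(n^2+n)\ell(S)$, i.e.\ that a generating system is automatically a Wie-generating system with controlled Wie-length. Throughout set $\ell:=\ell(S)$, let $V_k$ denote the span of the length-$k$ products of elements of $S$ (with $V_0:=\bbC\mathds 1$, the empty product), and $U_k:=V_0+\dots+V_k$, so that $U_\ell=M_n(\bbC)$ and $\ell\le\dim M_n(\bbC)=n^2$.

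The plan is to reduce everything to a combinatorial statement about
\[
  P\ :=\ \{\,p\ge 1:\ \mathds 1\in V_p\,\}.
\]
The inputs are the elementary identity $V_aV_b=V_{a+b}$ (a product of a length-$a$ word and a length-$b$ word is a length-$(a+b)$ word, and conversely) and its two corollaries: $P$ is closed under addition, since $\mathds 1=\mathds 1\cdot\mathds 1\in V_pV_{p'}=V_{p+p'}$; and $\mathds 1\in V_p$ forces $V_d\subseteq V_dV_p=V_{d+p}$ for every $d\ge 0$. Hence, if $P$ contains the $\ell+1$ consecutive integers $m-\ell,\dots,m$, then for each $0\le d\le\ell$ one has $m-d\in P$, so $V_d\subseteq V_m$, and summing gives $V_m\supseteq U_\ell=M_n(\bbC)$, i.e.\ $\wiel(S)\le m$. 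Now if $P$ is nonempty with $\gcd(P)=1$ — say $p,p'\in P$ coprime — then $\langle p,p'\rangle\subseteq P$ contains every integer $\ge(p-1)(p'-1)$ by the Chicken McNugget (Frobenius) bound, so $\{(p-1)(p'-1),\dots,(p-1)(p'-1)+\ell\}\subseteq P$ and we may take $m=(p-1)(p'-1)+\ell$. If in addition $p,p'$ can be taken of size $O(\ell)$ — say $p,p'\le\ell+1$ — then $\wiel(S)\le(p-1)(p'-1)+\ell\le\ell^2+\ell=\ell(\ell+1)$, and since $\ell\le n^2$ this is $\le(n^2+n)\ell=(n^2+n)\ell(S)$, as claimed. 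So the proof comes down to: (i) $P\neq\emptyset$, $\gcd(P)=1$, with two coprime members of size $O(\ell)$; and (ii) the (routine) bookkeeping just described.

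Point (i) is the crux and the place where I expect the real difficulty. From $U_\ell=M_n(\bbC)$ one can only write $\mathds 1$ as a linear combination of products of \emph{various} lengths $\le\ell$; upgrading this to ``$\mathds 1\in V_p$ for a single $p\le\ell+1$, for two coprime such $p$'' is the delicate point. A plausible route is to extract from the generating property a product $w$ of length $q\le\ell$ acting non-derogatorily on $\bbC^n$, use Cayley--Hamilton to express $\mathds 1$ through $w,w^2,\dots,w^n$, and then reconcile lengths; but one has to be careful, because the bare hypothesis that $S$ generates $M_n(\bbC)$ does not by itself force $\gcd(P)=1$: for $S=\{E_{12},E_{21}\}\subseteq M_2(\bbC)$ one has $\mathds 1\in V_p$ only for even $p$, so $P=2\bbZ_{\ge1}$ and $\wiel(S)=\infty$. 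Thus point (i) must exploit an aperiodicity/primitivity feature of the generating systems in play; pinning down exactly which property of ``$S$ generates $M_n(\bbC)$'' excludes the periodic obstruction, and bounding by $O(\ell)$ the two coprime lengths it produces, is the heart of the matter. Everything else — the identities $V_aV_b=V_{a+b}$, the Frobenius estimate, and the implication $V_m\supseteq U_\ell\Rightarrow V_m=M_n(\bbC)$ — is straightforward.
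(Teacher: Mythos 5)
Your proposal takes a genuinely different route from the paper's. The paper gives a two-line sketch: it invokes Lemma~3.4 of \cite{michalek2019quantum} together with the observation that there must exist a non-nilpotent word of length at most $\ell(S)$ (else all products would be traceless) and that the rank of such a word is at most $n$, yielding the $(n^2+n)\ell(S)$ bound. You instead build a self-contained argument around the additively closed set $P=\{p\ge 1 : \mathds 1 \in V_p\}$, the identity $\text{span}(V_aV_b)=V_{a+b}$, and the Frobenius estimate for numerical semigroups. Your reduction is clean and correct: if $P$ contains the $\ell+1$ consecutive integers $m-\ell,\dots,m$ then $V_m\supseteq U_\ell=M_n(\bbC)$, and if $P$ contains two coprime $p,p'\leq\ell+1$ one may take $m=(p-1)(p'-1)+\ell\leq\ell(\ell+1)\leq(n^2+n)\ell$.

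However, the gap you flag at point~(i) is not merely an unfinished step: your own example shows the lemma is \emph{false as stated}. Take $S=\{E_{12},E_{21}\}\subseteq M_2(\bbC)$: it generates $M_2(\bbC)$ with $\ell(S)=2$ (the words $\mathds 1,E_{12},E_{21},E_{12}E_{21}=E_{11}$ already span), yet every length-$k$ product is off-diagonal for odd $k$ and diagonal for even $k$, so no single length saturates and $\wiel(S)=\infty$, contradicting $\wiel(S)<\infty$. The paper's own sketch does not escape this: $E_{11}=E_{12}E_{21}$ \emph{is} a non-nilpotent word of length $2\leq\ell(S)$, so the trace argument is satisfied and yet the conclusion fails. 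Either Lemma~3.4 of \cite{michalek2019quantum} carries an additional hypothesis (that $S$ is already a Wie-generating system, i.e.\ primitivity of the associated completely positive map) that has been silently dropped here, or its application is invalid. You should therefore say outright that the lemma as written needs an extra aperiodicity/primitivity hypothesis, rather than treating the resolution of point~(i) as a technical lacuna in a proof of a true statement; under such a hypothesis your Frobenius approach is a promising and more elementary alternative to the citation-based argument.
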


\begin{proof}
   The first inequality is trivial due to the definition of length and Wie-length. The second inequality is derived directly from Lemma 3.4 in \cite{michalek2019quantum}, which is inspired by Lemma 2 in \cite{sanz2010quantum}. The key idea here is to prove that a space spanned by products of a certain length includes all rank-one matrices. In \cite{michalek2019quantum}, one can constructively give an upper bound on the Wie-length using the rank and length $k$ of an existing non-nilpotent matrix which is a linear combination of some length-$k$ products. Clearly, there has to be a word in the basis that is non-nilpotent with length less than or equal to $\ell(S)$, as otherwise, this generating set would only generate traceless matrices, leading to a contradiction. Thus, one can take the upper bound on the length to be $\ell(S)$ and the upper bound on the rank to be the dimension $n$. This concludes the proof. 
\end{proof}

The previous lemma shows that if $S$ generates $M_n(\bbC)$, then it also Wie-generates it. Note that the converse implication is trivial, as words of a certain length $l$ are, in particular, of length $\leq l$. However, the quantitative relation obtained above between $\ell(S)$ and $\mathtt{Wie}\ell(S)$ is not expected to be optimal. This is due to the $n^2$ factor in front of $\ell (S)$, which would be expected to be of order $\Theta(n)$ instead. To illustrate this, there are available examples in the literature, such as generating pairs with one rank-one matrix \cite{GutermanLaffeyMarkovaSmigoc_PazConjecture_2018}, for which $\ell(S)=\Theta(n)$ and $\mathtt{Wie}\ell(S)=\Theta(n^2)$. Hence, in these cases, the natural conjecture would be:

\begin{conj}
   Let $S$ be a finite generating system of $M_n(\bbC)$. Then
 \begin{align}
     \ell(S)\leq \wiel(S)=O(n\,\ell(S)).
 \end{align}
\end{conj}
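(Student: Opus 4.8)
The plan is to sharpen Lemma~\ref{lem:paz_implies_wielandt} by trading its factor $n^{2}$ for a linear one. Write $\mathcal{A}_{k}$ for the linear span of the length-$k$ products of elements of $S$, and set $\ell:=\ell(S)$. First I would record two facts that cost nothing. (a) As argued inside the proof of Lemma~\ref{lem:paz_implies_wielandt}, since $S$ generates $M_{n}(\bbC)$ some word in a length-$\le\ell$ basis is non-nilpotent; fix such a non-nilpotent word $N$, of length $h\le\ell$. (b) Being generating, $S$ has neither a common kernel nor a common cokernel, so for every $s\ge1$ the length-$s$ words already have full column span and full row span; together with $\sum_{s\le\ell}\mathcal{A}_{s}=M_{n}(\bbC)$ and $\mathcal{A}_{t}\mathcal{A}_{s}\subseteq\mathcal{A}_{t+s}$ this yields $\sum_{s=t}^{t+\ell}\mathcal{A}_{s}=M_{n}(\bbC)$ for \emph{every} $t\ge0$. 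By Lemma~\ref{lemma:wielandt_length_stabilizes} it then suffices to produce one $m=O(n\ell)$ with $\mathcal{A}_{m}=M_{n}(\bbC)$ --- equivalently, to collapse one of these windows of $\ell+1$ consecutive lengths to a single point.

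For the construction I would run the three-move pipeline of \cite{sanz2010quantum,michalek2019quantum}, but keeping the length bookkeeping linear in $n$. (i) Replace $N$ by $N^{n}$, of length $\le n\ell$, so that it is invertible on $W:=\operatorname{im}N$; then $\mathcal{E}:=N^{n}M_{n}(\bbC)N^{n}$ is a subalgebra of $M_{n}(\bbC)$ isomorphic to $M_{r}$, $r:=\dim W\le n$, and $N^{2n}\in\mathcal{A}_{2nh}$ is invertible inside $\mathcal{E}$. (ii) Working inside $\mathcal{E}\cong M_{r}$, descend to a rank-one matrix and then assemble a full system of matrix units, each obtained as a combination of words whose lengths fill a window $\{2nh,\dots,2nh+O(\ell)\}$, and collapse that window to one length by left/right multiplication with suitable words while absorbing the residual length defect into powers of $N^{2n}$. (iii) Conjugate the resulting matrix-unit system of $M_{r}\cong\mathcal{E}$ out to all of $M_{n}(\bbC)$ using generating words of length $\le\ell$, again equalising lengths by the same absorption trick. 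Carried through, this delivers $\mathcal{A}_{m}=M_{n}(\bbC)$ for a single $m=O(n\ell)$, hence $\wiel(S)=O(n\,\ell(S))$.

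The genuine obstacle is the length equalisation in (ii)--(iii): one may not pad with $\1$ in the Wie-setting, and the only tools for reconciling words of unequal length are Lemma~\ref{lemma:wielandt_length_stabilizes}, which raises lengths uniformly but never shrinks a window, and insertion of powers of a fixed invertible word, which only reaches an arithmetic progression of lengths of common difference $\ge h$. Two words whose lengths are incongruent modulo $h$ thus require a detour, and in \cite{michalek2019quantum} one detour is paid at each of up to $n$ rank-reduction steps, which is exactly where the surplus factor $n$ enters. To reach $O(n\ell)$ one must either do the reduction in $O(1)$ coarse steps --- e.g.\ pass once to $M_{r}\cong\mathcal{E}$ and induct on the dimension, the delicate point being to bound the length of the compressed generating system $\{\,N^{n}wN^{n}:|w|\le\ell\,\}$ of $M_{r}$ --- or show that the $n$ detour costs telescope to $O(\ell)$. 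I would attempt the inductive route, targeting a bound of the form $f(n)\le f(r)+O(n)$ for $f(n):=\sup_{S}\wiel(S)/\ell(S)$, with that compressed-system length estimate as the main technical hurdle; the pair $A=\sum_{i}\ket{i+1}\bra{i}$, $B=\ket{2}\bra{n}$ --- where producing one matrix unit forces about $n$ copies of $B$, each costing about $n$ in length --- confirms both that $O(n\ell)$ is the right order and that no cheaper accounting can work.
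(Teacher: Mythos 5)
This statement is posed in the paper as an open \emph{conjecture}, not a theorem: the paper offers no proof, only motivation --- namely Lemma~\ref{lem:paz_implies_wielandt}, which gives the weaker $(n^2+n)\,\ell(S)$ bound, and the existence of generating pairs containing a rank-one matrix for which $\ell(S)=\Theta(n)$ while $\wiel(S)=\Theta(n^2)$, so the factor $n$ cannot be lowered. There is therefore no proof in the paper against which to compare your attempt.

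On its own terms, your preliminary reductions are sound: since $S$ has no common kernel or cokernel, a short induction shows the length-$s$ words have full column and row span for every $s$, hence $\mathcal{A}_t\cdot M_n(\bbC)=M_n(\bbC)$ for every $t$, and so $\sum_{s=t}^{t+\ell}\mathcal{A}_s=M_n(\bbC)$ for every $t$; by Lemma~\ref{lemma:wielandt_length_stabilizes} it is indeed enough to exhibit a single $m=O(n\ell)$ with $\mathcal{A}_m=M_n(\bbC)$. But the heart of the argument --- collapsing one such window to a single length --- is not carried out, and you say so yourself: you flag the bound on the length of the compressed generating system as the ``main technical hurdle'' and leave it open. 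That hurdle is where the difficulty actually lives, and the proposed recursion has a structural problem beyond the one you name. Compressing to $\tilde S=\{N^n w N^n : |w|\le\ell\}\subset\mathcal{E}\cong M_r$ gives each element of $\tilde S$ an $S$-length of $\Theta(n\ell)$ already; translating a length-$k$ $\tilde S$-word back to $S$-words therefore multiplies the length by $\Theta(n\ell)$, so a recursion of the additive form $f(n)\le f(r)+O(n)$ would require, in addition, a proof that $\wiel(\tilde S)=O(1)$ --- otherwise the recursion compounds multiplicatively and lands at $\Omega(n^2)$ or worse, not $O(n)$. Your sketch is a reasonable diagnosis of where the surplus factor $n$ enters in \cite{michalek2019quantum}, but it is a research program, not a proof; the conjecture remains open after it exactly as it was before.
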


If this conjecture holds true, then Wielandt's conjecture could be immediately seen as a consequence of Paz's conjecture. This would provide an incredibly relevant connection between this classical problem in the field of linear algebra, and the more recent one that arises from applications to physics and information theory.

The opposite direction, namely a possible derivation of Paz's conjecture from Wielandt's conjecture, is not expected to hold. Imagine that there exists a general method to construct, from words of length $O(n^2)$ spanning $M_n(\bbC)$, a set of words of smaller length, bounded by  $O(n)$, so that they span $M_n(\bbC)$ as well. Then, this would in particular apply as well to the generic case. However, as we will show later in the text, in that case we have that $\Theta(\log n)$ is optimal for both length and Wie-length of generic generating systems. Therefore, this is a contradiction to the aforementioned ``reduction of length'' argument. This suggests that a possible approach to prove Paz's conjecture from a positive resolution of Wielandt's conjecture should separate the generic case and the worst-case scenario, in which we expect the length, resp. Wie-length, to scale with $\Theta(n)$, resp. $\Theta(n^2)$. However, determining whether each given generating system is in either of the two classes might be an even harder problem than solving the conjectures independently.

\section{A Generic Quantum Wielandt's Inequality}

As pointed out in \cref{conj:Wielandt}, the optimal general quantum Wielandt's inequality from the previous section, it is generally believed that the Wie-length of $M_n(\bbC)$ should be of order $O(n^2)$ for \textit{every} generating system. However, another relevant perspective to study related to that problem is what happens \textit{generically}, rather than in the worst-case scenario. More specifically, in this section we focus on a  \textit{probabilistic} study of the Wie-length, namely the Wie-length of a randomly chosen matrix subset of $M_n(\mathbb{C})$, with fixed cardinality, with respect to the Lebesgue measure. 

A simple counting argument shows that the Wie-length of any generating system $S$ scales as $\Omega(\log n)$, as we need at least $g^{ \wiel(S)} \geq n^2 $ words to generate $M_n(\bbC)$, with $g$ the cardinality of $S$. In the converse direction, sampling randomly from $M_n(\bbC)$, we observe that a computer ``almost always'' outputs a Wie-length of order $\Theta(\log n)$, i.e. the different words of length $\Theta(\log n)$ created from the generators are linearly independent in $M_n(\bbC)$ almost surely. It turns out that generating systems with longer lengths form a measure-zero set - it actually ``almost never'' happens for random matrices.  

It was first conjectured in \cite{perez2006matrix} that for a randomly chosen generating system, the Wie-length scales as $\Theta (\log n)$ and it almost never outputs the worst-case scenario. In \cite{cadarso2013wielandt}, the authors observed that the generating systems that do not span the whole matrix algebra with a certain length of matrix products form a projective algebraic subvariety of $\mathbb{C}^n\otimes \mathbb{C}^n\otimes\mathbb{C}^g$. Therefore, due to its irreducibility, it would be enough to find a single generating system spanning the whole matrix algebra with a certain length of matrix products to prove that this property can only fail with an exception of a zero measure set. Analytically, the bound was proven asymptotically for generic $S$ with more than 16 elements and large enough $n$ in \cite{cadarso2013wielandt} using quantum expanders \cite{Hastings2007quantumexpanders}. 

Nevertheless, an analytical, general proof of this numerical behavior was still missing. By carefully studying both the two previously mentioned directions, namely that of Paz's conjecture and Wielandt's conjecture, as well as by finding a tight relation between the notions of length and Wie-length, we have discovered that, in fact, \cite{KlepSpenko_SweepingWords_2016} already provides, in a different field, an almost complete proof of this feature for the generic case (up to a change of language regarding the objects involved). We rewrite that result in our setting below and call it the \textit{generic quantum Wielandt's inequality}. It consists of a bound $\Theta(\log n)$ for generic (Wie-)generating systems, and we include a sketch of the proof in our setting for completeness, leaving the most technical part to be consulted by the interested reader in \cite{KlepSpenko_SweepingWords_2016}.

\begin{thm}\label{gen.Paz}
  The set of generating systems $S$ of $M_n(\bbC)$ with $g$ elements that satisfy
  \begin{align}
      \mathtt{Wie}\ell(S)> 2\lceil \log_g n \rceil
  \end{align}
  has Lebesgue measure zero in $M_n(\bbC)^g$. In other words, given a generating system $S$ of $M_n(\bbC)$,
  \begin{equation}
       \mathtt{Wie}\ell(S) = O (\log n)
  \end{equation}
  almost surely.
\end{thm}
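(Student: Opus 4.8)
The plan is to establish the result by the standard algebraic-geometry dichotomy already hinted at in the excerpt: the ``bad'' set of generating systems is a proper Zariski-closed subset of an irreducible variety, so to prove it has measure zero it suffices to exhibit a \emph{single} point outside it. Concretely, fix $g$ and $n$, set $k := 2\lceil \log_g n\rceil$, and consider the polynomial map $\Phi_k \colon M_n(\bbC)^g \to \mathrm{End}(M_n(\bbC))$ sending $(A_1,\dots,A_g)$ to the linear operator whose action records all length-$k$ words; the condition $\wiel(S) > k$ is equivalent to $\mathrm{rank}\,\Phi_k(S) < n^2$, i.e. to the simultaneous vanishing of all $n^2 \times n^2$ minors of the matrix whose columns are the $g^k$ words of length $k$. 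This is a finite system of polynomial equations in the entries of $A_1,\dots,A_g$, hence the bad set $Z_k$ is Zariski-closed; since $M_n(\bbC)^g \cong \bbC^{gn^2}$ is irreducible, either $Z_k = M_n(\bbC)^g$ or $Z_k$ is a proper subvariety and therefore Lebesgue-null. So the entire content reduces to producing one explicit tuple $(A_1,\dots,A_g)$ with $\wiel \le k$.

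The main work — and the main obstacle — is the explicit construction, and this is exactly the part I would import from \cite{KlepSpenko_SweepingWords_2016} rather than redo. The strategy there, translated to our language, is combinatorial: one builds a specific generating $g$-tuple (essentially weighted shift / permutation-type matrices, or matrices adapted to a De Bruijn-type graph on the index set $[n]$) so that the length-$k$ words realize, among themselves, a full set of matrix units $\{\ket{i}\bra{j}\}_{i,j\in[n]}$, or at least a spanning set of $M_n(\bbC)$. The point of the logarithmic length is that a De Bruijn graph on an alphabet of size $g$ with words of length $\lceil\log_g n\rceil$ already connects any vertex to any vertex, and doing this ``there and back'' (hence the factor $2$) lets one route from basis vector $\ket{i}$ to $\ket{j}$ through a fixed hub, producing each $\ket{i}\bra{j}$ as a length-$k$ word up to scalars. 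Verifying that these particular words are linearly independent — equivalently, that the relevant minor of $\Phi_k$ does not vanish at this point — is the technical core; I would state it as a lemma citing \cite{KlepSpenko_SweepingWords_2016} and give only the sketch, since the paper explicitly says it leaves ``the most technical part to be consulted by the interested reader.''

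Having the construction, the proof assembles in three short steps. First, record that $Z_k := \{S \in M_n(\bbC)^g : \wiel(S) > k\}$ is the common zero locus of the $n^2\times n^2$ minors of the $n^2 \times g^k$ word-matrix, hence Zariski-closed in $M_n(\bbC)^g$; note this already uses \cref{lemma:wielandt_length_stabilizes} implicitly, since ``$\wiel(S)>k$'' is a clean condition only because lengths stabilize once spanning occurs, so $Z_k$ genuinely captures ``never spans with words of length exactly $k$.'' Second, invoke irreducibility of $\bbC^{gn^2}$: a proper Zariski-closed subset has Lebesgue measure zero. Third, apply the construction to conclude $Z_k \ne M_n(\bbC)^g$, hence $Z_k$ is proper and null, giving $\wiel(S) \le 2\lceil \log_g n\rceil = O(\log n)$ almost surely; combined with the counting lower bound $g^{\wiel(S)} \ge n^2$ already noted in the text, this pins the generic value at $\Theta(\log n)$. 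The only delicate point worth flagging explicitly is that the construction must be carried out for every $g \ge 2$ (the bound is vacuous for $g$ too small only in the sense that $\lceil\log_g n\rceil$ grows, but the argument is uniform), and that one should check the constructed tuple actually lies in $M_n(\bbC)^g$ as a genuine generating system — both of which are handled in \cite{KlepSpenko_SweepingWords_2016}.
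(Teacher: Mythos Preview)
Your proposal is correct and follows essentially the same architecture as the paper's proof: identify the bad set as Zariski-closed in the irreducible space $M_n(\bbC)^g \cong \bbC^{gn^2}$, reduce to exhibiting a single good point, and import that point from \cite{KlepSpenko_SweepingWords_2016}. One clarification on the imported step: the Klep--\v{S}penko argument, as the paper describes it, does not construct a specific tuple of De Bruijn-type matrices whose words realize the matrix units; rather, it fixes a clever choice of $n^2$ \emph{words} of length $2\lceil\log_g n\rceil$ and shows that the determinant of the resulting $n^2\times n^2$ matrix, viewed as a polynomial in the entries of the $A_i$, contains a monomial that occurs exactly once (proved by translating the question into a unique decomposition of the edges of an associated directed graph into paths), hence is not identically zero. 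Both routes establish nonvanishing of the relevant minor, so your framework is sound, but your heuristic description of the construction is not quite what the cited reference does.
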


The key ingredient in the proof is that, by Zariski topology on Euclidean spaces, an argument for almost all generating systems follows by finding the existence of one generating system that has a minimal length to build up a basis. Furthermore, the possible words that span a basis provided by \cite{KlepSpenko_SweepingWords_2016} all have the same length. 

\begin{proof}
  
   We provide a sketch of the proof in Figure \ref{fig:sketch_proof}. Note that for simpler clarification, the figure and the proof below only concern the case in which $S$ contains $2$ elements, but it can be straightforwardly generalized to the cases with arbitrary $g\geq 2$. The roadmap works as follows:

%With two columns scale=0.46 originally.
\begin{figure*}
\includegraphics[scale=0.4]{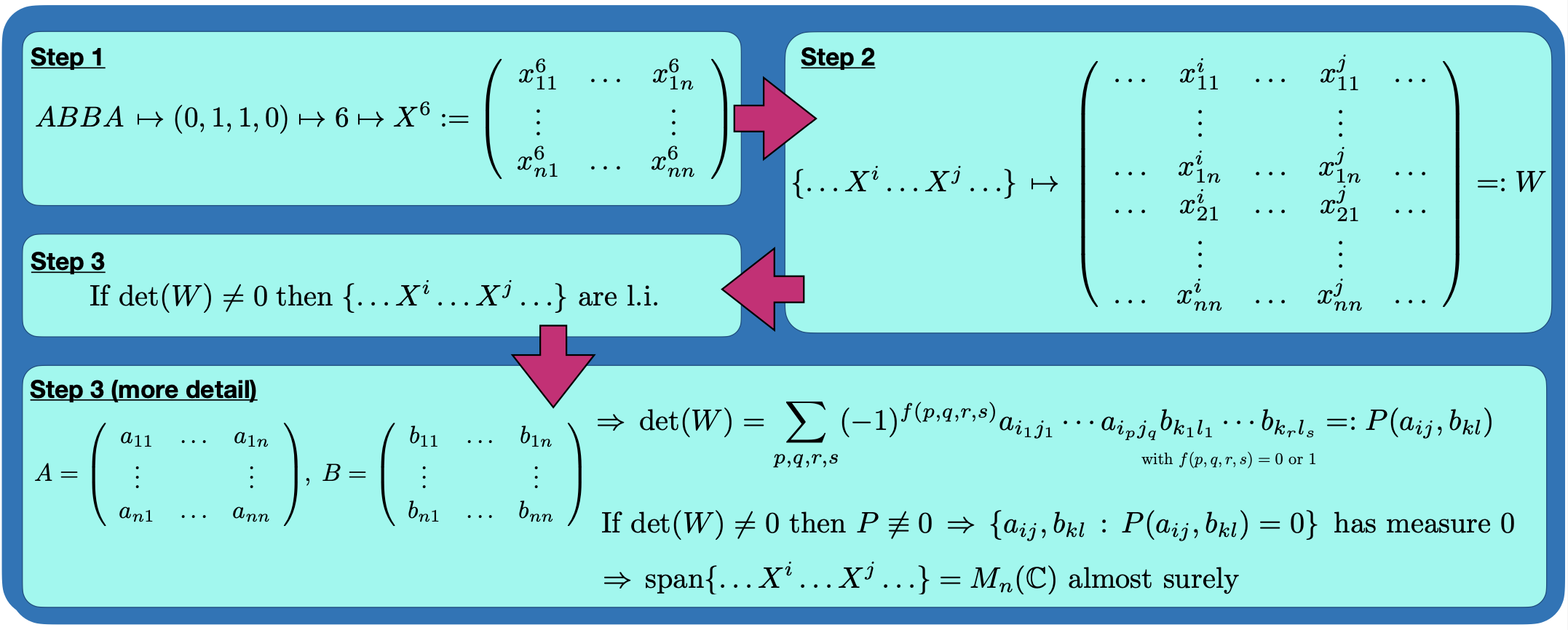}
	\caption{Sketch of the proof for Theorem \ref{gen.Paz}. Note that $1\leq i,j \leq 2^\ell$ in Step 2 and the number of elements in $\{ \ldots X^i \ldots X^j  \ldots\}$ is $n^2$. }
\label{fig:sketch_proof}
\end{figure*}

\vspace{0.2cm}

\textbf{\underline{Step 0. Select the words.}}

\vspace{0.2cm}

First, consider $n^2$ words of length $\ell$ in $A$ and $B$, namely products of the form 
$$\underbrace{ABBAB\ldots BA}_{\ell \text{ elements}} \, .$$ By the counting argument explained above, it is clear that $\ell = \Omega (\log n)$. Note that, in general, we will have $2^\ell >n^2$, and therefore we will not need all the words of length $\ell$ generated by $A$ and $B$, but just a selected subset of them with $n^2$ words.

\vspace{0.2cm}

\textbf{\underline{Step 1. Change notation of each word.}}

\vspace{0.2cm}

This is done to simplify the rest of the proof. Since we only consider two generators, we can rewrite each word in binary notation and identify each binary number with its decimal expression, as shown in Figure \ref{fig:sketch_proof} for the particular case of $ABBA$. In this way, we identify each word with a specific matrix and establish an order among them. 
\vspace{0.2cm}

\textbf{\underline{Step 2. Vectorize words and join them\\ in a matrix.}}

\vspace{0.2cm}

Each of the matrices in the previous step are of dimension $n \times n$. Thus, we can write the coordinates of each of them in a vector of $n^2 \times 1$ entries. We then write the $n^2$ vectors associated to the $n^2$ words in the columns of a matrix $W$ of dimension $n^2 \times n^2$ according to the order.

\vspace{0.2cm}

\textbf{\underline{Step 3. Compute the determinant of that matrix.}}

\vspace{0.2cm}

We now compute the determinant of $W$. Note that, if det$(W)\neq 0$, then all the words are linearly independent. More specifically, det$(W)$ is actually a polynomial of $2n^2$ variables, namely $\{ a_{ij}\}_{i,j=1}^n$ and $\{ b_{kl}\}_{k,l=1}^n$, the coefficients of $A$ and $B$ respectively. Therefore, if $P:=$det$(W)\neq 0$, then $P$ is not the identically-zero polynomial, and thus its zeroes have null Lebesgue measure. In other words, the set of words considered in Step 0 spans $M_n(\mathbb{C})$ almost surely. 

\vspace{0.2cm}

The remaining part to conclude the proof is to justify the existence of the words of Step 0. This is a highly non-trivial result shown in  \cite{KlepSpenko_SweepingWords_2016}, where the authors explicitly give $n^2$ words of length $2\lceil \log_g n \rceil$ such that $P$ is not the identically-zero polynomial. The key idea here is to prove the uniqueness of one monomial in $P$, implying $P$ is non-zero, by translating the question as decomposing the edges of a corresponding directed graph into paths with specific conditions.

\end{proof}

\section{Applications}\label{sec:applications}

The previous section has been devoted to reviewing the proof of quantum Wielandt's inequality in the generic case. Its main result, \cref{gen.Paz}, finds some very interesting applications in the field of quantum information theory. In this section, we clarify some of its implications in the contexts of injectivity of Matrix Product States and Projected Entangled Pair States, as well as for having eventually full Kraus rank (and for the primitivity index) of quantum channels.

\subsection{Matrix Product States (MPS)}\label{subsec:MPS}

Consider a pure quantum state $\ket{\psi} \in \mathbb{C}^{\otimes g^L}$ modelling a system of $L$ sites, each of which corresponds to a $g$-dimensional Hilbert space. If a translation-invariant pure state $\ket{\psi } $ can be written in the form
\begin{equation}\label{eq:MPS}
    \ket{\psi} = \underset{i_1, \ldots, i_L =1}{\overset{g}{\sum}} \text{tr} \left[ A_{i_1} \ldots A_{i_L}  \right] \ket{i_1 \ldots i_L}
\end{equation}
we say that $\ket{\psi}$ is a \textit{Matrix Product State} (MPS) with periodic boundary conditions, i.e. the system is placed on a ring. It admits the following standard graphical representation, where the legs joining the (square) matrices represent the contraction of coefficients given by matrix multiplication, and the vertical segments represent the physical index (see Figure \ref{fig:MPS}). For further information, we refer the reader to e.g. \cite{perez2006matrix,ReviewMPS}.

\begin{figure}[h!]
	\centering
	\includegraphics[scale=0.25]{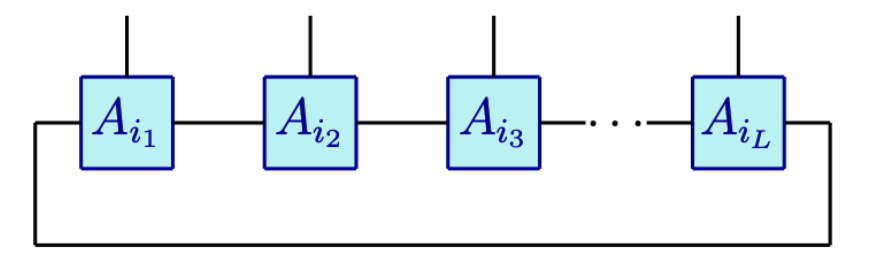}
	\caption{Graphical representation of a MPS.}\label{fig:MPS}
\end{figure}

For any $L \in \mathbb{N}$, let us consider the map $\Gamma_L : M_n(\mathbb{C}) \rightarrow \mathbb{C}^{\otimes g^L}$ given by
\begin{equation}\label{eq:Gamma_operator}
    \Gamma_L : X \mapsto \underset{i_1, \ldots, i_L =1}{\overset{g}{\sum}} \, \text{tr} \left[ X A_{i_1} \ldots A_{i_L} \right] \ket{i_1 \ldots i_L} \, .
\end{equation}
The study of the injectivity of this map is essential in the field of quantum spin chains, as it implies that the state represented by the MPS is the unique ground state of a certain gapped Hamiltonian. As mentioned in \cite{perez2006matrix}, it turns out that $\Gamma_L$ is injective if, and only if,
\begin{equation*}
   \text{span}\left\lbrace A_{i_1} \ldots A_{i_L} \, : \,1\leq i_1 , \ldots, i_L \leq g  \right\rbrace  = M_n(\mathbb{C}) \, , 
\end{equation*}
or, equivalently, 
\begin{equation*}
   \wiel(\lbrace A_1, \ldots  , A_g  \rbrace) \leq L \, .   
\end{equation*}

In \cite{perez2006matrix}, the authors introduced the following condition:

\vspace{0.2cm}

\textit{C1. There is a big enough $L\in \mathbb{N}$ such that the words of length $L$ span $M_n(\mathbb{C})$}.

\vspace{0.2cm}

One can check that, given a set of matrices $S=\{  A_1, \ldots,  A_g\}$, condition C1 holds for $S$ with probability $1$. In other words, any generic set of (at least two) matrices is a (Wie-)generating system. We consider here this \textit{generic} case. The natural question then is how $L$ scales with the dimension of the matrices as well as the number of generators. In \cite{perez2006matrix}, it was conjectured that $\Gamma_L$ should be injective for 
\begin{equation*}
    L \geq 2 \log_g n \, ,
\end{equation*}
and this was supported with numerical evidence as well as an analytical proof for $g=n=2$. In the subsequent paper \cite{cadarso2013wielandt}, this result was extended to the following inequality
\begin{equation*}
    L \geq \left[ 8 \log_g n \right] +1   \, ,
\end{equation*}
which was shown to be valid for $g>16$ and $n$ large enough, using the technique of quantum expanders introduced in  \cite{Hastings2007quantumexpanders}. For small values of $g$ and $n$, their result was complemented with numerical simulations.

In this direction, we can complement the previous findings by proving a similar bound for $L$, with a better multiplicative factor, which is valid for every $g$ and every $n$. Indeed, our main result from \cref{gen.Paz} translates as follows:

\begin{cor}\label{cor:generic_L_MPS}
 Given $L \in \mathbb{N}$ such that
 \begin{equation}
     L \geq 2 \lceil \log_g n  \rceil  \, ,
 \end{equation}
 the map $\Gamma_L$ is injective almost surely with respect to Lebesgue measure.
\end{cor}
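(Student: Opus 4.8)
The plan is to read off \cref{cor:generic_L_MPS} from \cref{gen.Paz} via the dictionary between injectivity of $\Gamma_L$ and Wie-length recalled in this subsection. Recall from \cite{perez2006matrix} that $\Gamma_L$ is injective precisely when $\operatorname{span}\{A_{i_1}\cdots A_{i_L} : 1\le i_1,\dots,i_L\le g\} = M_n(\bbC)$. The first step is to note that this span already equals $M_n(\bbC)$ as soon as $\wiel(\{A_1,\dots,A_g\}) \le L$: indeed, by \cref{lemma:wielandt_length_stabilizes}, once words of length $\wiel(S)$ span the algebra, so do words of every length $m \ge \wiel(S)$, in particular words of length exactly $L$. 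Hence it is enough to prove that $\wiel(\{A_1,\dots,A_g\}) \le L$ for all tuples $(A_1,\dots,A_g)$ outside a set of Lebesgue measure zero in $M_n(\bbC)^g$.

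To do this I would assemble two exceptional sets. First, as observed in the discussion of condition C1, the tuples that fail to be generating systems of $M_n(\bbC)$ form a Lebesgue-null set: for $g \ge 2$ there is at least one generating system (for instance the rank-one pair $A=\sum_{i}\ket{i+1}\bra{i}$, $B=\ket{2}\bra{n}$ from \cref{subsec:Wielandt}, together with arbitrary further generators if $g>2$), and ``being a generating system'' is a Zariski-open condition, so its failure locus is contained in a proper algebraic subvariety; call this null set $N_1$. Second, \cref{gen.Paz} tells us that the generating systems $S$ with $\wiel(S) > 2\lceil \log_g n\rceil$ form a Lebesgue-null set $N_2$. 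Then $N_1 \cup N_2$ is still Lebesgue-null, and every tuple outside it is a generating system with $\wiel(S) \le 2\lceil \log_g n\rceil$.

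Combining the two steps: for $(A_1,\dots,A_g) \notin N_1 \cup N_2$ and any $L \ge 2\lceil \log_g n\rceil$ we get $\wiel(\{A_1,\dots,A_g\}) \le L$, hence by the stabilization lemma the length-$L$ products span $M_n(\bbC)$, hence $\Gamma_L$ is injective; this covers all tuples outside a Lebesgue-null set, proving the corollary. I do not expect a genuine obstacle here: the substantive work — exhibiting $n^2$ words of the common length $2\lceil \log_g n\rceil$ whose coordinate vectors have a nonvanishing determinant, regarded as a polynomial in the matrix entries — is exactly the content of \cref{gen.Paz} imported from \cite{KlepSpenko_SweepingWords_2016}. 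The only mild care needed is bookkeeping: passing from ``$\wiel(S) \le L$'' to ``length-$L$ words span'' through \cref{lemma:wielandt_length_stabilizes}, and recording that the generic tuple is a generating system so that \cref{gen.Paz} is applicable.
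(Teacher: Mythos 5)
Your proof is correct and follows essentially the same route the paper intends: the corollary is presented there as a direct translation of \cref{gen.Paz} through the equivalence ``$\Gamma_L$ injective $\Leftrightarrow$ length-$L$ words span $M_n(\bbC)$'' from \cite{perez2006matrix}, and your argument merely spells out the bookkeeping (stabilization via \cref{lemma:wielandt_length_stabilizes} and the union of the two null sets $N_1$, $N_2$) that the paper leaves implicit. One small simplification: the polynomial $P$ constructed in the proof of \cref{gen.Paz} already vanishes on every non-generating tuple, so $N_1 \subseteq N_2$ and the separate treatment of $N_1$ is not strictly necessary.
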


A particular case of this result is Fact 1.2 of \cite{Lancien2022randomPEPS} for MPS, where the authors showed injectivity of their random MPS almost surely whenever  $g \geq n^2$. 

In particular, as mentioned above, the fact that $\Gamma_L$ is injective implies that the MPS is the unique ground state of a local Hamiltonian (constructed from some orthogonal projections onto some subspaces), which is called the \textit{parent Hamiltonian} of the MPS. In this context, our result  \cref{gen.Paz} has the following implication:

\begin{cor}\label{cor:generic_L_MPS_Interpret}

A random translation-invariant MPS with periodic boundary conditions on $M \geq L$ sites, with physical dimension $g$ and bond dimension $n$, is with probability 1 the unique ground state of an $(L + 1)$-local Hamiltonian, where $L = 2 \lceil \log_g (n)\rceil$. 
\end{cor}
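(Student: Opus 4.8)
The plan is to deduce \cref{cor:generic_L_MPS_Interpret} as an essentially immediate consequence of \cref{cor:generic_L_MPS} (hence of \cref{gen.Paz}) together with the standard parent-Hamiltonian construction of \cite{perez2006matrix}. The skeleton of the argument has three parts: first, reduce the claim about $M \geq L$ sites to the injectivity of $\Gamma_L$ on a block of $L$ consecutive sites; second, invoke \cref{cor:generic_L_MPS} to get that $\Gamma_L$ is injective with probability one for $L = 2\lceil \log_g n\rceil$; third, recall why injectivity on blocks of size $L$ yields a frustration-free $(L+1)$-local parent Hamiltonian whose unique ground state is the given MPS.

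In more detail: First I would fix the matrices $A_1,\dots,A_g \in M_n(\bbC)$ defining the translation-invariant MPS and observe that, by \cref{lemma:wielandt_length_stabilizes}, if the length-$L$ products of $\{A_1,\dots,A_g\}$ span $M_n(\bbC)$ then so do the length-$m$ products for every $m \geq L$; equivalently $\wiel(\{A_1,\dots,A_g\}) \leq L$ implies the MPS on any $M \geq L$ sites is in the injective region in the sense of \cite{perez2006matrix}. Then I would apply \cref{cor:generic_L_MPS}: with $L = 2\lceil \log_g n\rceil$, the map $\Gamma_L$ is injective almost surely with respect to the Lebesgue measure on $M_n(\bbC)^g$, so the hypothesis of the parent-Hamiltonian construction holds generically.

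Next I would recall the parent-Hamiltonian construction for reduced density operators: for a block of $L+1$ consecutive sites, let $\mathcal{S}_{L+1} \subseteq (\bbC^g)^{\otimes(L+1)}$ be the support of the reduced state of the MPS on that block, and let $h$ be the projector onto $\mathcal{S}_{L+1}^{\perp}$; define $H = \sum_{j} \tau_j(h)$ where $\tau_j$ are the translations around the ring of $M$ sites. This $H$ is manifestly $(L+1)$-local, positive semidefinite, and annihilates the MPS, so the MPS is a ground state with energy zero. The content is that it is the \emph{unique} ground state, and this is exactly the intersection property established in \cite[Thm.~3 and Thm.~6]{perez2006matrix}: when $\Gamma_L$ is injective (equivalently, when blocks of length $L$ already satisfy the span condition), the global ground space of $H$ is one-dimensional and spanned by the MPS. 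I would cite this intersection/closure property rather than reprove it, noting that the slight shift from $L$ to $L+1$ in the locality comes from needing overlapping blocks so that the intersection of local ground spaces closes up to the global MPS.

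The main obstacle — really the only non-bookkeeping point — is making the passage from ``$\Gamma_L$ injective'' to ``the $(L+1)$-local parent Hamiltonian has a unique ground state'' precise and correctly attributed, since this is where the genuine quantum-spin-chain input (the intersection property of \cite{perez2006matrix}) enters and where one must be careful that the injectivity length $L$ translates into an $(L+1)$-body, rather than $L$-body, interaction. Everything else — stabilization of the span under increasing block length via \cref{lemma:wielandt_length_stabilizes}, and genericity via \cref{cor:generic_L_MPS} — is immediate from results already in the paper.
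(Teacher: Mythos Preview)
Your proposal is correct and matches the paper's approach: the paper does not give a standalone proof of \cref{cor:generic_L_MPS_Interpret} but simply remarks that injectivity of $\Gamma_L$ (from \cref{cor:generic_L_MPS}/\cref{gen.Paz}) together with the parent-Hamiltonian construction of \cite{perez2006matrix} yields the conclusion. Your write-up is in fact more detailed than the paper's, correctly isolating the stabilization via \cref{lemma:wielandt_length_stabilizes} and the $(L+1)$-locality coming from the overlapping-block intersection property.
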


Let us conclude this subsection by discussing the most general possible case for this problem. For such a case, Conjecture \ref{conj:Wielandt} can be rephrased in the current setting as follows:

\begin{conj}[Conjecture on the injectivity of $\Gamma_L$]\label{conj:injectivityGammaL}
  There exists a function $L(n)=\Theta(n^2)$ such that for any set of matrices $S= \{A_1 , \ldots , A_g \} \subseteq M_n(\bbC) $ satisfying condition $C1$, the map $\Gamma_L$ is injective for any $L\geq L(n)$. 
\end{conj}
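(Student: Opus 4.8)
This conjecture is a restatement of quantum Wielandt's inequality (\cref{conj:Wielandt}) in the language of Matrix Product States, so the natural plan is to combine the two ingredients already isolated in \cref{subsec:Paz_implies_Wielandt}: a \emph{near-linear} bound on the classical length of $S$ in the spirit of Paz's conjecture (\cref{conj:Paz}), and a \emph{sharpened} conversion from length to Wie-length. Concretely, one would first try to show that for every generating system $S\subseteq M_n(\bbC)$ the subspace spanned by products of length $O(n)$ already contains a non-nilpotent element — equivalently, that $\ell(S)=O(n)$, without the logarithmic overhead of \cite{shitov2019improved}. One would then replace the estimate $\wiel(S)\le (n^2+n)\,\ell(S)$ of \cref{lem:paz_implies_wielandt} by the conjectured $\wiel(S)=O(n\,\ell(S))$. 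Composing the two would give $\wiel(S)=O(n)\cdot O(n)=O(n^2)$, hence a threshold $L(n)=\Theta(n^2)$ with the asserted injectivity property for $\Gamma_L$.

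For the first ingredient I would follow the ascending-chain strategy of \cite{Pappacena_UpperBound_1997,shitov2019improved}: fix a word $M$ with as many distinct eigenvalues (equivalently, as large a minimal-polynomial degree) as one can arrange, use $\1,M,M^2,\dots$ up to the Cayley--Hamilton bound as a pivot, and enlarge the span by left and right multiplication with the remaining generators, tracking the combinatorial cost carefully enough to remove the $\log n$ factor present in \cite{michalek2019quantum,shitov2019improved}. For the second ingredient I would revisit Lemma~3.4 of \cite{michalek2019quantum} (itself built on Lemma~2 of \cite{sanz2010quantum}): the wasteful factor $n^2$ there comes from extracting the $n^2$ rank-one matrices essentially one at a time from a single non-nilpotent word; the aim is to produce a whole rank-$n$ block in one stroke, for instance by showing that once the span of length-$k$ products contains one rank-one matrix $\ket{u}\bra{v}$, conjugation by suitable words sweeps out $\ket{u'}\bra{v'}$ for a spanning family of vectors $u',v'$ in only $O(n)$ further steps instead of $O(n^2)$.

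The main obstacle is that the first ingredient is, in essence, Paz's conjecture itself, open since 1984 and established exactly only for $n\le 6$ and asymptotically only up to $O(n\log n)$; so an unconditional proof of \cref{conj:injectivityGammaL} at the order $\Theta(n^2)$ looks at least as hard as a near-linear Paz bound. A more attainable intermediate goal is to prove the second ingredient \emph{unconditionally}, i.e. the implication $\ell(S)=O(n)\Rightarrow\wiel(S)=O(n^2)$ (equivalently, the conjecture $\wiel(S)=O(n\,\ell(S))$ of \cref{subsec:Paz_implies_Wielandt}), which would reduce quantum Wielandt's inequality cleanly to Paz's conjecture; note that this reduction must treat the worst case and the generic case separately, since by \cref{gen.Paz} the generic Wie-length is already $\Theta(\log n)$ and hence cannot obey any ``length-reduction'' heuristic uniformly. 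The delicate technical point in that reduction is precisely the step turning ``the span of length-$k$ products contains a non-nilpotent matrix of rank $r$'' into ``it contains every rank-one matrix'': the present argument pays a multiplicative $r\le n$ there, and removing this loss seems to require a genuinely new structural fact about how the trace form propagates along powers of a fixed word.
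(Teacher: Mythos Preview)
The statement you are addressing is \cref{conj:injectivityGammaL}, which the paper states explicitly as an \emph{open conjecture}; the paper does not supply a proof and remarks immediately afterwards that ``the previous conjecture remains still an open problem, and the best result to date contains a $\log n$ overhead'' \cite{michalek2019quantum}. So there is no ``paper's own proof'' to compare against. Your write-up is appropriately framed as a proof \emph{strategy}, and you correctly identify that the conjecture is equivalent to \cref{conj:Wielandt} and that any route through \cref{lem:paz_implies_wielandt} requires both a near-linear Paz bound and the sharpened conversion $\wiel(S)=O(n\,\ell(S))$ --- the latter being precisely the paper's own Conjecture~3.

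Your analysis of the obstacles is accurate and aligns with the paper's discussion in \cref{subsec:Paz_implies_Wielandt}: the first ingredient \emph{is} Paz's conjecture, and the second is the paper's Conjecture~3. One small correction: you write that an unconditional proof of the second ingredient would ``reduce quantum Wielandt's inequality cleanly to Paz's conjecture,'' but note that even with $\wiel(S)=O(n\,\ell(S))$ in hand, you still need $\ell(S)=O(n)$, not merely the current $O(n\log n)$ of \cite{shitov2019improved}; otherwise you land at $O(n^2\log n)$, which is exactly the state of the art. So the reduction is to Paz's conjecture \emph{with the sharp constant order}, not to any weakening of it. Your remark about separating generic and worst cases echoes the paper's own caveat at the end of \cref{subsec:Paz_implies_Wielandt}. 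In short: your proposal is not a proof, but it is an honest and well-informed map of why no proof currently exists.
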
   

As mentioned above, the previous conjecture remains still an open problem, and the best result to date contains a $\log n$ overhead, i.e. there is $L(n)=O(n^2 \text{log } n )$ such that $\Gamma_L$ is injective for any $L\geq L(n)$ \cite{michalek2019quantum}.

\subsection{Projected Entangled Pair States (PEPS)}\label{subsec:PEPS}

The previous section has been devoted to proving that almost all Matrix Product States, i.e. rank-3 tensors $\calA\in  \mathbb{C}^g \otimes  (\mathbb{C}^n)^{\otimes 2}$, on a ring with more than $L=2\lceil \log_g n \rceil$ sites constitute an injective region for the map $\Gamma_L$. It is natural to consider whether this result can be extended to larger dimensions. 

Let us focus for the moment on dimension $2$. An extension of the previous method would be to construct states on a 2-dimensional finite square lattice (i.e. ${R}\subset \mathbb{Z}^2$ of size $L_1 \times L_2$) with periodic boundary condition (that is, to construct tensor network states on the torus). Such states are called \textit{PEPS (Projected Entangled Pair States)} \cite{perez2008PEPS}. Again, we assume that the tensors assigned to each vertex are site-independent, so the PEPS is uniquely given by the size of the lattice and a rank-5 tensor $\calA\in\mathbb{C}^g \otimes  (\mathbb{C}^n)^{\otimes 4}$, where $g$ denotes the physical dimension and $n$ denotes the bond dimension. It is not hard to see that PEPS also fulfil an area law, and for each PEPS one can construct a so-called parent Hamiltonian which has the PEPS as a ground state. However, similarly as in the case of MPS, in order to guarantee that the given PEPS state is the unique exact ground state of a gapped local acting Hamiltonian, we have to ensure injectivity.

More specifically, a tensor $\calA$ is called \textit{normal} if there exists a region $ R$ such that the linear map $\Gamma_R: (\mathbb{C}^n)^{\otimes |\partial R|}\rightarrow (\mathbb{C}^g)^{\otimes |R|}$, which maps boundary conditions living in the virtual space to vectors in the physical Hilbert space, is injective (cf. Figure \ref{fig:PEPS}). 

\begin{figure}[h!]
	\centering
	    \includegraphics[scale=0.25]{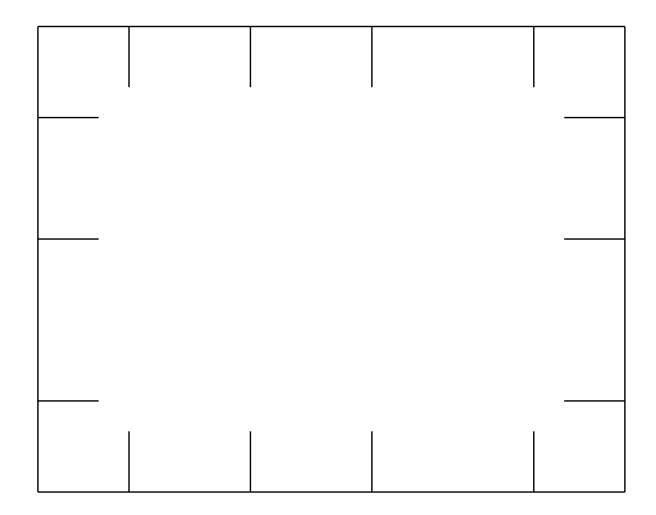}
    \includegraphics[scale=0.25]{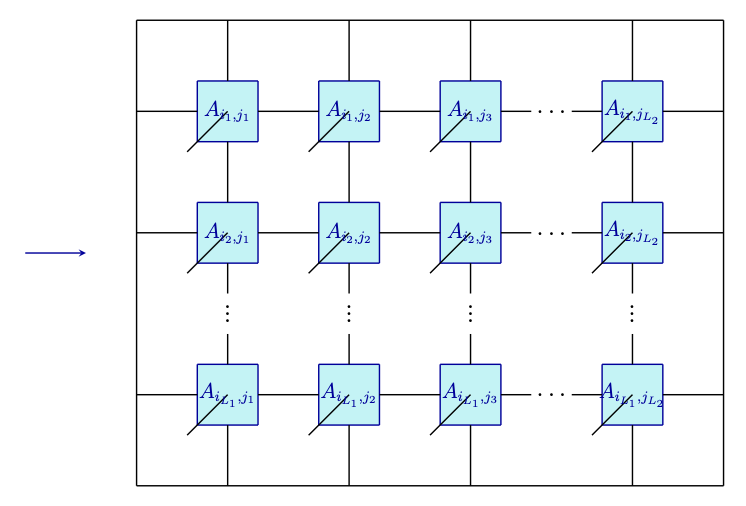}
	\caption{Graphical representation of the map $\Gamma_R$ for PEPS.}\label{fig:PEPS}
\end{figure}

Assuming for simplicity $L=L_1=L_2$, we require $\Gamma_L: (\mathbb{C}^n)^{\otimes 4L}\rightarrow (\mathbb{C}^g)^{\otimes L^2}$ to be injective for a large enough $L\in \mathbb{N}$, and the minimal such $L$ is called the \textit{index of injectivity} of the PEPS. The goal is then, when assuming $\calA$ to be normal, to provide a general $L_0\in\mathbb{N}$ only depending on the physical and virtual dimension $g$ and $n$, resp., such that $\Gamma_L$ is injective. Such a question can be interpreted as \textit{a tensor version of quantum Wielandt's inequality} \cite{michalek2019tensor}. In fact, this is one of the most relevant mathematical open questions for PEPS, as described in \cite{Cirac_2019}, namely whether one can find a general, optimal upper bound for the injectivity index in terms of the bond dimension of a normal tensor.

We note that, in higher dimension, the  index of injectivity is still reasonably defined, because once the region is injective, any region including it is also injective \cite{michalek2019tensor}. However, in the $1$-dimensional case, we take advantage of writing the tensor as a tuple of matrices and the question can be simplified just by considering the span of products of matrices. For higher dimensions, the structure of PEPS complicates the contraction: a general PEPS is computationally intractable since contracting (translation-invariant) PEPS is a $\sharp$P-complete problem \cite{schuch2007computational,scarpa2020projected} and also in average-case hard \cite{haferkamp2020contracting}. The only known fact about the tensor version of quantum Wielandt's inequality to date is the existence of a function $f(n)$ that bounds the index of injectivity $i(\calA)$ for any normal $A$ with bond dimension $n$ \cite{michalek2019tensor}. It is even unknown in general whether $f(n)$ can be a polynomial function or whether it is computable.

Here, we focus again on the generic case. A key observation is that, for a given $L$, the tuples $\{A_1, \dots A_g \}$ with $A_i \in (\mathbb{C} ^{n})^{\otimes 4}$ for which $R$ is not an injective region form a Zariski closed set of $((\mathbb{C} ^{n})^{\otimes 4} )^g$ (cf. \cite{michalek2019tensor}, at the end of the proof of Theorem 3.4). That is, if there exists a tensor $\calA_0$ such that $R$ is an injective region, $R$ is an injective region for almost all $\calA\in\mathbb{C}^g \otimes  (\mathbb{C}^n)^{\otimes 4}$ (explained in detail in the proof of \cref{cor:generic_L_PEPS}).  Due to that, we can construct an upper bound for the generic index of injectivity by proving ``existence'' of a normal tensor for which the degeneracy does not increase too much while $L$ increases. A simple generalization of the 1D case already provides a good upper bound with $L=\Theta(\log n)$ on the generic case. We look at a special type of string-bond states \cite{schuch2007strings}.

\begin{thm}\label{cor:generic_L_PEPS}
 Consider a finite region $R\subset \mathbb{Z}^2$ of side length $L \in \mathbb{N}$ and $\ket{\psi}$ a PEPS which is translation invariant and with periodic boundary conditions as above. Suppose $g\geq 4$ denotes the physical dimension and $n$ refers to the bond dimension. If $L$ is such that
 \begin{equation}
     L \geq  2\lceil \log_{\lfloor g ^{1/2}\rfloor} n  \rceil  \, ,
 \end{equation}
 the map $\Gamma_L$ is injective almost surely. Additionally, $\ket{\psi}$ is the unique ground state of a local Hamiltonian with probability $1$.
\end{thm}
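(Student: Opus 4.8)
The plan is to reduce this two-dimensional statement to the one-dimensional result \cref{cor:generic_L_MPS}. The starting point is the observation already recorded in the text (following \cite{michalek2019tensor}): for a fixed region $R$, the condition that $\Gamma_R$ fails to be injective is the simultaneous vanishing of all maximal minors of the matrix representing $\Gamma_R$, which are polynomials in the entries of the tensor; hence the tensors $\calA \in \bbC^g \otimes (\bbC^n)^{\otimes 4}$ for which $R$ is not an injective region form a Zariski-closed subset of the irreducible affine space $\bbC^g \otimes (\bbC^n)^{\otimes 4}$. Consequently, it suffices to exhibit one tensor $\calA_0$ for which $\Gamma_L$ is injective at $L = 2\lceil \log_{\lfloor g^{1/2}\rfloor} n \rceil$: then the bad set is a proper Zariski-closed subset, hence has Lebesgue measure zero, so $\Gamma_L$ is injective for almost every $\calA$.

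To build the witness I would use a \emph{string-bond state} that runs an MPS along every row and, independently, along every column of the lattice. Set $h := \lfloor g^{1/2}\rfloor$; since $g \ge 4$ we have $h \ge 2$ and $h^2 \le g$, so we may fix an isometric embedding $\bbC^h \otimes \bbC^h \hookrightarrow \bbC^g$. By \cref{gen.Paz} the generating systems $\{B_1,\dots,B_h\} \subseteq M_n(\bbC)$ with $\wiel > 2\lceil \log_h n\rceil$ form a measure-zero set, so we may fix one generating system $B_1, \dots, B_h$ with $\wiel(\{B_1,\dots,B_h\}) \le 2\lceil\log_h n\rceil$, and define $\calA_0 \in \bbC^g\otimes(\bbC^n)^{\otimes 4}$ by
\begin{equation*}
    (\calA_0)^{(a,b)}_{l,r,u,d} = (B_a)_{l,r}\,(B_b)_{u,d},
\end{equation*}
with physical index $(a,b) \in \bbC^h\otimes\bbC^h \subseteq \bbC^g$ and left/right/up/down virtual indices $l,r,u,d \in [n]$. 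Thus $\calA_0$ carries the MPS $(B_a)_a$ along every horizontal bond and, independently, along every vertical bond, and the physical index records both string letters at that site.

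For this $\calA_0$, contracting the internal bonds of the $L\times L$ region $R$ decouples along the horizontal and vertical factors: the rows contract to $L$ independent length-$L$ MPS chains with matrices $B_1,\dots,B_h$ and two dangling virtual legs each (identified with $M_n(\bbC) \cong (\bbC^n)^{\otimes 2}$ via the left/right boundary of $R$), and likewise for the $L$ columns. After grouping the $4L$ boundary legs of $R$ into $2L$ horizontal and $2L$ vertical ones, and regrouping $(\bbC^g)^{\otimes L^2}$ through the fixed embedding into an "$h$-horizontal" and an "$h$-vertical" tensor factor, one obtains
\begin{equation*}
    \Gamma_L^{\calA_0} = \Big(\textstyle\bigotimes_{\mathrm{rows}} \Gamma_L^{\mathrm{MPS}}\Big) \otimes \Big(\textstyle\bigotimes_{\mathrm{cols}} \Gamma_L^{\mathrm{MPS}}\Big),
\end{equation*}
where $\Gamma_L^{\mathrm{MPS}}: M_n(\bbC)\to(\bbC^h)^{\otimes L}$ is the length-$L$ MPS map of $B_1,\dots,B_h$. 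By the characterization recalled in \cref{subsec:MPS}, $\Gamma_L^{\mathrm{MPS}}$ is injective whenever $L \ge \wiel(\{B_i\})$, which holds for $L \ge 2\lceil\log_h n\rceil$; and a tensor product of injective linear maps is injective, so $\Gamma_L^{\calA_0}$ is injective for such $L$. Combined with the first paragraph this gives injectivity of $\Gamma_L$ almost surely. For the last claim, injectivity of $\Gamma_L$ says $\calA$ is normal with an injective region of side $L$, and the standard parent-Hamiltonian construction for normal PEPS \cite{perez2008PEPS, Cirac_2019} then produces a frustration-free local Hamiltonian — built from projectors onto the ranges of $\Gamma$ on overlapping patches of size $O(L)$ — whose unique ground state is $\ket{\psi}$.

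The main obstacle is making the decoupling in the third paragraph precise: one must set up the index conventions for the string-bond tensor carefully, check that all $4L$ boundary legs and all $L^2$ physical indices are correctly partitioned between the horizontal and vertical MPS systems, and verify that the internal horizontal and vertical contractions genuinely separate with no virtual factor being collapsed. This last point is exactly why the vertical bonds must also carry a physical index — a purely delta-type vertical bond would map a virtual factor onto a one-dimensional space and destroy injectivity — which forces $h^2 \le g$ and hence the base $\lfloor g^{1/2}\rfloor$ rather than $g$ in the bound. The Zariski reduction and the parent-Hamiltonian step are routine given what is already cited in the excerpt.
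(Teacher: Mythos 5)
Your proof is correct and follows essentially the same route as the paper's: reduce to exhibiting a single witness tensor via the Zariski-closed/measure-zero argument, take the witness to be a string-bond state $\calA_0$ whose horizontal and vertical bonds independently carry a one-dimensional generating MPS from \cref{gen.Paz}, observe that $\Gamma_L^{\calA_0}$ factorizes into a tensor product of $2L$ injective MPS maps (hence is injective), and finish with the parent-Hamiltonian construction. The only cosmetic difference is that you reuse the same generating system $\{B_1,\dots,B_h\}$ for both directions while the paper takes two generating systems $\{B_i\}$, $\{\tilde B_j\}$; both are fine, and your explicit formula $\Gamma_L^{\calA_0} = (\bigotimes_{\mathrm{rows}}\Gamma_L^{\mathrm{MPS}})\otimes(\bigotimes_{\mathrm{cols}}\Gamma_L^{\mathrm{MPS}})$ states the decoupling a bit more cleanly than the paper does.
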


\begin{proof}
    Let us first consider the special case that $g$ is a squared number, i.e. $g=d^2$ for some $d \in\mathbb{N}$. Suppose $\{B_1,\dots B_d\}$ and $\{\tilde{B}_1,\dots, \tilde{B}_d\}$ are two generating systems of $M_n(\bbC)$ with Wie-length $2\lceil \log_d n\rceil$ (cf. Theorem \ref{gen.Paz}).

    In this case, we construct a $g$-tuple $\{A_1,\dots A_g\}$, for which we assume that each of them is of the form 
    $A_k=A_k^v \otimes A_k^h$, with $v$ denoting vertical and $h$ horizontal. Moreover, $A_k^v = B_{i_k}$ and $A_k^h =\tilde{B}_{j_k}$ with $k=i_k(n-1)+j_k$. Then, we can contract first vertically
and subsequently horizontally (see Figure \ref{fig:string-bond-PEPS}).

 \begin{figure}[h!]
     \centering
     \includegraphics[scale=0.25]{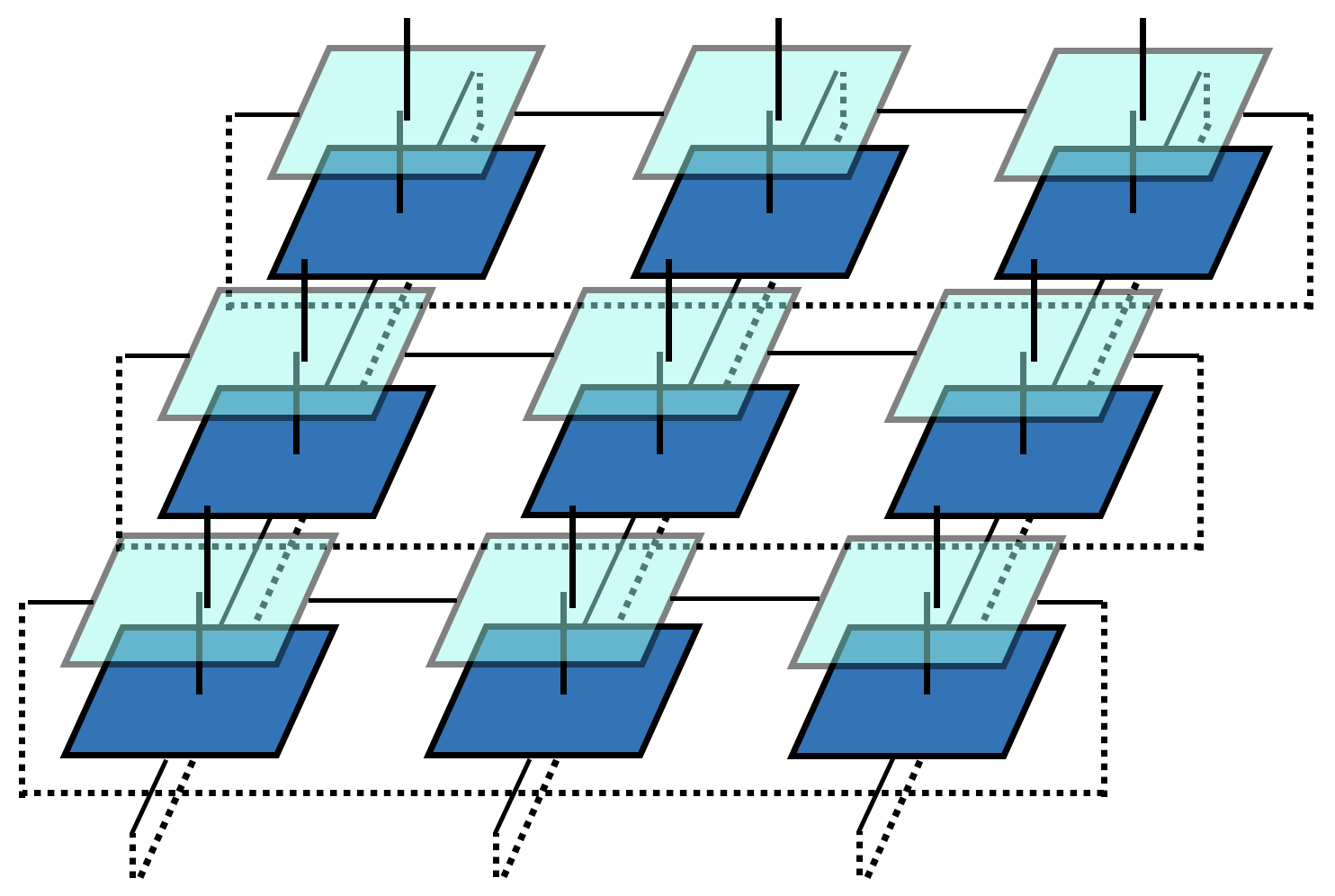}
     \caption{Graphical representation of a string-bond state.}
     \label{fig:string-bond-PEPS}
 \end{figure}   

The PEPS we obtain is the following:
\begin{align*}
    \ket{\varphi_\calA}& =
    \sum_{i_1,\dots i_{L}, j_1,\dots ,j_{L}=1}^g \prod_{i=i_1,\dots, i_{L}} \Tr [A_{i,j_1}^h\dots A_{i,j_{L}}^h] 
    \\ & \prod_{j=j_1,\dots ,j_{L}} \Tr [A_{i_1,j}^v\dots A_{i_{L},j}^v] \ket{i_1,j_1}\dots \ket{i_{L}, j_{L}} \\
    & = \underset{\tiny \begin{array}{c}
         i_1... i_{L} =1 \\
          j_1...j_{L} =1
    \end{array}}{\overset{g}{\sum}}\varphi^{\mathcal{A}}_{j_1i_1,\dots, j_Li_L} \ket{i_1,j_1}\dots \ket{i_{L}, j_{L}} \, ,
\end{align*}
where the coefficients are given by
\begin{align*}
    & \varphi^{\mathcal{A}}_{j_1i_1,\dots, j_Li_L}=\Tr[\bigotimes_{j=j_1\dots j_L}A_{j,i_1}^h\dots A_{j,i_L}^h\bigotimes_{i=i_1\dots i_L}A_{j_1,i}^v\dots A_{j_L,i}^v] \, 
\end{align*}
and our goal is to show that the $g^{L^2}$ tensors 
\begin{align*}
\bigotimes_{j=j_1,\dots, j_L}A_{j,i_1}^h\dots A_{j,i_L}^h\bigotimes_{i=i_1,\dots ,i_L}A_{j_1,i}^v\dots A_{j_L,i}^v \, ,
\end{align*}
where we identify the pairs in the following form $\{(j_1,i_1),(j_2,i_1), \dots ,(j_L,i_L) \equiv 1,\dots,g \equiv (1,1),\dots ,(d,d)\}$, 
span $(\mathbb{C}^n)^{\otimes 4L}$. Due to the special choice of the tensors, it is equivalent to prove that the terms
\begin{align*}
\bigotimes_{j=j_1,\dots, j_L}B_{i_1}\dots B_{i_L}\bigotimes_{i=i_1,\dots ,i_L}\tilde{B}_{j_1}\dots \tilde{B}_{j_L}
\end{align*}
with $\{j_1,i_1,j_2,i_1, \dots ,j_L,i_L=1,\dots,d\}$
span $(\mathbb{C}^n)^{\otimes 4L}$. Next, choose $L:=2\lceil \log_d n\rceil$. We can find $n^{4L}$ elements for a basis using the following procedure: For $j_1$ there exist $n^2$ elements in the set $\{B_{i_1}\cdots B_{i_L} |\, i_1,\dots, i_L=1,\dots,d\}$ forming a basis for $M_n(\bbC)$, and the same happens for each $j_2$ to $j_L$ and $i_1,\dots ,i_L$. Tensorizing the basis $2L$ times yields basis for $(\mathbb{C}^n)^{\otimes 4L}$.

In general, for arbitrary $g\geq 2^2$ we can take $d=\lfloor g^{1/2}\rfloor$ and choose the rest of the generators randomly. The space generated using 
this subset of $d^2$ generators is not larger than the one generated by $g$. Thus we still ensure that there exists a tensor such that, with 
\begin{align*}
    L\geq 2 \lceil \log _{\lfloor g^{1/2 }\rfloor} n\rceil \, ,
\end{align*}
we obtain already injectivity for the PEPS.

For a fixed $L\times L$ grid (or a more general region), the set $\mathcal{V}_L(\calA)$ containing the tensors $\calA$ for which the region is not injective, corresponds to the intersection of zero sets of some polynomials in the entries of $\calA$. More specifically, suppose $L\geq 2 \lceil \log _{\lfloor g^{1/2 }\rfloor} n\rceil$. We first notice that determining whether the generated tensors in $(\mathbb{C}^n)^{\otimes 4L}$ include a basis can be equivalently identified as deciding whether a $n^{4L}\times g^{L^2}$ matrix, whose columns consist of the vectorized tensors, has full rank. For our specific choice $\mathcal{A}_0=\sum_{k=1}^g A_k\ket{k}$, there exist $n^{4L}$ generated tensors that are linearly independent. We fix the choice of the words, and this time we look at the determinant of a $n^{4L}\times n^{4L}$ matrix by choosing these $n^{4L}$ elements out of the $g^{L^2}$ tensors. Then the determinant of this matrix is a polynomial over the entries of a tensor $\calA\in (\mathbb{C}^n)^{\otimes 4}\otimes \mathbb{C}^g$, abbreviated as $p(\calA)$. Since $p(\calA_0)\neq 0$, this immediately implies that the polynomial is not a zero polynomial. Moreover, $\mathcal{V}_L(\calA)$ is contained in the zero set of this non-zero polynomial (otherwise the $L\times L$ region would be injective for this tensor), which has null Lebesgue measure. 
As a consequence, $\mathcal{V}_L(\calA)$ has null Lebesgue measure, so we conclude that for almost all translation invariant PEPS with periodic boundary condition, the square grid with 
\begin{align*}
     L\geq 2 \lceil \log _{\lfloor g^{1/2 }\rfloor} n\rceil
\end{align*}
is an injective region.

\end{proof}

The argument 
    can be extended to any high-dimensional PEPS on squared lattices by a similar construction. In general, the tensor version of quantum Wielandt's inequality is defined on an $m$-dimensional square grid $\subset \mathbb{Z}^m$ for any $m\in \mathbb{N}$ in \cite{michalek2019tensor}. We do not use at any point that we work on a 2-dimensional lattice, except for the construction of tensor $\calA_0$. But in $m$-dimension, we can pick the tensor $\calA_0$ described by the tuple $\{A_1,\dots,A_g\}$, with the first $(\lfloor g^{\frac{1}{m}}\rfloor)^m$ terms $A_i\in (\mathbb{C}^n)^{\otimes 2m}$ fulfilling the form $\bigotimes_{k=1,\dots ,m}B_{k(i)}^k$ and $k(i)\in\{1,\dots, \lfloor g^{\frac{1}{m}}\rfloor\}$. 

\begin{cor}
    Consider a finite region $R\subset \mathbb{Z}^m$ of side length $L \in \mathbb{N}$ and $\ket{\psi}$ a PEPS which is translation invariant and with periodic boundary conditions as above. Suppose $g\geq 2^m$ denotes the physical dimension and $n$ refers to the bond dimension. Then for almost all $\calA\in (\mathbb{C}^n)^{\otimes 2m}\otimes \mathbb{C}^{g}$, the index of injectivity of $\calA$ satisfies 
    \begin{equation}
        i(\calA)\leq 2\lceil \log_{\lfloor g ^{1/m}\rfloor} n  \rceil  \,. 
    \end{equation}

\end{cor}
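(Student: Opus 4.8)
The plan is to mirror the proof of \cref{cor:generic_L_PEPS} verbatim, replacing the two-fold tensor factorization used there by an $m$-fold one, and then invoke the same Zariski-closedness argument to pass from ``existence of one good tensor'' to ``almost every tensor''. Concretely, set $d := \lfloor g^{1/m} \rfloor$, so that $d^m \le g$. By \cref{gen.Paz} we may fix $m$ generating systems $\{B^{(1)}_1, \dots, B^{(1)}_d\}, \dots, \{B^{(m)}_1, \dots, B^{(m)}_d\}$ of $M_n(\bbC)$, each with Wie-length $2\lceil \log_d n\rceil$. For the first $d^m$ indices $i$, encode $i$ by a tuple $(k_1(i), \dots, k_m(i)) \in \{1,\dots,d\}^m$ and set $A_i := \bigotimes_{k=1}^m B^{(k)}_{k(i)}$; fill the remaining $g - d^m$ slots of the tuple $\{A_1, \dots, A_g\}$ with arbitrary (say, generic) tensors, and let $\calA_0 = \sum_{k=1}^g A_k \ket{k}$.

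First I would check that, for this $\calA_0$ and for $L = 2\lceil \log_d n\rceil$, the region $R$ (an $L \times \dots \times L$ grid in $\mathbb{Z}^m$) is injective. Contracting the PEPS along each of the $m$ coordinate directions separately — exactly as in the 2D case, where vertical and horizontal contractions decoupled — the generated tensors in $(\bbC^n)^{\otimes 2mL^{m-1}}$ factor as an $m$-fold tensor product of objects of the form $B^{(k)}_{i_1} \cdots B^{(k)}_{i_L}$, one for each of the $L^{m-1}$ lines in direction $k$. Since each $\{B^{(k)}_1, \dots, B^{(k)}_d\}$ has Wie-length $L/2 \cdot 2 = 2\lceil\log_d n\rceil = L$, wait — the Wie-length is $2\lceil \log_d n \rceil$, and we have chosen $L$ equal to that value, so length-$L$ products of the $B^{(k)}_j$ already span $M_n(\bbC)$; hence one can select $n^2$ of them forming a basis along each line. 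Tensorizing these bases over all $m \cdot L^{m-1}$ lines yields $n^{2m L^{m-1}} = n^{|\partial R|}$ linearly independent tensors, so $\Gamma_R$ is injective for $\calA_0$. Using a strictly smaller subset of $d^m \le g$ generators cannot enlarge the span, so injectivity persists with the full $g$-tuple.

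Next I would run the standard genericity argument: fixing this choice of $n^{|\partial R|}$ words, the corresponding $n^{|\partial R|} \times n^{|\partial R|}$ minor of the word matrix is a polynomial $p(\calA)$ in the entries of $\calA \in (\bbC^n)^{\otimes 2m} \otimes \bbC^g$; since $p(\calA_0) \neq 0$, $p$ is not identically zero, so its zero set has null Lebesgue measure; and by the observation of \cite{michalek2019tensor} the set $\mathcal{V}_L(\calA)$ of tensors for which $R$ is not injective is contained in $\{p = 0\}$, hence also null. This gives $i(\calA) \le L = 2\lceil \log_{\lfloor g^{1/m}\rfloor} n\rceil$ for almost every $\calA$, using that once a region is injective every larger region is too \cite{michalek2019tensor}, so the index of injectivity is well-defined and bounded by $L$.

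The main obstacle — and the only place where the dimension $m$ genuinely enters — is verifying that the PEPS contraction on the $m$-dimensional torus still splits as a clean tensor product over the $m$ directions for the product ansatz $A_i = \bigotimes_k B^{(k)}_{k(i)}$. In 2D this was transparent from Figure \ref{fig:string-bond-PEPS}; in general one must be careful that the periodic boundary conditions and the bookkeeping of which virtual leg is contracted with which do not introduce cross-terms between directions. I expect this to go through because the string-bond construction is engineered precisely so that each tensor factor only ever meets leg-index pairs from its own direction, so that the global contraction is a product of $m \cdot L^{m-1}$ independent one-dimensional trace-products; but writing the index identification carefully (the analogue of the pair-labelling $\{(j_1,i_1),\dots\}$ used in the 2D proof) is where the bulk of the routine-but-fiddly work lies. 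Everything downstream of that — the counting of basis elements, the polynomial/Zariski argument, and the measure-zero conclusion — is identical to \cref{cor:generic_L_PEPS} and \cref{gen.Paz}.
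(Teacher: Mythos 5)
Your proposal follows exactly the same route as the paper, which disposes of this corollary in a short remark preceding it: keep the 2D proof of \cref{cor:generic_L_PEPS} intact and replace the two-factor product tensor $\calA_0$ by an $m$-fold product $A_i = \bigotimes_{k=1}^m B^{(k)}_{k(i)}$ built from $m$ Wie-generating systems with Wie-length $2\lceil\log_{\lfloor g^{1/m}\rfloor} n\rceil$, then run the same polynomial/Zariski argument. Your version is in fact more explicit than the paper's on the point you flag — that the toroidal contraction of the string-bond ansatz genuinely factorizes into $mL^{m-1}$ independent one-dimensional trace-products because each physical index $i$ decodes into an independently chosen $m$-tuple $(k_1(i),\dots,k_m(i))$ — and that verification is correct.
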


\begin{rem}
   For an $m$-dimensional square grid, 
   \begin{align*}
       \Gamma_{L,m}: (\mathbb{C}^n)^{\otimes 2mL^{m-1}}\rightarrow (\mathbb{C}^g)^{\otimes L^m} \, ,
   \end{align*}
    being injective implies that $i(\calA)\geq 2m\log _g n$ for any tensor $\calA$ (and the corresponding PEPS on the grid). Thus, our result is close to being optimal for the generic case. In particular, we observe that no matter how the dimension $m$ changes, in the generic case $i(\calA)=\Theta(\log n)$, namely ``the best case we can expect is already the average case''. However, the worst case for $m=2$ could potentially be enormous, since there does not exist yet a polynomial uniform upper bound on $i(\calA)$.
\end{rem}

\subsection{Primitivity Index of Quantum Channels}\label{subsec:capacities}

Consider a quantum channel $\mathcal{E}$, i.e. a completely positive trace-preserving linear map. We say that the channel is \textit{primitive} if there is an integer $\ell \in \mathbb{N}$ such that, for any positive semi-definite matrix $\rho$, the $\ell$-fold application of the channel to $\rho$ is positive definite; namely, if $\mathcal{E}^\ell(\rho) >0$ for every $\rho \geq 0$.  The minimum $\ell$ for which this condition is fulfilled is called \textit{index of primitivity} and is denoted by $q(\mathcal{E})$. 

There is a one-to-one correspondence between a quantum channel $\mathcal{E}$ and its Choi matrix $\omega(\mathcal{E})=( \operatorname{id}\otimes \mathcal{E})(\Omega)$ with $\Omega = \sum_{i,j=1}^n \ket{ii}\bra{jj}$. The rank of $\omega (\mathcal{E})$ is the \textit{Kraus rank} of the channel. The channel is said to have eventually full Kraus rank, if there exists some $n\in\bbN$ such that $\mathcal{E}^n$ has full Kraus rank, and the minimal such $n$ is denoted by $i(\calA)$, for $\calA=\{A_i\}_{i=1}^g$ the Kraus operators of $\mathcal{E}$. In \cite{sanz2010quantum}, it is shown that primitivity for a quantum channel is equivalent to eventually having full Kraus rank, with the index $i(\calA)$ lower bounded by the index of primitivity $q(\mathcal{E})$, and to being strongly irreducible, i.e. the channel having a unique eigenvalue of modulus $1$ whose corresponding eigenvector is positive definite. Differently from this, in the classical case all these notions, as well as the associated indices, coincide.  

It is not difficult to see that the index of eventual full Kraus rank for a quantum channel is equivalent to the Wie-length for its Kraus operators. Indeed, note that if $\mathcal{E}$ has Kraus operators $\{A_i \}_{i=1}^g$, i.e.
\begin{equation*}
    \mathcal{E} (X) = \underset{i=1}{\overset{g}{\sum}} A_i X A_i^\dagger \, ,
\end{equation*}
then $\mathcal{E}^m$ having full Kraus rank is equivalent to the condition that the set
\begin{equation*}
\text{span}\{ X_{1} \ldots X_m \, | \, X_i = A_j \, \text{ for } i \in [m] , j \in [g]  \} = M_n (\mathbb{C}  )   ,
\end{equation*}
 or, equivalently, $ \mathtt{Wie}\ell\left( \{ A_1, \ldots , A_g  \}\right)= i(\calA)$. 

In \cite{sanz2010quantum}, the authors show that, in general, the index of eventual full Kraus rank for a channel (and thus the index of primitivity) is of order $O(n^4)$. 
More specifically, they present some specific channels with $i(\calA)$ of order  ${O}(n^2)$, the best expected general order. Therefore, an implication of \Cref{gen.Paz} is that it notably improves such an order for the Kraus rank in the generic case.

\begin{cor}\label{cor:generic_Kraus_quantum_channels}
 Given a random quantum channel $\mathcal{E}: M_n (\mathbb{C}) \rightarrow M_n (\mathbb{C}) $ with Kraus operators $\calA=\{ A_1, \ldots, A_g  \} \subseteq M_n(\bbC)^g$ under the condition 
        $\sum_{i=1}^{g} A_i^\dagger A_i=\1$, the minimal repetitions $i(\calA)$ so that the channel eventually has full Kraus rank (and thus its index of primitivity $q(\mathcal{E})$) is of order $\Theta(\log n)$. 
\end{cor}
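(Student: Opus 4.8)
The plan is to sandwich $i(\calA)=\wiel(\calA)$ between a counting lower bound valid for \emph{every} primitive channel and a generic upper bound obtained by transporting \cref{gen.Paz} from the affine space $M_n(\bbC)^g$ to the submanifold cut out by the constraint $\sum_i A_i^\dagger A_i=\1$. For the lower bound: if $\mathcal{E}^k$ has full Kraus rank then the $g^k$ products $A_{j_1}\cdots A_{j_k}$ must span $M_n(\bbC)$, forcing $g^k\ge n^2$ and hence $i(\calA)\ge 2\log_g n$ (here $g\ge 2$, since a lone Kraus operator with $A_1^\dagger A_1=\1$ is unitary and never primitive for $n\ge 2$); likewise, evaluating $\mathcal{E}^k$ on a rank-one projector $\ket{v}\bra{v}$ and demanding a positive definite output forces the $g^k$ vectors $A_{j_1}\cdots A_{j_k}\ket{v}$ to span $\mathbb{C}^n$, so $q(\mathcal{E})\ge\log_g n$. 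Both are $\Omega(\log n)$.

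For the upper bound, I would first recall from the proof of \cref{gen.Paz} that the set $\mathcal{B}_n\subseteq M_n(\bbC)^g$ of tuples with $\wiel>2\lceil\log_g n\rceil$ lies in the common zero locus of the maximal minors of the ``word matrix'' $W$, and that \cref{gen.Paz} supplies one such minor which, as a polynomial $P$ in the real and imaginary parts of the entries, is not identically zero. The admissible Kraus tuples form the complex Stiefel manifold $\mathcal{M}_{n,g}:=\{(A_1,\dots,A_g):\sum_i A_i^\dagger A_i=\1\}\cong U(gn)/U((g-1)n)$, which is connected, compact and real-analytic, and on which the ``random channel'' law is absolutely continuous with respect to the Riemannian volume (e.g.\ the push-forward of Haar measure). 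Restricting $P$ to $\mathcal{M}_{n,g}$ gives a real-analytic function, so by the identity theorem it either vanishes identically or has a measure-zero zero set; since that zero set contains $\mathcal{B}_n\cap\mathcal{M}_{n,g}$, it suffices to exhibit \emph{one} admissible Kraus tuple with $\wiel=O(\log n)$.

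Producing such a witness is straightforward for $g\ge 3$: by \cref{gen.Paz} pick $B_1,\dots,B_{g-1}$ whose length-$2\lceil\log_{g-1}n\rceil$ words span $M_n(\bbC)$, rescale by a small $\varepsilon>0$ so that $\varepsilon^2\sum_{i<g}B_i^\dagger B_i<\1$ (rescaling all generators multiplies every length-$k$ word by $\varepsilon^k$ and so does not affect linear independence), and set $A_i=\varepsilon B_i$ for $i<g$ and $A_g=(\1-\varepsilon^2\sum_{i<g}B_i^\dagger B_i)^{1/2}$; then $\calA\in\mathcal{M}_{n,g}$ and its length-$2\lceil\log_{g-1}n\rceil$ words already contain those of $\{\varepsilon B_1,\dots,\varepsilon B_{g-1}\}$, so $i(\calA)=O(\log n)$. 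Combined with the measure-zero argument this yields $i(\calA)=O(\log n)$ almost surely, hence $i(\calA)=\Theta(\log n)$ almost surely, and since $q(\mathcal{E})\le i(\calA)$ always, the same holds for $q(\mathcal{E})$. For $g=2$ there is no Kraus operator left to absorb the normalization, so one instead takes $A_1=B_1$ of small norm and $A_2=U(\1-B_1^\dagger B_1)^{1/2}$ for a suitable unitary $U$ and checks directly that such pairs span in $O(\log n)$ steps; this case and the precise random-isometry model are the content of the appendix. The hard part is exactly this passage from the flat parameter space to the curved constraint manifold: one must know both that the failure locus is genuinely cut out by polynomials (so that ``measure zero on a connected real-analytic manifold'' is legitimate) and that it does not secretly engulf all of $\mathcal{M}_{n,g}$ — i.e.\ one needs an honest construction of a single channel meeting the bound, not merely a dimension count.
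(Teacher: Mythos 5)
Your proposal is correct and reaches the same conclusion, but by a genuinely different route than the paper's appendix. You restrict the determinantal polynomial $P$ from \cref{gen.Paz} to the constraint manifold $\mathcal{M}_{n,g}\cong V_n(\bbC^{gn})$, invoke real-analyticity and connectedness to reduce the measure-zero claim to non-vanishing at a single point, and then build that witness explicitly for $g\geq 3$: pick $B_1,\dots,B_{g-1}$ via \cref{gen.Paz}, rescale by a small $\varepsilon$, and absorb the normalization defect into $A_g=(\1-\varepsilon^2\sum_{i<g}B_i^\dagger B_i)^{1/2}$. The paper's appendix (Proposition~\ref{prop:zariski-dense}) instead proves by induction on $g$ that $W_{n,g}$ is Zariski-\emph{dense} in $M_n(\bbC)^g$ — base case: $W_{n,1}=U_n$ is Zariski-dense in $M_n(\bbC)$ via the density of $\T^n$ in $\bbC^n$ and the SVD (Lemmas~\ref{lem1}--\ref{lem2}); inductive step: for $0<\|M\|<1$ the slice $W_M$ is the image of $W_{n,g-1}$ under the linear bijection $X_i\mapsto X_iC^{1/2}$ with $C=\1-M^\dagger M$ — and then cites Arveson \cite{arveson} for the fact that a nonempty Zariski-open subset of $M_n(\bbC)^g$ meets $W_{n,g}$ in a set of full measure. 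The density route makes the existence of a witness automatic once $P\not\equiv 0$ on the ambient space, handles all $g\geq 1$ uniformly, and dodges the loose end you encounter at $g=2$, where no Kraus operator is left over to absorb the normalization and you only sketch a fix. Your explicit construction buys concreteness and is perfectly adequate for $g\geq 3$, at the small cost of a $2\lceil\log_{g-1}n\rceil$ bound rather than $2\lceil\log_g n\rceil$ — still $\Theta(\log n)$, but with a marginally worse constant.
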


The implication follows almost directly from \Cref{gen.Paz} and \cite{sanz2010quantum}, with the only exception being that the trace-preserving condition of a channel requires the Kraus operators to satisfy specific conditions. Therefore, one needs to adjust the proof to additionally state that, when considering the generating systems $S=\{A_1, \dots ,A_g\}\subseteq M_n(\bbC)$ with $g$ elements under the normalization condition $\sum_{i=1}^{g} A_i^\dagger A_i=\1$, the elements with length $\wiel(S)> 2\lceil\log_g n\rceil$ have measure zero with respect to the natural Lebesgue measure restricted to this subset.  A proof for this was pointed out by Aubrun and Jingbai and is included in the Appendix for convenience. Note that it holds with probability $1$.  Thus, even though it does not cover all quantum channels, it covers a subset of full measure. In particular, for experimental applications with random quantum channels, it would always be the case that the index of primitivity is $\Theta(\log n)$ as opposed to  $O(n^2)$.

As in the previous subsection, the best bound for the index of eventual full Kraus rank of any primitive quantum channel to date is of order ${O}(n^2 \log n)$ \cite{michalek2019quantum} and the conjectured bound (which has been recently proven at least for the index of primitivity of positive maps which also satisfy the Schwarz inequality, as shown in \cite{rahaman2020wielandt}) is as follows:

\begin{conj}[Conjecture on the injectivity of $\Gamma_L$]\label{conj:injectivityChannel}
  Let $\mathcal{E}: M_n (\mathbb{C}) \rightarrow M_n (\mathbb{C}) $ be a primitive quantum channel with Kraus operators $\calA= \{ A_1, \ldots , A_g  \}$. Then, the index of eventually having full Kraus rank $i(\calA)$ and its primitivity index $q(\mathcal{E})$ are of order $O(n^2)$. 
\end{conj}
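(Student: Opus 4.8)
By \cite{sanz2010quantum}, a quantum channel $\mathcal{E}$ with Kraus tuple $\calA=\{A_1,\dots,A_g\}$ is primitive iff it eventually has full Kraus rank iff $\calA$ is a Wie-generating system of $M_n(\bbC)$, and in that case $q(\mathcal{E})\le i(\calA)=\mathtt{Wie}\ell(\calA)$. Hence the statement is exactly \cref{conj:Wielandt} applied to the generating system $\calA$: the constraint $\sum_i A_i^\dagger A_i=\1$ only restricts which tuples occur (indeed every Wie-generating system $\{B_i\}$ is, up to conjugating by a positive fixed point $Q>0$ of the dual map $X\mapsto\sum_i B_i^\dagger XB_i$ and rescaling, the Kraus tuple of a primitive channel with the same Wie-length, so the two conjectures are equivalent). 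The plan is therefore to prove \cref{conj:Wielandt} and then quote \cite{sanz2010quantum}; I describe two complementary routes.

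\emph{Route A (linear-algebraic, via Paz).} Follow the two-step architecture of \cite{sanz2010quantum,michalek2019quantum}: (i) produce $k_1=O(n^2)$ and a linear combination $M=\sum_w c_w\,w$ of length-$k_1$ products of elements of $\calA$ with $\operatorname{tr}M\ne 0$ — a single word of length $O(n^2)$ with nonzero trace already exists by \cite{sanz2010quantum}; (ii) bootstrap from one short non-nilpotent element to a full basis, establishing $\mathtt{Wie}\ell(\calA)\le k_1+k_2$. \Cref{lem:paz_implies_wielandt} is the crude instance of (ii) with $k_2=O(n^2\ell)$, while the Michałek--Perales refinement (a Shitov-style sweeping-words argument on the rank filtration of the powers $M,M^2,\dots$) brings the overhead down to $k_2=O(n\log n)$, which is why the current worst-case bound is $O(n^2\log n)$. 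To reach $O(n^2)$ overall one must sharpen step (ii) to $k_2=O(n)$: from a non-nilpotent (ideally non-derogatory) length-$k$ element one should generate all rank-one matrices, hence a basis, in only $O(n)$ further multiplications. This is precisely the quantitative content of \cref{conj:Paz} together with the conjectured linear comparison $\mathtt{Wie}\ell(S)=O(n\,\ell(S))$ stated after \cref{lem:paz_implies_wielandt}.

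\emph{Route B (operator-theoretic, via the transfer map).} Primitivity of $\mathcal{E}$ means the transfer operator $T=\sum_i A_i\otimes\overline{A_i}$ on $\bbC^n\otimes\bbC^n$ has a unique eigenvalue of maximal modulus with a strictly positive eigenvector, so the associated Perron projection has full rank. In the spirit of \cite{rahaman2020wielandt}, one would quantify how fast $\mathcal{E}^\ell(\rho)$ fills out, bounding the smallest eigenvalue of $\mathcal{E}^\ell(\rho)$ below in terms of the spectral gap of $T$ and suitable norm estimates, and then show the worst-case gap is $\Omega(n^{-2})$ up to subpolynomial factors, which outputs $q(\mathcal{E})=O(n^2)$. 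This route succeeds when $\mathcal{E}$ additionally obeys the Schwarz inequality \cite{rahaman2020wielandt}; the open part is to treat genuinely completely positive maps whose $g\ge 2$ Kraus structure destroys Schwarz-type contractivity, where the present gap estimates degrade.

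The main obstacle, common to both routes, is the worst-case length-amplification step: every known method for promoting a single short non-nilpotent word to a spanning set loses at least a logarithmic factor — the same $\log n$ that separates \cite{shitov2019improved} and \cite{michalek2019quantum} from the conjectured linear and quadratic bounds — and removing it seems to require a genuinely new combinatorial handle on how the span of length-$\ell$ products grows, i.e. a resolution of \cref{conj:Paz} itself. I do not expect the trace-preservation constraint to help here; the honest strategy is to settle Paz's conjecture and the linear length/Wie-length comparison first, and only then specialize to channels, at which point the present statement follows from \cite{sanz2010quantum} with no further work.
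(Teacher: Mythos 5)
The statement you were asked to prove is, in the paper, explicitly a conjecture (\cref{conj:injectivityChannel}); the paper offers no proof of it and states that the best known bound is $O(n^2\log n)$ from \cite{michalek2019quantum}. Your proposal correctly recognizes this: the reduction via \cite{sanz2010quantum} identifying $i(\calA)$ with $\mathtt{Wie}\ell(\calA)$ is right, and your observation that the trace-preservation constraint is immaterial (every Wie-generating system becomes a Kraus tuple of a primitive channel after conjugating by the positive fixed point of the dual map, without changing the Wie-length) is the standard argument showing the channel conjecture is equivalent to \cref{conj:Wielandt}. Your Routes A and B are accurate summaries of the two known attack lines and of exactly where each one loses the extra $\log n$ factor.

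But to be clear: what you have written is not a proof, and you say as much in your final paragraph. The genuine gap is the one you name — no known method promotes a single short non-nilpotent word to a spanning set of equal-length words with only $O(n)$ additional length, and closing that gap is essentially equivalent to resolving \cref{conj:Paz} together with the linear comparison $\mathtt{Wie}\ell(S)=O(n\,\ell(S))$, both of which are open. So the correct assessment is that your proposal is a faithful and well-informed reduction of the conjecture to other open conjectures, matching the paper's own framing, rather than a proof of the statement. If you intended to submit this as a proof, it fails at the step you yourself flag; if you intended it as a survey of why the statement remains conjectural, it is accurate.
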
   

To conclude this section, we explore the implications of our result in the context of the zero-error capacity $C_0$ of a noisy quantum channel. This is given by the largest quantity such that there is a sequence of increasing-length blocks with rates of transmission approaching it and the probability of error after decoding is $0$. In \cite{sanz2010quantum}, they show a dichotomy behaviour for the zero-error capacity of a quantum channel $\mathcal{E}$ with a full-rank fixed point. As a consequence of (non-)primitivity, either of the two following options holds:
\begin{itemize}
    \item $C_0^n(\mathcal{E}) \geq 1$ for all $n \in \mathbb{N}.$
    \item $C_0^{q(\mathcal{E})}(\mathcal{E}) =0.$
\end{itemize}
By Corollary \ref{cor:generic_Kraus_quantum_channels}, we have that, with probability $1$, $q(\mathcal{E})$ can be taken to be of order $\Theta(\log n)$ in the previous dichotomy.

\section{The Length of a Lie Algebra}

Until now, we discussed that a subset of $M_n(\bbC)$ generates the whole matrix algebra and has a basis of words of length $O(\log n)$ almost surely. Moreover, the previous length matches the asymptotic lower bound provided by the free algebra, as shown by a simple counting argument. This phenomenon leads to the guess that different words with small length are actually linearly independent in $M_n(\bbC)$ with probability 1 unless the dimension saturates.

The purpose of this section is to study the analogous problem for Lie algebras, instead of finite-dimensional associative algebras. Here, we will introduce the Lie-length of a Lie algebra, describe the problem of finding the optimal Lie length, conjecture the optimal solution, provide some numerical evidence to support it in the generic case, and mention some possible applications of such an analytical result, which is left to study in future work.

\subsection{Preliminaries and Notations}

Recall that a \textit{Lie algebra} $(\calA, [\cdot,\cdot])$ is an algebra with an operation $[\cdot,\cdot]$ satisfying skew-symmetry and Jacobi identity, i.e. for all $a,b,c$ in $\calA$,
\begin{align*}
    &[a,b]+[b,a]=0,\, \\ &[a,[b,c]]+[b,[c,a]]+[c,[a,b]]=0. 
\end{align*}

Given a subset $U$ of a Lie algebra $\calA$, the \textit{Lie subalgebra} generated by $U$ is the smallest Lie subalgebra of $\calA$ containing $U$, denoted by $\ttLie\{U\}$, and $U$ is called a \textit{Lie-generating system} of $\calA$ if $\ttLie\{U\}=\calA$. A natural question that arises for finite-dimensional Lie algebras is the minimum required length $\ell \in\bbN$ such that brackets of length upper bounded by $\ell$ span the whole Lie algebra. We will call this $\ell$ the \textit{Lie-length} of the Lie algebra.

The first issue we might encounter is whether the notion of Lie-length is well-defined, since we lack associativity in this context. Nevertheless, we see that this is actually not a problem, as a consequence of  skew-symmetry and Jacobi-identity:

\begin{lem}[Nested brackets, \cite{bonfiglioli2011topics}, Theorem 2.15]\label{nestedbrackets}
	Let $\calA$ be a Lie algebra and $U\subseteq \calA$. Set
	\begin{align*}
	U_1&=\text{span}\{U\};\, U_n=\text{span} [U,U_{n-1}]\,  , \, n\geq 2.
	\end{align*}
	Then $\ttLie\{U\}=\text{span} \{U_n, n\in \mathbb{N}\}$, and 
	\begin{align*}
	[U_i,U_j]\subseteq  U_{i+j},\quad \text{    for all } i,j \in \mathbb{N}.
	\end{align*}
\end{lem}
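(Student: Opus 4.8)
The plan is to prove Lemma~\ref{nestedbrackets} in two stages: first the inclusion $[U_i,U_j]\subseteq U_{i+j}$ by a double induction, and then the identity $\ttLie\{U\}=\operatorname{span}\{U_n : n\in\bbN\}$ as a consequence. Write $V:=\operatorname{span}\{U_n : n\in\bbN\}$. The inclusion $\ttLie\{U\}\supseteq V$ is immediate once we know $V$ is a Lie subalgebra: each $U_n$ sits inside any Lie subalgebra containing $U$ (by induction on $n$, since $U_1=\operatorname{span}\{U\}$ and $U_n=\operatorname{span}[U,U_{n-1}]$), so $V$ is contained in every such subalgebra; conversely $V$ contains $U=U_1$, so if $V$ is closed under brackets then $V$ is itself one of the competing subalgebras and hence $V=\ttLie\{U\}$. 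Thus everything reduces to showing $[U_i,U_j]\subseteq U_{i+j}$ for all $i,j$, since this in particular gives $[V,V]\subseteq V$.

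First I would prove the key inclusion by induction on $i+j$. The base cases $i=1$ (any $j$) are essentially the definition together with the Jacobi identity: $[U_1,U_j]=[U,\operatorname{span}[U,U_{j-1}]]$, and one expands a generic bracket $[x,[y,z]]$ with $x,y\in U$, $z\in U_{j-1}$ using the Jacobi identity as $[x,[y,z]]=[[x,y],z]+[y,[x,z]]$; here $[[x,y],z]\in[U_2,U_{j-1}]$ and $[y,[x,z]]\in[U_1,U_j]$ — so this alone is circular and I instead set up the induction more carefully. The clean route is a single induction on $j$ with $i$ arbitrary: assume $[U_i,U_{j'}]\subseteq U_{i+j'}$ for all $i$ and all $j'<j$; then for $a\in U_i$ and a spanning element $[x,z]$ of $U_j$ with $x\in U_1$, $z\in U_{j-1}$, Jacobi gives
\begin{equation*}
[a,[x,z]] = [[a,x],z] + [x,[a,z]].
\end{equation*}
By the inductive hypothesis applied to the pair $(i,1)$, $[a,x]\in[U_i,U_1]\subseteq U_{i+1}$, hence $[[a,x],z]\in[U_{i+1},U_{j-1}]\subseteq U_{i+j}$ again by the inductive hypothesis (with $j'=j-1$). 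Similarly $[a,z]\in[U_i,U_{j-1}]\subseteq U_{i+j-1}$, so $[x,[a,z]]\in[U_1,U_{i+j-1}]\subseteq U_{i+j}$, where the last step uses the base case $i'=1$. So I do still need the base case $[U_1,U_m]\subseteq U_{m+1}$ handled separately, but that one is literally the definition $U_{m+1}=\operatorname{span}[U,U_m]$ together with bilinearity, so it is not circular. Both $i=1$ and general $i$ by induction on $j$ then combine to give the statement for all $i,j$.

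The main obstacle — and the only genuinely delicate point — is organizing the induction so that the Jacobi-identity expansion never appeals to a case of equal or larger complexity than the one being proved; the computation above shows the right bookkeeping is "induct on $j$, base case $j=1$ by definition, and only invoke the hypothesis at strictly smaller $j$ (namely $j-1$) or at the definitional case $i=1$." Once that is in place, skew-symmetry is only needed to symmetrize (to get $[U_i,U_j]=[U_j,U_i]$ so that the roles of the two arguments are interchangeable, which lets us assume the $U_1$-factor is on whichever side is convenient when expanding). Everything else is routine bilinearity. Finally, I would note that the lemma is quoted from \cite{bonfiglioli2011topics}, so in the paper it suffices to cite it; the above is the argument one would reconstruct.
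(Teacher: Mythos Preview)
Your argument is correct. The paper does not give its own proof of this lemma: it is stated with a citation to \cite{bonfiglioli2011topics} and used without further justification, exactly as you anticipate in your final remark. So there is nothing to compare against beyond noting that your reconstruction is the standard one.

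One small comment on presentation: the first attempt you sketch (treating $i=1$ via Jacobi) is indeed circular, as you recognize; you could simply delete that paragraph and start from ``The clean route\ldots'' without loss. In the working induction, the only point worth flagging explicitly is that the step $[x,[a,z]]\in[U_1,U_{i+j-1}]\subseteq U_{i+j}$ appeals to the second index $i+j-1$, which may exceed $j$ and so is \emph{not} covered by the inductive hypothesis on $j$; you handle this correctly by observing that $[U_1,U_m]\subseteq U_{m+1}$ holds for every $m$ directly from the definition (together with bilinearity to pass from $U$ to $U_1=\operatorname{span}\{U\}$), so no induction is needed there. With that clarified, the induction closes cleanly and the identification $\ttLie\{U\}=V$ follows as you describe.
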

All brackets thus are linear combinations of right-nested commutators (i.e. commutators of the form $[u_1,[u_2, \dots[u_{k-1},u_k]]]]$) of the same length.

Therefore, we can define the \textit{Lie-length} of a Lie generating system $U$ as
following.

\begin{defi}
 The \textit{Lie-length} of a Lie-generating system $U$ of a finite-dimensional Lie-algebra $\calA$ is the minimal length of the (right-nested) brackets such that the brackets span the whole Lie-algebra, namely
    \begin{align}
     \ttLie\ell(U)=\min \{ \ell | \calA=\text{span} \{U_n, n\leq \ell\}\} , 
 \end{align}
 where  $U_n$  is defined as:
\begin{align*}
	U_1&=\text{span}\{U\};\, U_n=\text{span} [U,U_{n-1}]\,  , \, n\geq 2.
	\end{align*}
\end{defi}

 \subsection{Overview on Lie-length of Matrix Lie Algebras}

As in the case of matrix algebras, we are interested in the study of the Lie-length of certain Lie algebras. In particular, we focus here on  $\fraksu(n)$. As far as we know, there is no general upper bound for any Lie-generating system in the literature, and even if it existed, it could not be better than $O(n \log n)$, since the commutators are themselves linear combination of words, so no general result could outperform the one for associative algebras.

Regarding the lower bound, Witt's formula \cite{serre2009lie, reutenauer2003free} provides the maximal number of linearly independent length-$k$ brackets by a counting argument, performed by only taking into consideration the basic properties of the Lie algebra, namely skew-symmetry and Jacobi identity.  Its formal version appears below.

\begin{thm}[Witt, \cite{serre2009lie}, Theorem 4.2]
	Suppose $X$ has cardinality $g$, and $a_k$ is the dimension of the homogeneous part of degree $k$ of the free Lie algebra over $X$. Then, 
	\begin{align}
	a_k=\frac{1}{k}\sum_{d|k} \mu(d)g^{k/d} \, ,
	\end{align}
	where $\mu: \bbN_+\rightarrow \{-1,0,1\}$ denotes the Möbius function
	\begin{align}
	    \mu(d)= \sum_{\substack{1\leq k \leq d \\ \text{gcd}(k,d)=1}} e^{2\pi i \frac{k}{d}} \, .
	\end{align}
\end{thm}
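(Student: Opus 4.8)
The plan is to derive Witt's formula from the Poincaré--Birkhoff--Witt (PBW) theorem combined with Möbius inversion; this is the classical argument, and I sketch the steps below.

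\textbf{Step 1 (Setup).} I would write $L = L(X)$ for the free Lie algebra on the alphabet $X$ with $|X| = g$, graded by word-length, with degree-$k$ homogeneous component $L_k$, so that $a_k = \dim L_k$ (finite for every $k$). Let $A = A(X) = \bbK\langle X\rangle$ be the free associative algebra on $X$, graded by degree; a basis of $A_n$ is the set of length-$n$ words in the letters, so $\dim A_n = g^n$.

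\textbf{Step 2 (PBW and a product formula).} The key structural input is that the universal enveloping algebra $U(L(X))$ is isomorphic, as a graded algebra, to $A(X)$: the Lie subalgebra of $\bbK\langle X\rangle$ generated by $X$ is free on $X$, and $X$ generates $\bbK\langle X\rangle$ associatively, so the inclusion extends to an isomorphism $U(L(X)) \cong A(X)$. Fixing a homogeneous, totally ordered basis of $L$, the PBW theorem identifies a basis of $U(L(X)) = A(X)$ with the ordered monomials in these basis elements. Counting these monomials by total degree yields the formal power series identity
\begin{equation*}
  \frac{1}{1-gt} \;=\; \sum_{n\geq 0} g^n t^n \;=\; \prod_{k\geq 1}\frac{1}{(1-t^k)^{a_k}} \, .
\end{equation*}

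\textbf{Step 3 (Logarithmic expansion and Möbius inversion).} Taking $-\log$ of both sides and expanding each factor gives
\begin{equation*}
  \sum_{m\geq 1}\frac{g^m}{m}\,t^m \;=\; \sum_{k\geq 1} a_k \sum_{j\geq 1}\frac{t^{kj}}{j} \, ,
\end{equation*}
and comparing the coefficient of $t^n$ yields $g^n/n = \sum_{kj=n} a_k/j$, i.e. $g^n = \sum_{d\mid n} d\, a_d$. Applying Möbius inversion to $F(n) = g^n = \sum_{d\mid n} f(d)$ with $f(d) = d\,a_d$ then gives $k\,a_k = \sum_{d\mid k}\mu(k/d)\,g^d = \sum_{d\mid k}\mu(d)\,g^{k/d}$, which is exactly the claimed formula $a_k = \frac{1}{k}\sum_{d\mid k}\mu(d)\,g^{k/d}$.

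\textbf{Main obstacle.} Everything after Step 2 is routine power-series bookkeeping; the real content sits in Step 2, namely the identification $U(L(X)) \cong \bbK\langle X\rangle$ --- equivalently the fact, of Friedrichs / Dynkin--Specht--Wever type, that $X$ generates a \emph{free} Lie algebra inside $\bbK\langle X\rangle$ --- together with the graded PBW basis theorem. If one prefers to avoid PBW, an alternative I would use is the Lyndon basis: the bracketings of the Lyndon words of length $k$ form a basis of $L_k$, so $a_k$ is precisely the number of such Lyndon words, and the necklace identity $g^n = \sum_{d\mid n} d\,\ell_d$ (with $\ell_d$ the number of Lyndon words of length $d$ over a $g$-letter alphabet) plus Möbius inversion delivers the same formula.
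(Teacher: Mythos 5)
The paper states Witt's theorem as a quoted result, citing Serre's \emph{Lie Algebras and Lie Groups} (Theorem 4.2), and gives no proof of its own. Your derivation via the graded isomorphism $U(L(X)) \cong \bbK\langle X\rangle$ from PBW, the resulting product formula $\frac{1}{1-gt} = \prod_{k\geq 1}(1-t^k)^{-a_k}$, a logarithmic expansion, and Möbius inversion is correct and is precisely the classical argument appearing in the cited reference; the alternative you mention via Lyndon words and necklace counting is the other standard route and is equally valid.
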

In the particular case of a matrix Lie algebra of dimension $\Theta(n^2)$, this shows an asymptotic lower bound of order $\Omega(\log n)$ on the Lie-length of a generating set containing $n\times n$ matrices \cite{lloyd2019efficient}. This lower bound matches perfectly the generic case of the matrix algebra setting and the numerical result obtained for Lie algebras that we will explain below. 

 \subsection{A Generic Upper Bound on Lie-length of Matrix Lie Algebras}
 
  As we can construct a basis for the Lie algebra by restricting to right-nested brackets, we could search for a basis with minimal length through a tree structure algorithm, like the ``Lie-Tree'' algorithm from \cite{elliott2009bilinear}, which does a breadth-first search and computes the Lie-length as follows: 
  \begin{itemize}
      \item  At each step, the length increases by one and we compute a new set of right-nested commutators.
      \item We consider one of them, evaluate it as a matrix, and discard it if it is linearly dependent on the previous matrices. 
      \item We repeat this with all the new right-nested commutators. 
  \end{itemize}
  The algorithm stops when there are enough basis elements or the length reaches the dimension itself (which means that the set does not generate the Lie algebra). In comparison to the numerical experiment for matrix algebras, at each step, we expect linearly dependent terms for any matrix Lie algebra due to the Jacobi identity and skew-symmetry. 

\begin{figure}[h!]
	\centering
	\includegraphics[scale=0.25]{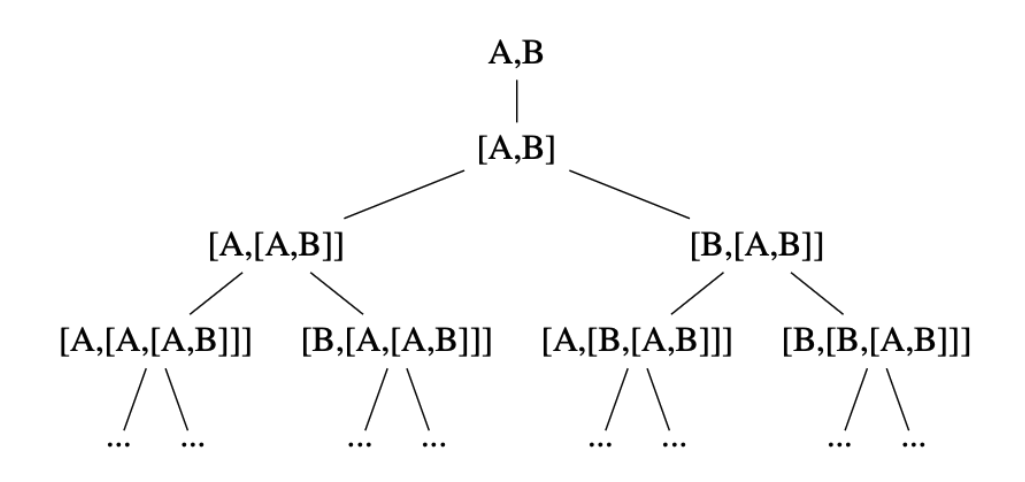}
	\caption{Breadth-first search along the Lie Tree for a Lie-generating pair. The obvious linearly dependent terms in the first two layers are omitted.}\label{Lietreegraph}
\end{figure}

When testing this algorithm for random pairs in $\fraksu(n)$ for $n\leq 20$ (see Figure \ref{graph:su(n)}), we observe that the Lie-length scales as $\Theta(\log n)$ and it does not change when we randomly choose another initial pair. Moreover, Witt's formula \cite{serre2009lie,reutenauer2003free} shows that $\Omega(\log n)$ is a lower bound on the Lie-length. Let us remark that similar numerical results with the same asymptotic behavior could be obtained for $gl(n,\bbR)$, $gl(n,\bbC)$, $\mathfrak{o}(n)$, $\mathfrak{u}(n)$, $\mathfrak{so}(n)$ as well. As they are completely analogous, we omit the details for other matrix Lie algebras and focus on $\fraksu(n)$ hereafter.

\begin{figure}[h!]
	\centering
	\includegraphics[scale=0.5]{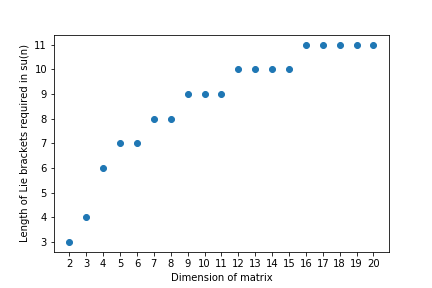}
	\caption{The length required to enable random pairs of traceless skew-Hermitian $n\times n$ complex matrices to generate $\fraksu(n)$, which scales as $\Theta(\log n)$.}\label{graph:su(n)}
\end{figure}

The aforementioned numerical result supports the conjecture that generically the length of a Lie-generating system should match at least asymptotically the bound given by the counting argument, or, in other words, that all linearly independent Lie brackets almost surely output linearly independent matrices, as long as the dimension has not saturated.

\begin{conj}\label{conj:lielength}
   Let $S$ be a random Lie-generating set of $\fraksu(n)$, then 
   \begin{align}
       \ttLie\ell(S)=\Theta(\log n)\,\,\text{almost surely.}
   \end{align}
\end{conj}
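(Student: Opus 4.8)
\textbf{Towards a proof of \cref{conj:lielength}.} The lower bound requires no randomness: by Witt's formula the number of linearly independent right-nested brackets of length at most $\ell$ in $g$ generators is at most $\sum_{k=1}^{\ell} a_{k}\leq\sum_{k=1}^{\ell} g^{k}$, and since these must exhaust $\dim\fraksu(n)=n^{2}-1$ we get $\ell\geq(2-o(1))\log_{g}n$. So the real content is the matching upper bound $\ttLie\ell(S)=O(\log n)$ for almost every $S$, and the plan is to reproduce the architecture of the proof of \cref{gen.Paz}: first reduce an almost-everywhere statement to the non-vanishing of a single polynomial, and then exhibit one explicit good Lie-generating system.

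\textbf{Step 1: the algebraic-geometry reduction.} Fix $g\geq 2$ and set $\ell:=2\lceil\log_{g}n\rceil$ (any fixed multiple would do). Choosing a basis $e_{1},\dots,e_{n^{2}-1}$ of $\fraksu(n)$ and writing $S=(X_{1},\dots,X_{g})\in\fraksu(n)^{g}$ in coordinates, each right-nested bracket $[X_{i_{1}},[X_{i_{2}},[\cdots,[X_{i_{k-1}},X_{i_{k}}]\cdots]]]$ with $k\leq\ell$ is a vector of $\fraksu(n)$ whose coordinates are polynomials in the $g(n^{2}-1)$ real coordinates of $S$. Stacking all $N:=\sum_{k\leq\ell}g^{k}$ of these vectors as the columns of an $(n^{2}-1)\times N$ matrix $M(S)$, \cref{nestedbrackets} shows that $\text{span}\{U_{n}:n\leq\ell\}=\fraksu(n)$ is \emph{exactly} the condition $\operatorname{rank}M(S)=n^{2}-1$, i.e.\ the non-vanishing of at least one maximal minor of $M(S)$; each such minor is a polynomial in the coordinates of $S$. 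Hence $\{S:\ttLie\ell(S)>\ell\}$ lies in the common zero locus of finitely many polynomials, and as soon as one of them is not identically zero, this bad set has Lebesgue measure zero in $\fraksu(n)^{g}\cong\bbR^{g(n^{2}-1)}$ (one may equivalently complexify to $\mathfrak{sl}(n,\bbC)^{g}$, which is irreducible, and transfer back). So everything reduces to producing a \emph{single} tuple $S_{0}\in\fraksu(n)^{g}$ whose right-nested brackets of length at most $\ell=O(\log n)$ already span $\fraksu(n)$.

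\textbf{Step 2: constructing $S_{0}$ --- the main obstacle.} This is where the difficulty concentrates, and it is a genuinely nontrivial combinatorial problem, as the ``Lie-Tree'' picture already suggests: skew-symmetry and the Jacobi identity create many linear dependencies, and in particular iterating a single $\operatorname{ad}_{X}$ only yields $O(n^{2})$-dimensional growth, reproducing the worst case rather than the logarithmic one. The natural route is to adapt the graph-and-word argument of \cite{KlepSpenko_SweepingWords_2016} that underlies \cref{gen.Paz}. Expanding the right-nested bracket $[X_{i_{1}},[X_{i_{2}},[\cdots,X_{i_{k}}]\cdots]]$ into a signed sum of length-$k$ words in the $X_{i}$, one notes that it carries a distinguished extremal monomial (the word $X_{i_{1}}\cdots X_{i_{k}}$, or its reversal) with coefficient $\pm1$; the task is then to choose the generators $S_{0}$ --- say a weighted shift together with a diagonal matrix, or a skew-Hermitian perturbation of the explicit pair used in \cite{KlepSpenko_SweepingWords_2016} --- together with a list of $n^{2}-1$ bracket shapes of length $\ell$ so that these extremal monomials are pairwise distinct and dominate, across the whole list, the remaining terms of the word-expansions. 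If such a dominance can be arranged, the associated $(n^{2}-1)\times(n^{2}-1)$ minor of $M(S_{0})$ is essentially triangular, hence a nonzero polynomial evaluated at $S_{0}$, and Step 1 closes; that $S_{0}$ genuinely Lie-generates $\fraksu(n)$ rather than a proper subalgebra is then automatic. An alternative, indirect attempt is to push a spanning set of length-$\ell$ \emph{words} for $M_{n}(\bbC)$ (supplied by \cref{gen.Paz}) through a Dynkin-type bracketing projection and control that the projection does not collapse the span onto a lower-dimensional subalgebra; but there is no Lie analogue of the stabilization in \cref{lemma:wielandt_length_stabilizes} to lean on, so tracking the collapse looks at least as delicate as the direct construction.

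\textbf{Step 3: other matrix Lie algebras.} The scheme is insensitive to the particular simple algebra: only word-combinatorics and a dimension count enter, so the same argument should give $\ttLie\ell(S)=\Theta(\log n)$ almost surely for $\mathfrak{gl}(n,\bbR)$, $\mathfrak{gl}(n,\bbC)$, $\fraku(n)$, $\frako(n)$ and $\mathfrak{so}(n)$ as well, the implied constant being governed by the dimension of the algebra, and passing to the right-nested normal form costs nothing by \cref{nestedbrackets}. In short, the only genuinely missing ingredient for a full proof of \cref{conj:lielength} is the explicit tuple $S_{0}$ of Step 2 (or an equally effective existence argument), and that is the hard part.
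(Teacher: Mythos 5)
The statement in question, \cref{conj:lielength}, is explicitly left as an \emph{open conjecture} in the paper --- the paper offers no proof, only numerical evidence and a partial reduction --- and your proposal correctly reflects this. Your Step~1 coincides with the paper's own argmin-measure-zero lemma, which uses exactly the determinant-of-brackets polynomial argument you describe (evaluate a fixed list $w_1,\dots,w_{n^2-1}$ of bracket shapes at a tuple $S$, stack the resulting vectors as columns, take the determinant, and invoke non-vanishing at one witness $S_0$), and your Step~2 is exactly the gap the paper itself names in its closing remark: ``the missing ingredient here is finding `good' candidates for $w_1,\dots,w_{n^2-1}$ with an optimal length.'' So your assessment is correct and tracks the paper's approach faithfully. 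The extra material you add in Step~2 --- adapting the extremal-monomial graph argument of \cite{KlepSpenko_SweepingWords_2016} by exploiting the leading word $X_{i_1}\cdots X_{i_k}$ inside each right-nested bracket, and the observation that one cannot simply push length-$\ell$ word bases for $M_n(\bbC)$ through a bracketing projection because there is no Lie analogue of the stabilization in \cref{lemma:wielandt_length_stabilizes} --- is not in the paper, but it is a sensible and compatible elaboration of the open problem rather than a deviation from the paper's strategy.
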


An analytical proof for this conjecture, though, still remains open. Nevertheless, we can show that most generating systems (with the same number of elements) have the same Lie-length, i.e. most generating systems are generic. 

\begin{lem}
    The elements in $\fraksu(n)^g$ that are not in $\underset{S\in \fraksu(n)^g}{\mathrm{argmin}}\,\mathtt{Lie}\ell(S)$ have measure zero in $\fraksu(n)^g$.
\end{lem}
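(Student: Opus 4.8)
The plan is to show that the bad set --- the set of $g$-tuples whose Lie-length strictly exceeds the generic (minimal) value --- is contained in a proper Zariski-closed subset of $\fraksu(n)^g$, and hence has Lebesgue measure zero. Let $\ell_0 := \min_{S \in \fraksu(n)^g} \ttLie\ell(S)$, where the minimum is taken over those $S$ that are Lie-generating (this minimum is well-defined and finite: by Lemma \ref{nestedbrackets} the Lie-length is bounded by $\dim \fraksu(n) = n^2-1$, and the set of Lie-generating systems is nonempty). The claim is equivalent to saying that the set $\{ S \in \fraksu(n)^g : \ttLie\ell(S) > \ell_0 \}$ has measure zero. Note this bad set consists of two pieces: the tuples that do generate $\fraksu(n)$ but only with brackets of length $> \ell_0$, and the tuples that fail to generate $\fraksu(n)$ at all; both must be shown to be negligible.

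First I would fix the real-vector-space structure: $\fraksu(n)$ is a real vector space of dimension $N := n^2-1$, so $\fraksu(n)^g \cong \bbR^{gN}$, and ``Lebesgue measure zero'' refers to this space. The key observation is that, by Lemma \ref{nestedbrackets}, $U_{\leq \ell_0} := \mathrm{span}\{U_m : m \leq \ell_0\}$ is the span of a fixed finite list of right-nested commutators $[x_{i_1},[x_{i_2},\dots,[x_{i_{m-1}},x_{i_m}]\dots]]$ of length $m \leq \ell_0$ in the generators $x_1,\dots,x_g$, and each such commutator, evaluated on a tuple $S = (X_1,\dots,X_g)$, is a fixed multilinear (hence polynomial, with real coefficients) map $S \mapsto \bbR^{N}$. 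Collect all these commutator-vectors (there are finitely many, say $M$ of them) as the columns of an $N \times M$ matrix $\mathcal{M}(S)$ whose entries are polynomials in the $gN$ real coordinates of $S$. Then $\ttLie\ell(S) \leq \ell_0$ holds if and only if $\mathrm{rank}\,\mathcal{M}(S) = N$, i.e. if and only if some $N \times N$ minor of $\mathcal{M}(S)$ is nonzero. Now, since $\ell_0$ is the generic Lie-length, there exists at least one tuple $S_0$ with $\ttLie\ell(S_0) = \ell_0$, hence $\mathrm{rank}\,\mathcal{M}(S_0) = N$, hence at least one $N\times N$ minor, call it the polynomial $P$, satisfies $P(S_0) \neq 0$. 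Therefore $P$ is not the identically-zero polynomial on $\bbR^{gN}$, so its zero set $\{P = 0\}$ is a proper algebraic subvariety and has Lebesgue measure zero (this is the same Zariski-closedness argument used in the proof of Theorem \ref{gen.Paz}). Finally, $\{S : \ttLie\ell(S) > \ell_0\} \subseteq \{S : \mathrm{rank}\,\mathcal{M}(S) < N\} \subseteq \{P = 0\}$, which completes the argument. In particular this simultaneously handles the non-generating tuples, since they also lie in $\{P=0\}$.

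One subtlety I would make explicit: to speak of ``the'' finite list of right-nested commutators of length $\leq \ell_0$, one should note that the number of such commutators is finite (bounded by $\sum_{m=1}^{\ell_0} g^m$, before accounting for the redundancies coming from skew-symmetry and Jacobi), so $M$ is indeed finite and $\mathcal{M}(S)$ is an honest finite polynomial matrix. A second point: although $\fraksu(n)$ is only a real form of $\fraksu(n)_\bbC \cong \mathfrak{sl}_n(\bbC)$, one does not need to complexify --- the minors $P$ are real polynomials in real variables, and the statement ``zero set of a nonzero real polynomial has measure zero'' is elementary (e.g. by Fubini/induction on dimension), so no algebraic-geometry machinery over $\bbC$ is required; alternatively one could complexify and invoke irreducibility of $\bbC^{gN}$ exactly as in the matrix-algebra case.

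The main obstacle, such as it is, is purely one of bookkeeping rather than of mathematical depth: one must be careful that the relevant ``length $\leq \ell_0$'' span is genuinely the span of a \emph{fixed, tuple-independent} finite collection of polynomial maps, so that the minors $P$ are fixed polynomials whose non-vanishing at $S_0$ is meaningful. This is exactly what Lemma \ref{nestedbrackets} guarantees (the $U_m$ are spanned by right-nested commutators, whose number and combinatorial shape do not depend on $S$), so once that reduction is spelled out, the proof is a direct transcription of the Zariski-closedness argument already used for Theorem \ref{gen.Paz}. There is no genuinely hard step; the only thing to resist is the temptation to prove the stronger Conjecture \ref{conj:lielength} (that $\ell_0 = \Theta(\log n)$), which is \emph{not} needed here --- the lemma asserts only that whatever the generic value $\ell_0$ is, almost every tuple attains it.
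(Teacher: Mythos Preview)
Your proposal is correct and follows essentially the same approach as the paper's proof: both fix the minimal Lie-length $\ell_0$, observe that the right-nested brackets of length $\leq \ell_0$ are polynomial maps in the coordinates of $S$, use a witness tuple $S_0$ to exhibit a nonzero $(n^2-1)\times(n^2-1)$ determinant/minor, and conclude via the measure-zero-zero-set argument. The only cosmetic difference is that the paper directly selects the $n^2-1$ specific brackets that span for $S_0$ and takes the determinant of the resulting square matrix, whereas you form the full $N\times M$ matrix of all commutators and then pick a nonvanishing minor --- these are equivalent, since the paper's square matrix is precisely one of your minors.
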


\begin{proof}
   Let us denote
   \begin{align*}
       l:= \underset{S\in \fraksu(n)^g}{\mathrm{min}}\,\mathtt{Lie}\ell(S).
   \end{align*} 
   The minimum is attained since the length takes value in $\bbN$ and has a non-zero lower bound. Next, $\fraksu(n)$ can be regarded as an ($n^2-1$)-dimensional $\bbR$-vector space with basis
   \begin{align*}
       \{i(E_{hh}-E_{h+1,h+1}), E_{jk}-E_{kj}, i(E_{jk}+E_{kj})\}
   \end{align*}  
   with $h=1,\dots,n-1$ and $1\leq j<k\leq n$. Note that $E_{jk}$ refers to the $n\times n$ matrix with only $1$ on the position $(j,k)$ and $0$ elsewhere.
   Consider $\varphi: \bbR^{n^2-1}\rightarrow \fraksu(n)$ the isomorphism denoting linear combinations over the basis. Let $S:=\{\varphi(X_1), \dots, \varphi(X_g)\}$ be a Lie-generating set which has minimal Lie-length. Then there exist brackets $w_1,\dots,w_{n^2-1}$ with length upper bounded by $l$ such that $\{\varphi^{-1}\left[\,w_h\left(\varphi(X_1), \dots, \varphi(X_g)\right)\right]\,|\,h=1,\dots, n^2-1\}$ span the whole $\bbR^{n^2-1}$. We note that this constitutes a polynomial over $g(n^2-1)$ real numbers since the structure constants are either 0 or 1 or -1.  
   
   We define a function $P: \bbR^{g(n^2-1)} \rightarrow \bbR$ 
   \begin{align*}
       P:=\det \left[\,W(w_1,\dots,w_{n^2-1}, \cdot,\cdot,\dots,\cdot,\cdot)\right]\,
   \end{align*}
   where $W$ is the $(n^2-1)$-dimensional matrix with $\varphi^{-1}\left[\,w_h\left(\varphi(X_1), \dots, \varphi(X_g)\right)\right]\,$ as $h$-th column. $P$ is then a polynomial, and since $P(X_1,\dots,X_g)\neq 0$, it is not identically zero. The zero set of $P$ is thus of Lebesgue measure 0. Since the pullback of the measure under $\varphi$ on $\fraksu(n)$ is absolutely continuous with respect to Lebesgue measure on $\bbR^{n^2-1}$, elements $S\in\fraksu(n)^g$ such that $w_h\left(S_1, \dots, S_g\right)$ with $h\in\{1,\dots,n^2-1\}$ are linearly dependent form a measure zero set in $\fraksu(n)^g$. Hence, almost all subsets with $g$ elements have length bounded by $l$. Together with the fact that $l$ is the minimum, the claim follows. 
\end{proof}

We note that the proof strategy can be generalized straightforwardly to various types of matrix Lie algebras. This explains why the numerical simulations always output the minimal length, namely the ``generic'' Lie-length, by taking random generating systems in the numerical tests. Additionally, these numerical experiments show that, in small dimensions, the Lie-length grows as $\Theta(\log n)$ generically. So for a complete analytical proof, the missing ingredient here is finding `good' candidates for $w_1,\dots,w_{n^2-1}$ with an optimal length.

\subsection{Applications}

A better understanding of the Lie-length of a matrix Lie algebra might yield interesting applications in various contexts, such as for the infinitesimal generation of unitary transformations, to simplify the Baker-Campbell Hausdorff formula, or for the ultracontractivity of Hörmander fields. 

\vspace{0.1cm}

\paragraph{Infinitesimal generation of unitary transformations.}
One example of potential applications of a positive solution to our conjecture for the Lie-length of Lie algebras in the generic case is in the context of infinitesimal generation of unitary transformations. Let us consider a unitary transformation $U\in SU(d)$. We generate the unitary by alternatively applying some given Hamiltonians. For example in case of two Hamiltonians $A$ and $B$, such that we assume $-iA$ and $-iB$ generate $\fraksu(d)$, the propagator will be of the form
\begin{align}\label{problemunitarycontrol}
U(\vec{t},\vec{\tau})=e^{-iB\tau_N}e^{-iAt_N}\dots e^{-iB\tau_1}e^{-iAt_1}.
\end{align} 
In \cite{lloyd2019efficient}, the authors observed this model and made the same conjecture on the Lie-length of generic pairs $A$ and $B$ through numeric verification. As described in \cite{lloyd2019efficient}, the Lie-length of these given Hamiltonians as a generating system gives an upper bound on the minimal time required to implement a desired quantum gate while applying sequences of these Hamiltonians, and the overall performance $\ttLie\ell=O(\log d)$ suggests a time scaling as $O(d^{2} \log d)$ for computing the unitary transformation. 

\vspace{0.1cm}

\paragraph{Baker-Campbell-Hausdorff formula.}
Moreover, such results on the Lie-length of a Lie algebra might as well be related to the Baker-Campbell-Hausdorff (BCH) formula. Given a finite-dimensional Lie algebra and two elements $X$ and $Y$, the BCH formula allows us to write $\log (\exp(X)\exp(Y))$ as an infinite series of right-nested commutators of $X$ and $Y$. Therefore, finding a basis of brackets with small length might allow us to find new expressions, which are more easily computable for this series. In theory, it would tell us that we can truncate the infinite series at terms of length up to order $O(\log n)$. However, since our result would not be constructively explicit (as the analogous result for matrix algebras is not either) on how the long right-nested commutators are generated from the shorter ones, this application might not be practically feasible from such an analytical result.

\vspace{0.1cm}

\paragraph{Ultracontractivity of Hörmander fields.}
Last, but not least, we would like to mention the application of Lie-length on ultracontractivity of Hörmander fields. For a compact matrix Lie group/Riemannian manifold $G$ with $n$-dimensional Lie algebra/tangent space $\frakg$, let us consider the sub-Laplacian $\Delta_X= \sum_{i=1}^{k} X_i^*X_i$ with a Lie-generating system $S:=\{X_i|i=1,\dots,k\}$ (in other words, satisfying \textit{Horm\"ander's} condition). Let $P_t=e^{\Delta_X t}$, then
        \begin{align*}
            \|P_t: L_1(G)\rightarrow L_\infty (G)\| \leq C t^{-n\operatorname{Lie}\ell(S)/2} \; \text{ for }0<t<1.
        \end{align*} 
The Lie-length of the system works as a parameter in the subelliptic estimate, while it gives a bound on the distance between the semigroup generated by the sub-Laplacian and some smooth function on $G$. For further details, we refer to Chapter 3 and 4 of \cite{varopoulos_saloff-coste_coulhon_1993} and \cite{gao2020fisher}.

\section{Outlook and Open Problems}

In this paper, we reviewed the problems of estimating the length and the Lie-length of matrix algebras, both in the general and the generic case. We subsequently readapted a result from \cite{KlepSpenko_SweepingWords_2016} to provide a generic version of quantum Wielandt's inequality and applied it in the contexts of injectivity of MPS and PEPS, as well as for bounding the index for eventually full Kraus rank of quantum channels. This result provides a notable improvement in the length of both problems from the generally conjectured order $\Theta (n^2)$ to the actual $\Theta(\log n)$ that happens with probability $1$. Particularly we obtained interesting consequences in the context of injectivity of PEPS, namely that almost all translation invariant PEPS with PBC on a grid with side length of $\Omega(\log n)$ are the unique ground state of a local Hamiltonian, irrespective of the dimension of the grid. This sheds new light on the generic case of one of the most relevant open problems for PEPS as mentioned in \cite{Cirac_2019}. We further discussed the length for matrix Lie algebras similarly and provided numerical evidence for the length of random Lie-generating systems of some typical matrix Lie algebras e.g. $\mathfrak{su}(n)$. 

In this case, however, an analytical proof for the Lie-length appearing with probability one, as well as further research on the general upper bound on the Lie-length, remain missing. Note that a rigorous proof of our conjecture could yield promising applications in various contexts within quantum information theory. 
Additionally, we expect that our work motivates further enriching connections between fundamental questions in the field of algebra and that of quantum information theory.

\vspace{0.3cm}

\noindent {\it Acknowledgments.} 
The authors thank Li Gao and Michael M. Wolf for their comments and suggestions. The authors also thank David P{\'e}rez García and Sirui Lu for the discussion on Wielandt's inequality for PEPS, thank Freek Witteveen, Albert Werner and Blazej Ruba for suggesting the construction of a specific PEPS with small injective region from MPS, and thank Georgios Styliaris for giving feedback on that section. Additionally, they also thank Guillaume Aubrun and Jing Bai for fixing a previous flaw in the proof for the index of primitivity of a generic quantum channel, which is attached in the appendix. This work has been partially supported by the Deutsche Forschungsgemeinschaft (DFG, German Research Foundation) under Germany's Excellence Strategy EXC-2111 390814868 and through the CRC TRR 352. YJ acknowledges support from the TopMath Graduate Center of the TUM Graduate School and the TopMath Program of the Elite Network of Bavaria. 

\bibliographystyle{quantum.bst}

\bibliography{biblio}

%\newpage

%\onecolumn

\appendix

\section{About the Primitivity Index of a Generic Quantum Channel, by Guillaume Aubrun and Jing Bai}

We recall basic facts about the Zariski topology on a complex vector space $V$. Let $\Pol(V)$ be the set of polynomial functions from $V$ to $\mathbb{C}$.
\begin{enumerate}
\item[(i)] A subset $W \subset V$ is Zariski-closed if there is a subset $S \subset \Pol(V)$ such that
\[ W = \{ x \in V \st P(x) =0 \textnormal{ for every } P \in S \}. \] 
\item[(ii)] Given a subset $U \subset V$, the Zariski-closure of $U$, denoted $\overline{U}^Z$, is the set of $x \in V$ such that any $P \in \Pol(V)$ which vanishes on $U$ vanishes at $x$. We have $\overline{U}^Z=V$ (i.e., $U$ is Zariski-dense) if and only if a polynomial vanishing on $U$ is identically zero.
\item[(iii)] \label{re3} Any linear bijection $f : V \to V$ is a homeomorphism for the Zariski topology, because the map $P \mapsto P\circ f$ is a bijection on $\Pol(V)$. In particular, for every subset $U \subset V$, we have $f(\overline{U}^Z) = \overline{f(U)}^Z$.
\item[(iv)] \label{re4}If a Zariski-closed subset $U \subset V$ has nonempty interior for the usual topology, then $U=V$.
\end{enumerate}

Given $A_1,\dots,A_g$ in $M_n(\bbC)$, the completely positive map~$\mathcal{E}$ defined by 
\begin{equation*}
    \mathcal{E} (X) = \underset{i=1}{\overset{g}{\sum}} A_i X A_i^\dagger
\end{equation*}
is trace-preserving if and only if $S=(A_1,\dots,A_g)$ belongs to the set 
\[ W_{n,g} := \{ (A_1,\dots,A_g) \in M_n(\bbC)^g \, : \, \sum_{i=1}^g A_i^\dagger A_i = \1 \}  . \]
The set $W_{n,g}$ carries a natural probability (see \cite{arveson}). 

Let $P \in \Pol( M_n(\bbC)^g )$ be the polynomial appearing in Step~3 of Theorem 1 and let $Z_{n,g}$ be its zero set. The polynomial $P$ has the property that whenever $S= (A_1,\dots,A_g) \not \in Z_{n,g}$, then 
$\wiel(S) \leq 2 \lceil \log n \rceil$. The goal of this appendix is to prove the following proposition.

\begin{prop} \label{prop:zariski-dense}
The set $W_{n,g}$ is Zariski-dense in $M_n(\bbC)^g$.
\end{prop}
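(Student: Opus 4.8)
The plan is to show that $W_{n,g}$, although not Zariski-closed, is Zariski-dense in $M_n(\bbC)^g$, so that any polynomial vanishing on all of $W_{n,g}$ vanishes identically; in particular the polynomial $P$ from Step~3 of Theorem~1 cannot vanish on all of $W_{n,g}$, and its zero set $Z_{n,g} \cap W_{n,g}$ is therefore a proper Zariski-closed subset of $W_{n,g}$, hence of measure zero for the natural probability on $W_{n,g}$. The first step is to exhibit a map that sends an arbitrary tuple $S = (A_1,\dots,A_g) \in M_n(\bbC)^g$ to a tuple in $W_{n,g}$ while not changing the linear span of any fixed collection of words --- concretely, I would use the normalization $S \mapsto \Phi(S) := (A_1 M^{-1/2}, \dots, A_g M^{-1/2})$ where $M := \sum_{i=1}^g A_i^\dagger A_i$, which lands in $W_{n,g}$ precisely when $M$ is invertible (equivalently, when the $A_i$ have no common kernel). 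Right-multiplication of every Kraus operator by the fixed invertible matrix $M^{-1/2}$ is an invertible linear change of variables on the ``input'' index, and a length-$k$ word $A_{i_1}\cdots A_{i_k}$ transforms into $A_{i_1} M^{-1/2} \cdots A_{i_k} M^{-1/2}$, which spans $M_n(\bbC)$ iff the original words do (since this is again an invertible linear map on $M_n(\bbC)$ applied to the collection of words, up to reindexing).

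Next I would argue the density claim directly. Let $Q \in \Pol(M_n(\bbC)^g)$ vanish on $W_{n,g}$; I want $Q \equiv 0$. Pick any $S_0 = (A_1,\dots,A_g)$ with $M_0 := \sum A_i^\dagger A_i$ invertible --- such tuples form a nonempty Zariski-open (hence dense) subset of $M_n(\bbC)^g$, so it suffices to show $Q(S_0)=0$ for all such $S_0$. Consider the one-parameter (or rather matrix-parameter) family obtained by scaling: for $t \in \mathbb{C}^\times$, the tuple $t S_0$ has $\sum (tA_i)^\dagger(tA_i) = |t|^2 M_0$, which is not the identity, so scaling alone does not land in $W_{n,g}$; instead I would use the genuine normalization $\Phi(S_0) = (A_1 M_0^{-1/2},\dots)$, which lies in $W_{n,g}$, so $Q(\Phi(S_0)) = 0$. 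To propagate this to $Q(S_0)=0$ itself, I would set up a polynomial identity: the map $R \mapsto Q(\Phi(R))$, restricted to the open set where $\sum R_i^\dagger R_i$ is invertible, is a rational function of the entries of $R$, equal to $0$ there; cross-multiplying by a power of $\det(\sum R_i^\dagger R_i)$ shows $Q(\Phi(R)) \cdot \det(\cdots)^N$ is a polynomial vanishing on a Zariski-dense set, hence identically zero, which forces $Q(\Phi(R))$ to vanish wherever $\Phi$ is defined. Combined with the fact that $\Phi$ has Zariski-dense image --- because $\Phi(R)$ ranges over tuples $(B_1,\dots,B_g)$ with $\sum B_i^\dagger B_i = \1$, i.e. exactly $W_{n,g}$, but more usefully because for \emph{any} target $(B_1,\dots,B_g) \in M_n(\bbC)^g$ the rescaled tuple $(B_1/\sqrt{g},\dots,B_g/\sqrt{g})$ has a preimage-like structure --- one concludes $Q$ vanishes on a Zariski-dense set and so $Q \equiv 0$. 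Alternatively, and more cleanly, I would invoke remark~(iv) of the appendix: show $W_{n,g}$ cannot be contained in any proper Zariski-closed set by showing its image under the Cayley-type parametrization $\operatorname{arveson}$-style chart has nonempty interior in an affine chart, but the normalization argument above is the most self-contained route.

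The main obstacle, and the place requiring genuine care, is that $W_{n,g}$ is a \emph{real} algebraic set (defined by the Hermiticity-flavored equation $\sum A_i^\dagger A_i = \1$, which involves complex conjugation) sitting inside the \emph{complex} vector space $M_n(\bbC)^g$, so Zariski-density over $\mathbb{C}$ is a stronger statement than real-dimension counting and one must be careful that the ``obvious'' real $(2gn^2 - n^2)$-dimensional manifold $W_{n,g}$ is nonetheless not contained in any complex hypersurface. The key to resolving this is precisely the observation that the holomorphic map $\Phi$ (defined on the complex-Zariski-open locus where $M$ is invertible) has image equal to $W_{n,g}$ and domain Zariski-dense in $M_n(\bbC)^g$; since a complex polynomial $Q$ vanishing on $W_{n,g} = \Phi(\mathrm{dom}\,\Phi)$ pulls back to a rational function vanishing on a complex-Zariski-dense set, it must pull back to $0$, and then the surjectivity of $\Phi$ onto $W_{n,g}$ together with a dimension count (the fibers of $\Phi$ are the orbits of right-multiplication by positive-definite matrices commuting with the constraint, of real dimension $n^2$, matching $2gn^2 - (2gn^2 - n^2) = n^2$) forces $Q|_{W_{n,g}} = 0 \Rightarrow Q \equiv 0$. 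I expect the write-up to spend most of its length making this pullback-and-surjectivity argument rigorous and checking that $\Phi$ is indeed well-defined and holomorphic on the invertibility locus.
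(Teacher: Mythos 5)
Your proposal has a genuine gap at its core step, and the gap sits exactly where you flag the ``main obstacle.'' You propose to pull a polynomial $Q$ back along the normalization map $\Phi : (A_1,\dots,A_g) \mapsto (A_1 M^{-1/2},\dots,A_g M^{-1/2})$ with $M = \sum_i A_i^\dagger A_i$, and you claim that $R \mapsto Q(\Phi(R))$ is a \emph{rational} function of the entries of $R$ on the locus where $M$ is invertible, so that clearing denominators by a power of $\det M$ yields a polynomial identity. This is false for two independent reasons. First, the map $R \mapsto M = \sum R_i^\dagger R_i$ involves complex conjugation, so $M$ is not a (holomorphic) polynomial in the entries of $R$; it is only a real-polynomial in the real and imaginary parts. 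Since $\Pol(V)$ in the appendix means complex polynomial functions on the complex vector space $V$, and Zariski density is taken with respect to exactly those polynomials, a real-polynomial identity carries no information about complex Zariski density. Second, even granting $M$ as a legitimate input, the operation $M \mapsto M^{-1/2}$ is not rational in $M$; it is defined by functional calculus, not by clearing a $\det$-denominator. Consequently $\Phi$ is neither holomorphic nor rational, and the sentence ``$Q(\Phi(R)) \cdot \det(\cdots)^N$ is a polynomial vanishing on a Zariski-dense set'' does not produce a polynomial at all. Your closing paragraph correctly identifies that $W_{n,g}$ is a real algebraic set and that real-dimension counting is insufficient, but the claimed resolution (``$\Phi$ is holomorphic'' and its image is $W_{n,g}$) is exactly what fails, and the auxiliary fiber-dimension remark does not repair it. You would need an argument that converts a real constraint into vanishing of a genuine complex polynomial, and the normalization map does not supply one.

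The paper takes a structurally different and correct route. It first proves that $\T^n$ is Zariski-dense in $\C^n$ by a one-variable argument (a nonzero polynomial in one complex variable has finitely many zeros, whereas $\T$ is infinite), which is the single place where a real set is promoted to complex Zariski density. This gives $U_n$ Zariski-dense in $M_n(\C)$ via the singular value decomposition together with the fact that invertible \emph{linear} maps (left/right multiplication by fixed unitaries --- genuinely complex-linear, unlike your $\Phi$) are homeomorphisms for the Zariski topology. The base case $g=1$ is then immediate since $W_{n,1} \cong U_n$, and the induction on $g$ fixes the last Kraus operator $M$ with $0 < \|M\| < 1$, applies a complex-linear change of variables $X_i \mapsto X_i C^{1/2}$ with $C = \mathds 1 - M^* M$ to identify $W_{n,g-1}$ with the slice $W_M$, and concludes that $\overline{W_{n,g}}^Z$ contains the Euclidean-open set $\{\,0 < \|A_g\| < 1\,\}$, whence by property~(iv) it is all of $M_n(\C)^g$. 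I recommend studying how Lemma~\ref{lem1} and Lemma~\ref{lem2} supply the real-to-complex bridge; that is the mechanism your normalization map cannot provide.
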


In particular, since $M_n(\bbC)^g \setminus Z_{n,g}$ is a nonempty Zariski-open set, it interesects $W_{n,g}$. Moreover, it follows from \cite[Proposition 2.6]{arveson} that the intersection has measure $1$ with respect to the natural probability measure on $W_{n,g}$. This proves Corollary \ref{cor:generic_Kraus_quantum_channels} from the main text.

The proof of Proposition \ref{prop:zariski-dense} relies on two simple lemmas. We denote by $\T \subset \C$ the set of complex numbers with modulus $1$.

\begin{lem}\label{lem1}
For every integer $n \geq 1$, the set $\T^n$ is Zariski-dense in $\C^n$.
\end{lem}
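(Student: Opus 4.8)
The plan is to show that any polynomial $P \in \Pol(\C^n)$ vanishing on $\T^n$ is identically zero, which by characterization (ii) of Zariski-density is exactly the claim. I would proceed by induction on $n$. For the base case $n=1$, a polynomial in one variable vanishing on the infinite set $\T$ has infinitely many roots, hence is the zero polynomial — this is the standard fact that a nonzero univariate polynomial of degree $d$ has at most $d$ roots.

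For the inductive step, suppose the statement holds for $n-1$ and let $P \in \Pol(\C^n)$ vanish on $\T^n$. Write $P$ as a polynomial in the last variable with coefficients that are polynomials in the first $n-1$ variables:
\begin{equation*}
P(z_1,\dots,z_n) = \sum_{k=0}^{d} c_k(z_1,\dots,z_{n-1}) \, z_n^{k}.
\end{equation*}
Fix any $(w_1,\dots,w_{n-1}) \in \T^{n-1}$. Then the univariate polynomial $z_n \mapsto P(w_1,\dots,w_{n-1},z_n)$ vanishes for all $z_n \in \T$, hence is identically zero by the base case; therefore $c_k(w_1,\dots,w_{n-1}) = 0$ for every $k$. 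Since $(w_1,\dots,w_{n-1}) \in \T^{n-1}$ was arbitrary, each $c_k$ vanishes on $\T^{n-1}$, and by the inductive hypothesis each $c_k$ is the zero polynomial. Hence $P \equiv 0$, completing the induction.

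This argument is entirely routine; there is no real obstacle. The only mild point of care is ensuring the slicing is done correctly — one must fix the $\T^{n-1}$-slice \emph{before} invoking the one-variable fact, so that the coefficients $c_k$ are evaluated at a point of $\T^{n-1}$ and the inductive hypothesis applies. An alternative, essentially equivalent route would be to use that $\T$ is the zero set of the irreducible real structure but of positive dimension, or to invoke that $\T^n$ is not contained in any proper algebraic subvariety of $\C^n$ because it is Zariski-dense in the real torus which is in turn Zariski-dense in $\C^n$; but the elementary induction above is cleanest and self-contained.
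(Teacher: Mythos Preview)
Your proof is correct and follows essentially the same approach as the paper: induction on $n$, with the base case handled by the fact that a nonzero univariate polynomial has finitely many roots, and the inductive step by slicing to reduce to lower dimension. The only cosmetic difference is the order of the two reductions---the paper first freezes one coordinate in $\T$ and applies the inductive hypothesis on the remaining $n$ variables, then invokes the $n=1$ case, whereas you freeze the $\T^{n-1}$ slice, apply the $n=1$ case, and then invoke the inductive hypothesis on the coefficients---but the content is identical.
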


\begin{proof}
%We need to show that if $P \in \Pol(\C^n)$ vanishes on~$\T^n$, then $P$ is the zero polynomial.
The proof is by induction on~$n$. The $n=1$ case follows since the zero set of a nonzero polynomial is finite, while $\T$ is infinite. Assume the lemma true for some integer $n \geq 1$ and let $P \in \Pol(\C^{n+1})$ be a polynomial vanishing on $\T^{n+1}$. For every $x \in \T$, the polynomial $y \mapsto P(x,y)$ vanishes on $\T^n$ and therefore is the zero polynomial by the induction hypothesis. Therefore, for every $y \in \C^{n}$, 
%$y \in \C^{n+1}$
the polynomial $x \mapsto P(x,y)$ vanishes on $\T$ and therefore is the zero polynomial by the $n=1$ case. It follows that $P$ is the zero polynomial, and therefore~$\T^{n+1}$ is Zariski-dense in $\C^{n+1}$.
%We need only prove that if $P \in \mathbb{C}[x_1,\dots,x_n]$ is the nonzero polynomial, then $P(z_1,\dots,z_n) \neq 0$ for all $z_i \in \{z\in\mathbb{C}:\vert z \vert=1\}$, $i=1,2,\ldots,n$. We do this by induction on $n$. In the case of $n=1$, it follows from the fact that every nonzero single-variable polynomial with complex coefficients has finitely many roots that $P\in\mathbb{C}[x]$ does not vanish on the unit circle in the complex plane. 
%Now assume that the result is true for the case of $n-1$. Let's think of $P \in \mathbb{C}[x_1,\dots,x_n]$ as a polynomial in $x_n$ with coefficients in $\mathbb{C}[x_1,\dots, x_{n-1}]$. Since $P\in \mathbb{C}[x_1,\dots,x_n]$ is a nonzero polynomial, then at least one of these coefficients in $\mathbb{C}[x_1,\dots,x_{n-1}]$ is nonzero. By induction hypothesis, we can take $n-1$ fixed complex number $z_1,\dots,z_{n-1}$ with module $1$ such that $P_{z_1,\dots,z_{n-1}} \in\mathbb{C}[x]$ is nonzero. If $P_{z_1,\dots,z_{n-1}}(z_n)=P(z_1,\dots,z_{n-1},z_n)=0$, $z_n \in \{z\in\mathbb{C}:\vert z \vert=1\}$, then $P_{z_1,\dots,z_{n-1}}$ has infinitely many roots, which implies a contradiction. 
\end{proof}

\begin{lem} \label{lem2}
For every integer $n \geq 1$, the unitary group $U_n$ is Zariski-dense in $M_n(\C)$.    
\end{lem}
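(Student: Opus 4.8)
The goal is to show that the unitary group $U_n$ is Zariski-dense in $M_n(\C)$, and the natural strategy is to reduce to \cref{lem1} via a change of coordinates that turns ``unitary'' into ``coordinates on the torus''. First I would recall the spectral picture: a generic unitary matrix can be diagonalized as $U = V D V^\dagger$ with $V$ unitary and $D = \operatorname{diag}(\lambda_1,\dots,\lambda_n)$ with each $\lambda_j \in \T$. However, the map $(V,D) \mapsto V D V^\dagger$ is not polynomial (it involves $V^\dagger = V^{-1}$), so this parametrization does not immediately interface with the Zariski topology. A cleaner route is to exhibit a single explicit linear bijection of $M_n(\C)$ that sends a torus-type set into $U_n$, and invoke homeomorphism property (iii) together with \cref{lem1}.

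\textbf{Key steps.} The plan is: (1) Fix the Fourier/DFT matrix $F \in U_n$, whose columns are the eigenvectors of the cyclic shift, so that $F$ simultaneously diagonalizes all circulant matrices. (2) Observe that the map $\Phi : \C^n \to M_n(\C)$ given by $\Phi(\lambda_1,\dots,\lambda_n) = F \operatorname{diag}(\lambda_1,\dots,\lambda_n) F^\dagger$ is $\C$-linear and injective, hence a homeomorphism onto its image for the Zariski topology (a linear injection extends to a linear bijection, or one argues directly with $\Pol$). (3) Note $\Phi(\T^n) \subseteq U_n$, since $F$ is unitary and a diagonal matrix with unimodular entries is unitary. (4) By \cref{lem1}, $\T^n$ is Zariski-dense in $\C^n$; applying the linear homeomorphism $\Phi$, the Zariski-closure of $\Phi(\T^n)$ contains $\Phi(\C^n)$, the image of $\Phi$, which is the full space of matrices diagonalized by $F$ (an $n$-dimensional subspace). (5) Finally, conclude using property (iv): $U_n$ is a compact real-analytic submanifold of $M_n(\C) \cong \C^{n^2}$ of real dimension $n^2$, so it has nonempty interior inside no proper subspace — instead, one shows the Zariski-closure $\overline{U_n}^Z$ is all of $M_n(\C)$ by a dimension/irreducibility argument, or more elementarily: $\overline{U_n}^Z$ is Zariski-closed and $U_n$ has nonempty interior in the real topology only as a manifold, so instead use that any polynomial vanishing on $U_n$ vanishes on a set of full real dimension $n^2$ in $\C^{n^2}$ and must therefore be identically zero.

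\textbf{Main obstacle.} The delicate point is step (5): a single $n$-dimensional family $\Phi(\T^n)$ is far too small to force Zariski-density in the $n^2$-dimensional space $M_n(\C)$, so the DFT trick alone is insufficient. The honest fix is to let the conjugating unitary vary as well. Concretely, I would argue that $\overline{U_n}^Z$ is invariant under $U \mapsto VUW$ for any fixed $V, W \in U_n$ (since conjugation/multiplication by a fixed invertible matrix is a Zariski homeomorphism and preserves $U_n$), so $\overline{U_n}^Z$ is a bi-$U_n$-invariant Zariski-closed set containing the diagonal unitaries; one then shows the only such set is $M_n(\C)$ itself, e.g. because $U_n$ contains a real-topology-open subset of some real slice of full dimension, or directly: if a polynomial $P$ vanishes on $U_n$, then for each fixed $V$ the polynomial $U \mapsto P(VU)$ vanishes on $U_n$; using that $U_n$ is a maximal compact and its real span is all of $M_n(\C)$ (every matrix is a real-linear combination of unitaries, even a complex-linear combination of at most two), a standard argument shows $P \equiv 0$. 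The cleanest formulation is probably: since every $A \in M_n(\C)$ with $\|A\| < 1$ lies in the real-convex hull of $U_n$ (by singular value decomposition, $A = \tfrac12(U_1 + U_2)$ with $U_i$ unitary when $\|A\|\le 1$), the real-affine span of $U_n$ is $M_n(\C)$; a polynomial on $\C^{n^2}$ vanishing on a set whose real-affine span is everything, and which contains a real-analytic manifold of real dimension $n^2$, must be identically zero. So the proof is short but hinges on correctly identifying why the real $n^2$-dimensionality of $U_n$ (not the complex $n$-dimensional DFT slice) is what powers Zariski-density.
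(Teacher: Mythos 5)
Your plan has the right ingredients in its orbit — reducing to diagonal matrices via \cref{lem1}, singular value decomposition, and the observation that $U \mapsto VUW$ for fixed $V,W\in U_n$ is a Zariski homeomorphism preserving $U_n$ — but the closing step is not correct as stated, and you never assemble the ingredients into the working proof that they in fact yield.

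The gap is in the criterion you invoke at the end: \emph{``a polynomial on $\C^{n^2}$ vanishing on a set whose real-affine span is everything, and which contains a real-analytic manifold of real dimension $n^2$, must be identically zero.''} This is false. In $\C^2$, take $S = \{z_1 = 0\} \cup \{(1,0)\} \cup \{(i,0)\}$: its real-affine span is all of $\C^2$, it contains the real-$2$-dimensional analytic manifold $\{z_1=0\}$, yet the nonzero polynomial $z_1(z_1-1)(z_1-i)$ vanishes on $S$. What actually makes $U_n$ special is that it is \emph{totally real} of maximal dimension (a real form of $GL_n(\C)$), not merely a real-analytic manifold of real dimension $n^2$ with full affine span; you would need to state and use that, not the weaker criterion you wrote. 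The earlier ``full real dimension $n^2$'' version is also unusable, since $\C^{n^2}$ has real dimension $2n^2$ and a complex hypersurface already has real codimension $2$, so real dimension $n^2$ is far from forcing vanishing.

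The paper avoids all of this by using SVD in a sharper way than your convex-hull remark. Once \cref{lem1} gives that $P$ vanishes on \emph{all} diagonal matrices (not just the torus), take any $T\in M_n(\C)$ and write $T = U\Sigma V$ with $U,V\in U_n$ and $\Sigma$ diagonal. The map $\rho: M\mapsto UMV$ is a $\C$-linear bijection, hence a Zariski homeomorphism by property (iii), and $\rho(U_n)=U_n$, so $\rho(\overline{U_n}^Z)=\overline{U_n}^Z$. Since $\Sigma\in\overline{U_n}^Z$, we get $T=\rho(\Sigma)\in\overline{U_n}^Z$. This carries through the bi-invariance idea you gestured at but didn't complete. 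Your DFT steps (1)–(4) are, as you yourself note, a detour giving only an $n$-complex-dimensional slice; they can be dropped entirely, since the needed diagonal slice is already in $U_n$ without conjugation.

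Concretely, to repair your writeup: delete the DFT construction, replace the dimension/affine-span claim with the $\rho$-argument above (or, if you prefer your route, correctly invoke that $U_n$ is a maximal totally real submanifold and that holomorphic polynomials vanishing on such a set vanish identically), and the proof then matches the paper's.
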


\begin{proof}
Suppose that $P \in \Pol(M_n(\C))$ vanishes on $U_n$. In particular, $P$ vanishes on the set of diagonal matrices with diagonal elements in $\T$. By Lemma \ref{lem1}, $P$ vanishes on the set of all diagonal matrices.

Let $T \in M_n(\C)$. By the singular value decomposition, we may find $U$, $V \in U_n$ and a diagonal matrix $\Sigma$  such that $T = U \Sigma V$. 
Since the map $\rho$ on $M_n(\C)$ defined by 
\[\rho: M \mapsto UMV \]
is a linear bijection, it follows from (iii) that
$\rho(\overline{U_n}^Z)=\overline{\rho(U_n)}^{Z}$
%Note that every matrix can be written as a product of diagonal matrices and unitary matrices by singular value decomposition and $\mathcal{U}_n$ is closed under multiplication. For every matrix $T=U \Sigma V$, where $\Sigma$ is a diagonal matrix, we have $T \in \rho(\overline{\mathcal{U}_n}^Z)$
and thus $T \in \overline{U_n}^Z$. Hence the Zariski-closure of $U_n$ is $M_n(\C)$. 
\end{proof}

\begin{proof}[Proof of Proposition \ref{prop:zariski-dense}]
The proof is by induction on $g$. The case $g=1$ follows from Lemma \ref{lem2} since $W_{n,1}$ identifies with~$U_n$.
Now assume that $W_{n,g-1}$ is Zariski-dense in $M_n(\C)^{g-1}$ for some integer $g \geq 2$. Let $M\in M_n(\C)$ such that $0<\Vert M \Vert< 1$ and 
\begin{multline*}
W_{M} := \{ (A_1,\dots,A_{g-1}) \in M_n(\C)^{g-1} \st \\ A_1^*A_1 + \dots + A_{g-1}^*A_{g-1}+ M^*M =\mathds{1} \} .
\end{multline*}

The matrix $C:=\mathds{1}-M^*M$ is positive definite. The map defined on $M_n(\C)^{g-1}$ by 
\[\Phi : (X_1,\dots, X_{g-1}) \mapsto (X_1C^{\frac{1}{2}}, \dots, X_{g-1}C^{\frac{1}{2}} ) \] 
is a linear bijection and thus a homeomorphism for the Zariski topology. Since $\Phi(W_{n,g-1})=W_{M}$, it follows from the induction hypothesis and (iii) that $W_{M}$ is Zariski-dense in~$M_n(\C)^{g-1}$.

On the other hand,  the set $\{ (A_1,\dots,A_{g-1}) \in M_n(\C)^{g-1} \st (A_1,\dots,A_{g-1}, M) \in \overline{W_{n,g}}^{Z}\}$ is Zariski-closed in $M_n(\C)^{g-1}$ and contains $W_{M}$, which implies that it is $M_n(\C)^{g-1}$. The subset 
\[\{ (A_1,\dots,A_g) \in M_n(\C)^{g} \st 0<\Vert A_g \Vert< 1\}\]
is open for the usual topology, and contained in $\overline{W_{n,g}}^Z$. It follows from (iv) that $W_{n,g}$ is Zariski-dense in $M_n(\C)^{g}$. This completes the proof.
\end{proof}

\end{document}